\newcommand{\Px}{\mathbb{ P} }
\newcommand{\Qx}{ \mathbb{Q} }
\newcommand{\Ex}{ \mathbb{E} }
\newcommand{\gt}{\mathcal{G}}
\numberwithin{equation}{section}
\def\esssup_#1{\underset{#1}{\mathrm{ess\,sup\, }}}
\def\essinf_#1{\underset{#1}{\mathrm{ess\,inf\, }}}
\def\argmax_#1{\underset{#1}{\mathrm{arg\,max\, }}}
\def\argmin_#1{\underset{#1}{\mathrm{arg\,min\, }}}
\newcommand{\Gx}{\mathbb{G}}
\newcommand{\Fx}{\mathbb{F} }
\newcommand{\cO}{\mathcal{O}}
\newcommand{\F}{\mathcal{F}}
\newcommand{\R}{\mathbb{R}}
\newtheorem{theorem}{Theorem}[section]
\newtheorem{definition}{Definition}[section]
\newtheorem{remark}[theorem]{Remark}
\newtheorem{lemma}[theorem]{Lemma}
\definecolor{linkcolor}{rgb}{0.1,0,0.7}
\definecolor{urlcolor}{rgb}{1,0,0}
\title{Mean Field Game of Optimal Relative Investment with\\ Jump Risk}
\author{Lijun Bo \thanks{Email: lijunbo@xidian.edu.cn, School of Mathematics and Statistics, Xidian University, Xi'an, 710126, China, and School of Mathematical Sciences, University of Science and Technology of China, Hefei, 230026, China.}
\and
Shihua Wang \thanks{Email: shihuawang@ustc.edu.cn, School of Mathematical Sciences, University of Science and Technology of China, Hefei, Anhui Province, 230026, China.}
\and
Xiang Yu \thanks{Email: xiang.yu@polyu.edu.hk, Department of Applied Mathematics, The Hong Kong Polytechnic University, Hung Hom, Kowloon, Hong Kong.}
}
\date{\vspace{-5ex}}
\begin{document}
\maketitle
\begin{abstract}
This paper studies the $n$-player game and the mean field game under the CRRA relative performance on terminal wealth, in which the interaction occurs by peer competition. In the model with $n$ agents, the price dynamics of underlying risky assets depend on a common noise and contagious jump risk modelled by a multi-dimensional nonlinear Hawkes process. With a continuum of agents, we formulate the MFG problem and characterize a deterministic mean field equilibrium in an analytical form under some conditions, allowing us to investigate some impacts of model parameters in the limiting model and discuss some financial implications. Moreover, based on the mean field equilibrium, we construct an approximate Nash equilibrium for the $n$-player game when $n$ is sufficiently large. The explicit order of the approximation error is also derived.

\vspace{0.4 cm}

\noindent{\textbf{Mathematics Subject Classification (2010)}: 91A15, 91G80, 91G40, 60G55}
\vspace{0.2 cm}

\noindent{\textbf{Keywords}:} Relative performance, contagious jump risk, mean field game with jumps, mean field equilibrium, approximate Nash equilibrium
\end{abstract}

\section{Introduction}\label{sec:intr}

\noindent{\it The Model Setup.} In this paper, we consider a financial market model with $n$ agents. Each agent $i$ invests in a common riskless bond and one individual stock $i$. The common time horizon for all agents is denoted by $T>0$. For $i = 1, \ldots, n$, the price process of the $i$th stock follows the following SDE:
\begin{equation}\label{eq:default-stock}
  \frac{d S_t^i}{S_{t-}^i} = (r+b_i)dt +\sigma_idW_t^i + \sigma_i^0 dW_t^0 -  dM_t^{f,i},\quad t\in[0,T],
\end{equation}
with the given parameters $b_i>0$, $\sigma_i,\sigma_i^0>0$. Here, $r\geq 0$ represents the riskless interest rate, and $(W_t^0,W_t^1,\ldots,W_t^n)^{\top}_{t\in[0,T]}$ is an $n+1$-dimensional Brownian motion under the filtered probability space $(\Omega,\F,\Fx,\Px)$ with the reference filtration $\Fx=(\F_t)_{t\in[0,T]}$ satisfying the usual conditions. The Brownian motion $W^0=(W_t^0)_{t\in[0,T]}$ appears in all price dynamics, which represents a common noise in the financial market. The Brownian motion $W^i=(W_t^i)_{t\in[0,T]}$, specified to each individual risky asset, stands for the idiosyncratic noise. In addition, each stock is subject to the downward jump risk and the jump contagion among all stocks is allowed. In particular, we denote $ \bm{N} := (N_t^1,\ldots,N_t^n)^{\top}_{t\in[0,T]}$ as an $n$-dimensional mutually exciting point process modelled by a nonlinear Hawkes process with a (bounded, Lipschtiz and differentiable) jump rate function $f: \mathbb{R}_+\rightarrow \mathbb{R}_{+}$.\footnote{For example, it is referred as the jump function in \cite{Chevallier2017} in the context of generalized Hawkes processes; and it is called the spiking rate function by \cite{Locherbach18} in the context of interacting neurons.} The intensity process is defined by $\bm{\Lambda}^{f}:= (\Lambda_t^{f,1},\ldots,\Lambda_t^{f,n})_{t\in[0,T]}$, and the compensated process of $\bm{N}$, defined by $\bm{M}_t^f := \bm{N}_t-\int_0^t\bm{\Lambda}_s^{f} ds$, $t\in[0,T]$, is an $n$-dimensional $(\Px,\Gx)$-martingale, where $\bm{M}_t^f=(M_t^{f,1},\ldots,M_t^{f,n})^{\top}$. Namely, for each $i=1,\ldots,n$, $M_t^{f,i}:=N_t^{i}-\int_0^t\Lambda_s^{f,i}ds$ for $t\in[0,T]$ is a scalar $(\Px,\Gx)$-martingale.

The global market filtration $\Gx=(\gt_t)_{t\in[0,T]}$ is defined by $\gt_t:=\F_t\vee\sigma({\bm N}_s;~s\leq t)$ as the right-continuous augmentation by null sets (see \cite{GTM-113}, Definition 7.2 in Chapter 2). By \cite{bocapponisicon2018}, the Brownian motion under $\Fx$ is also a Brownian motion under $\Gx$, i.e., the immersion property holds. It is assumed in the present paper that the vector intensity process $\bm{\Lambda}^{f}=(\Lambda_t^{f,1},\ldots,\Lambda_t^{f,n})_{t\in[0,T]}$ is governed by
\begin{equation}\label{eq:lambda-i}
  \Lambda_t^{f,i} = f(\lambda_t^i), \ \ \   d\lambda_t^i = \alpha_i(\lambda_{\infty}^i - \lambda_t^i)dt + \frac{\beta_i }{n}\sum_{j=1}^{n}\varsigma_jdN^j_t,\quad \lambda_0^i>0,\quad i=1,\ldots,n,
\end{equation}
where $\lambda_{\infty}^i>0$ is the mean-reverting level of the underlying intensity factor of stock $i$ with speed $\alpha_i>0$, $\beta_i>0$ describes the scaled jump size effect to the intensity factor of stock $i$, and $\varsigma_{j}>0$ measures the contagion effect on the intensity factor of stock $i$ by the jump of stock $j$. The contagious risk is then captured because the downward jump of one stock increases the jump intensity of all other stocks, leading to a higher risk of default clustering (see \cite{bocapponichen2019}).

Let the $\Gx$-predictable process $\pi_t^i$ be the proportion of wealth that the agent $i$ allocates in the stock $S^i$ at time $t$. The self-financing wealth process of agent $i$ under the control $\pi^i=(\pi_t^i)_{t\in[0,T]}$ is given by
\begin{align}\label{eq:wealth-dynamics-i}
  X_t^i= x_i+\int_0^t \left(rX_s^i + b_i\pi_s^iX_s^i\right)ds + \int_0^t\pi_s^iX_s^i\sigma_idW_s^i  + \int_0^t\pi_s^iX_s^i\sigma_i^0dW_s^0 - \int_0^t\pi_s^iX_{s-}^idM_s^{f,i},
\end{align}
where $x_i>0$ denotes the initial wealth of agent $i$. The portfolio vector is denoted by ${\bm\pi}:= (\pi_t^1,\ldots, \pi_t^n)_{t\in[0,T]}$. Let us denote ${\cal A}^i$ the set of admissible controls for the agent $i$. We say a control process $\pi^i=(\pi_t^i)_{t\in[0,T]}\in {\cal A}^i$ is admissible if $\pi^i$ is $\Gx$-predictable and satisfies $D_0\leq\pi_t^i\leq 1-\epsilon_0$ for some constant $D_0 \in \mathbb{R}$ and positive constant $\epsilon_0 \ll 1$ (both $D_0$ and $\epsilon_0$ depend on the control) such that the non-bankruptcy condition $X_t^i> 0$ holds a.s. for $t\in[0,T]$. Note that the pure jump martingale $M^{f,i}=(M_t^{f,i})_{t\in[0,T]}$ has the jump size of one. In view of \eqref{eq:wealth-dynamics-i}, the wealth process $X^i=(X_t^i)_{t\in[0,T]}$ must be positive a.s. if the initial wealth level $x_i>0$ because the admissible control is constrained to satisfy $\pi_t^i<1$ for all $t\in[0,T]$.

Each agent in the market aims to maximize the expected utility with a competition component represented by the geometric average of the terminal wealth $X_T=(X_T^1,\ldots,X_T^n)$ from all peers. The objective function of the $i$th agent is given by
\begin{align}\label{eq:objective}
  J_i(\pi^1,\ldots \pi^n) := \Ex\left[ U_i\left(X_T^i, \bar{X}_T\right)\right] ,\quad \bar{X}_T := \left(\prod_{i=1}^n X_T^i\right)^{\frac{1}{n}},
\end{align}
in which the utility function $U_i : \R_{+}^2\rightarrow \R$ of the $i$th agent is of the CRRA type that
\begin{equation}\label{eq:utility-U-i}
  U_i(x,y):= U(xy^{-\theta_i};\gamma_i),\quad \forall (x,y)\in\R_+^2,
\end{equation}
where $U(x;\gamma)$ is a power utility that $U(x;\gamma):= \frac{1}{\gamma}x^{\gamma}$. It is assumed in the present paper that all risk aversion parameters $\gamma_i \in(0,1)$ and all competition weight parameters $\theta_i\in[0,1]$.
\begin{remark}\label{rem:thetai}
The above relative investment preference is motivated by the fact that peer competition sometimes has notable impacts on fund manager's decision making. The parameter $\theta_i$ of the agent $i$ represents how competitive the agent is towards the relative performance with her peers. The small (resp. large) value of $\theta_i$ implies a low (resp. high) relative performance concern. In the extreme case, the utility with $\theta_i=0$ reduces to the standard utility on her own absolute wealth; while the utility with $\theta_i=1$ indicates that the agent is extremely sensitive to her relative performance with other peers and no absolute performance is concerned.
\end{remark}

Due to the presence of $ \bar{X}_T$ inside the utility, the optimal decision of the agent $i$ is coupled with optimal controls by other peers, which makes the game problem challenging especially when there are jump risk contagion and common noise. In the present paper, we aim to provide an approximate Nash equilibrium to the $n$-player game problem, which is defined in the following sense.
\begin{definition}[Approximate Nash Equilibrium (ANE)]\label{def:nashequilibrium}
Let the objective functional $J_i$ be defined in \eqref{eq:objective}. An admissible strategy $\bm{\pi}^*=(\pi^{*,1},\ldots, \pi^{*,n})\in{\cal A}:=\prod_{i=1}^n{\cal A}^i$ is called a Nash equilibrium if, for all $\pi^i \in{\cal A}^i$ with $i = 1,\ldots,n$, it holds that
\begin{equation}\label{eq:def-Nash}
  J_i(\bm{\pi}^{\ast}) \geq  J_i(\pi^i,\pi^{\ast,-i}),\qquad   \text{with} \ \ \pi^{\ast,-i}:=(\pi^{*,1},\ldots,\pi^{*,i-1},\pi^{*,i+1},\ldots, \pi^{*,n}).
\end{equation}
If there exists a constant $\varepsilon_n>0$ satisfying $\lim_{n\to\infty}\varepsilon_n=0$, and it holds that
\begin{equation}\label{eq:def-epsilon-Nash}
   \sup_{\pi^i\in{\cal A}^i} J_i(\pi^i,\pi^{\ast,-i}) \leq J_i(\bm{\pi}^{\ast})+ \varepsilon_n,
\end{equation}
we call ${\bm\pi}^{\ast}$ an $\varepsilon_n$-approximate Nash equilibrium.
\end{definition}

To this end, we will first take advantage of the simplified structure of the mean field game (MFG) when $n\rightarrow\infty$, in which the impact of an individual agent on the aggregated wealth of the whole population becomes negligible. That is, comparing with $\bar{X}_T$ in \eqref{eq:objective} for $n$ agents, we now consider $m\in{\cal D}$ (with ${\cal D}$ being the set of $\Fx^0=(\F_t^0)_{t\in[0,T]}=(\sigma(W_s^0;~s\leq t))_{t\in[0,T]}$-adapted processes that are right continuous with left limits) as the geometric average wealth of the continuum of agents, which is the competition component in the utility of a representative agent. The MFG problem is to find a pair $(\pi^*, m^*)$ that solves the utility maximization problem for a representative agent similar to \eqref{eq:objective} that
\begin{align*}
\sup_{\pi\in {\cal A}_{\rm MF} } \Ex\left[U(X_T, m^*_T)\right]=\Ex\left[U(X^*_T, m^*_T)\right],
\end{align*}
where $U(\cdot)$ (resp. $X=(X_t)_{t\in[0,T]}$) is the utility function (resp. the wealth process under an arbitrary control $\pi\in{\cal A}_{\rm MF}$) of the representative agent, and $(X_t^*)_{t\in[0,T]}$ is the wealth process under the control $\pi^*$. In addition, the geometric average wealth of the population coincides with the geometric mean of the wealth process of the representative agent that $m^*_t =\exp\{\Ex[\log(X_t^{\ast })|\F_t^{0}]\}$. The precise formulation of the MFG problem and the definition of mean field equilibrium $\pi^*$ are given in Section \ref{sec:limiting}. Using the stochastic maximum principle, we are able to characterize one deterministic mean field equilibrium in analytical form. Based on the information and the structure of this mean field equilibrium, we can then construct the $\varepsilon_n$-Nash equilibrium for the $n$-player game as described in Definition~\ref{def:nashequilibrium}.\\

\noindent{\it Literature Review.}\quad Optimal investment under relative performance for a finite number of agents and a continuum of agents has been an important research topic in recent years. In a Black-Scholes model, \cite{espinosa2015optimal} and \cite{Bie2017} formulate and study the $n$-player game under the CARA relative performance using the coupled quadratic BSDE system when the equilibrium pricing and portfolio constraints are also incorporated. In a log-normal market model with deterministic parameters and common shock together the asset specialization to each agent, \cite{lacker2017mean} consider the $n$-player game and MFG problems under both CARA and CRRA relative performance on terminal wealth. Thanks to the simplified structure of the asset specialization, the constant equilibrium is obtained therein for both the $n$-player game and MFG problems. In the same framework, \cite{lacker2020many} generalize the problems to examine optimal relative performance on consumption under CRRA utilities. \cite{FuSZ} consider the generalization of the market model in \cite{lacker2017mean} by allowing random return and volatility parameters and solve the $n$-player game and MFG problems using the FBSDE approach when the exponential relative performance on terminal wealth is concerned. \cite{ReisPla2020} study the MFG under forward relative performance utilities of CARA type. \cite{KraftMS2020} formulate and solve some $n$-player games in a general stochastic volatility price model (Heston and Chacko-Viceira stochastic-volatility models as special cases) with unhedgeable stochastic factors. Under the exponential relative performance on terminal wealth, \cite{HuZ21} recently investigate the $n$-player game and MFG in the incomplete It\^{o}-diffusion market model and also in the case with random risk tolerance coefficients.

To the best of our knowledge, the $n$-player games and MFGs under relative performance when the underlying price dynamics exhibit jumps have not been studied before. On the other hand, the importance of considering defaultable risky assets, especially after the systemic failure caused by the global financial crisis, has attracted a lot of attention; see, for example,  \cite{belanger2004general}, \cite{yufan2007} and references therein. To better understand the impact of systemic default risk on dynamic portfolio management, abundant recent works have considered optimal investment problems when jumps of risky assets are contagious. See, for example, \cite{bocapponisicon2018}, \cite{bocapponichen2019}, \cite{boliaoyu2019}, \cite{shen2020mean}, \cite{boliaoyu2021} among others that are based on the interacting intensity framework, allowing the credit default in one risky asset to increase the default intensities of other surviving names. See also \cite{Yu21} in the context of optimal dividend control for an insurance group. The present paper aims to enrich the study of the $n$-player games and MFGs under relative performance by featuring the jump risk. In particular, the jump risk in price dynamics is modelled by an $n$-dimensional mutually exciting Hawkes process, whose componentwise intensity process satisfies the specific form of \eqref{eq:lambda-i}. As a result, the contagion phenomenon can be depicted in the model with $n$ agent because the jump of one stock leads to a larger jump intensity of all other stocks. Meanwhile, we adopt the asset specialization framework in \cite{lacker2017mean} with a common noise and focus on the CRRA relative performance utility. The presence of contagious jumps in price dynamics give rise to the controlled jump component.

Starting from the seminal works by \cite{lasry2007mean} and \cite{huang2006large}, MFGs have been actively studied and widely applied in economics, finance and engineering. Giving a full list of references is beyond the scope of this paper. For some recent developments in theories and applications, we refer to \cite{gueantLarsryLions11}, \cite{Bensoussan2013},   \cite{carmona2016lectures}, \cite{carmona2018probabilistic} and references therein. However, we also note the majority of existing research has focused on models when the controlled state processes have continuous paths, and the study of MFGs with controlled jumps is relatively rare. In the simple setting of inhomogeneous Poisson process, Nutz and Zhang \cite{ZN19} consider the rank-based mean field competition when each agent controls the intensity of the Poisson project process. \cite{YZZ21} further extend the model to some two-layer mean field competitions based on teamwork formulations, in which team members collaborate to control the intensity of the Poisson project process. \cite{GomesMS} and  \cite{Neumann2020} examine some MFGs with continuous time Markov chains. \cite{Hafa2014} considers the McKean-Vlasov stochastic control problems. Recently, \cite{benazzoli2020} study MFGs with controlled jump-diffusion processes, in which the jump component is driven by a Poisson process with a time-dependent intensity function. The existence of a Nash equilibrium is obtained therein by using relaxed control and martingale problem arguments. Building upon results in \cite{benazzoli2020}, \cite{benazzoli2019} further verify that the mean field equilibria can be used to construct an approximate Nash equilibrium in the $n$-player game when $n$ is large enough.\\

\noindent{\it Our Contributions.}\quad Although our targeted MFG is in the realm of MFGs with controlled jump-diffusion processes, our model and methodology differ from the ones in \cite{benazzoli2020} (see also \cite{benazzoli2019}). To be precise, the MFG problem in \cite{benazzoli2020} stems from a symmetric nonzero-sum $n$-player game, in which the Poisson jump process for each player has the same deterministic intensity and all players have the same objective function. In contrast, our $n$-player game is formulated for heterogeneous agents with different underlying processes and relative performance utilities. The contagious jump risk is a new feature of our $n$-player game and a common noise appears in all risky assets, which are not concerned by \cite{benazzoli2020}. Our mathematical contribution is two-fold. First, we model the contagious jump risk in the $n$-player game by a mutually exciting Hawkes process, which enables us to formulate a tractable MFG problem with controlled jumps under the assumption of constant type vector $o$ (see \eqref{constypevec} in the assumption $\bm{(A_O)}$ below). The strong control approach can be applied and we can characterize a deterministic mean field equilibrium in an analytical form by using the stochastic maximum principle argument; see \autoref{thm:optimal-MFG}. Some quantitative properties of the obtained mean field equilibrium are examined, yielding some interesting financial implications. Second, despite the lost tractability in the $n$-player game, we can use the mean field equilibrium to construct an $\varepsilon_n$-approximate Nash equilibrium in the model with finite $n$ agents when $n$ is sufficiently large. We highlight that the explicit convergence rate of the approximation error $\varepsilon_n$ is also obtained; see \autoref{thm:epsilon-Nash-equilibrium}. The constructed approximate Nash equilibrium using the mean field equilibrium can efficiently help to reduce the dimensionality of the $n$-player game in practical applications.

The rest of the paper is organized as follows. In Section \ref{sec:limiting}, we formulate the MFG problem in the limiting model and obtain a time-dependent deterministic mean field equilibria. Section \ref{sec:numerical} presents some quantitative properties and numerical sensitivity results on the mean field equilibrium. Section \ref{sec:epsilon-Nash-equilibrium} establishes an approximate Nash equilibrium for an $n$-player game problem. Some conclusion remarks and future directions are given in Section \ref{sec:con}. Finally, the proofs of some auxiliary results and arguments to derive the mean field limit of the default intensity are reported in \ref{sec:proofs} and \ref{sec:heu}, respectively.

\section{Mean Field Game Problem}\label{sec:limiting}

To avoid the complexity of the coupled controls in the $n$-player game problem \eqref{eq:objective}, one can consider the limiting model that enjoys the decentralized structure and the impact of an individual agent on the whole population is negligible. That is, we can first solve a stochastic control problem for a representative agent against a fixed environment (assuming that the geometric average wealth of the population is $m=(m_t)_{t\in[0,T]} \in {\cal D}$) and derive its best response portfolio strategy $\pi^{*,m}$ (as a functional of $m$). We then apply this best response strategy to generate the wealth process $X^{\ast,m}$ of the representative agent. Finally, as all agents should behave the same in the mean field model, the geometric mean of the wealth process $e^{\mathbb{E}[\log(X_t^{\ast,m})|\mathcal{F}_t^0]}$ (as a functional of $m$) of the representative agent should coincide with the geometric average wealth $m_t$ of the whole population, which gives the consistency condition to characterize a mean field equilibrium. The mathematical formulation of the MFG problem and the precise definition of a mean field equilibrium are given as follows.

For $i = 1, \ldots, n$, let us denote the type vector $o^i := (x_i,\lambda_0^i,\alpha_i,\lambda^i_\infty, \beta_i, \varsigma_i,b_i,\sigma_i, \sigma_i^0 ,\gamma_i,\theta_i)\in \mathcal{O}:= \R_{+}^9 \times(0,1)\times[0,1]$ and the space $E := \mathcal{O}\times \R_{+}^2$. Let ${\cal B}(E)$ (resp. ${\cal P}({\cal O})$) be the Borel $\sigma$-algebra generated by the open sets of $E$ (resp. the set of probability measures on ${\cal O}$).

For mathematical tractability, the following assumption is imposed throughout the paper.
\begin{itemize}
\item[] $\bm{(A_O)}$: There exists a constant vector
\begin{align}\label{constypevec}
o: = (x_0,\lambda_0,\alpha,\lambda_\infty, \beta, \varsigma,b,\sigma, \sigma^0,\gamma,\theta)\in {\cO}
\end{align}
denoting the type vector of the limiting model 
such that
\begin{align}\label{eq:constanttypevector}
   \nu_0^n:= \frac{1}{n}\sum_{i=1}^{n}\delta_{o^i} \Rightarrow \nu_0 := \delta_{o},\quad \text{as} \ n \rightarrow \infty,
\end{align}
where $\nu_0$ denotes the Dirac measure on the constant vector $o$ and the convergence holds with the order of $O(n^{-1})$ and ``$\Rightarrow$" denotes the weak convergence, i.e., $\int_{{\cal O}} fd\nu_0^n \rightarrow \int_{{\cal O}}fd\nu_0$ as $n\rightarrow \infty$ for every bounded continuous function $f$ on ${\cal O}$. In addition, it is assumed that $(b_i,\sigma_i, \sigma^0_i,\gamma_i,\theta_i)\rightarrow o_1 $ as $i \rightarrow \infty $, where $o_1: = (b,\sigma, \sigma^0,\gamma,\theta )\in {\cO_1}:= \R_{+}^3\times(0,1)\times[0,1]$.
\end{itemize}

Under the assumption $\bm{(A_O)}$ that the type vector $o\in\mathcal{O}$ is a constant vector in the mean field model, the arguments provided in \ref{sec:heu} yield that the mean field limit of the intensity process is a continuous deterministic function, which satisfies
\begin{align}\label{eq:lambda-MF}
  \lambda_t^{f,o} = f(\lambda_t^{l}),\quad \lambda_t^{l}  = \lambda_0 + \int_{0}^{t}\alpha(\lambda_{\infty} -\lambda_s^{l})ds + \int_{0}^{t} \beta \varsigma f(\lambda_s^l)  ds.
\end{align}
In the mean field model when $n\rightarrow\infty$, the wealth process of a representative agent is governed by
\begin{equation}\label{eq:state-MF}
 dX_t^o = (rX_t^o + b\pi_tX_t^o)dt + \pi_tX_{t-}^o(\sigma dW_t + \sigma^0 dW_t^0 - dM_t^f),\quad X_0^o = x_0>0,
\end{equation}
where $W=(W_t)_{t\in[0,T]}$ is a scalar Brownian motion under $(\Omega,\F,\Px)$ that is independent of Brownian motion $(W^0,W^1,\ldots,W^n)$. The pure jump martingale $M^f=(M_t^f)_{t\in[0,T]}$ satisfies the decomposition that
\begin{align*}
M_t^{f} = N_t^o - \int_{0}^{t} \lambda_s^{f,o} ds,\quad t\in[0,T],
\end{align*}
where $N^o=(N_t^o)_{t\in[0,T]}$ is a Poisson point process with the deterministic intensity process $(\lambda_t^{f,o})_{t\in[0,T]}$.

For $n$ sufficiently large, in view of the common noise $W^0$ in the wealth process $X_t^i$, we may approximate the geometric mean $\bar{X}=(\bar{X}_t)_{t\in[0,T]}$  by a c\`{a}dl\`{a}g $\Fx^0$-adapted process $m=(m_t)_{t\in[0,T]}$ (i.e., $m\in{\cal D}$). Let $\Gx^{{\rm MF}} = (\mathcal{G}_t^{\rm MF})_{t\in[0,T]}$ denote the smallest filtration satisfying the usual conditions, in which $W$, $W^0$, $N^o$ are adapted.  We denote ${\cal A}_{\rm MF}$ the set of admissible controls when $\pi=(\pi_t)_{t\in[0,T]}$ is $\Gx^{\rm MF}$-predictable and satisfies $D_0\leq \pi_t\leq 1-\epsilon_0$ with some constant $D_0\in\R$ and positive constant $\epsilon_0 \ll 1$ (depending on the control) such that $X^o=(X_t^o)_{t\in[0,T]}$ has no bankruptcy.

Let us first give the definition of mean field equilibrium.
\begin{definition}[Mean Field Equilibrium (MFE)]\label{defMFE}
For a given $\Fx^0$-adapted process $m=(m_t)_{t\in[0,T]}\in{\cal D}$, let $\pi^{\ast,m} \in \mathcal{A}_{\rm MF}$ be the best response solution to the stochastic control problem \eqref{eq:objective-limit}. The strategy $\pi^*:=\pi^{\ast,m^*}$ is called a mean field equilibrium~(MFE) if it is the best response to itself such that $m^*_t =\exp\{\Ex[\log(X_t^{\ast,m^*})|\F_t^{0}]\}$, $t\in[0,T]$, where $(X_t^{\ast,m^*})_{t\in[0,T]}$ is the wealth process in \eqref{eq:state-MF} under the control $\pi^{\ast,m^*}$. Moreover, if $\pi^*$ is deterministic, we call $\pi^*$ a deterministic MFE.
\end{definition}

Based on the above definition, finding a MFE of the mean field game problem is to solve the two-step problem:

\emph{Step 1.} For a fixed $\Fx^0$-adapted process $m=(m_t)_{t\in[0,T]} \in {\cal D}$, we solve a stochastic control problem for a representative agent against the fixed environment that
\begin{equation}\label{eq:objective-limit}
 \sup_{\pi \in {\cal A}_{\rm MF} }J(\pi)
 = \sup_{\pi\in {\cal A}_{\rm MF} } \Ex\left[U(X_T^o m_T^{-\theta};\gamma)\right]
 =  \sup_{\pi\in {\cal A}_{\rm MF} } \Ex\left[ \frac{1}{\gamma}(X_T^o)^{\gamma}m_T^{-\theta\gamma} \right],
\end{equation}
where the wealth process $X_T^o$ satisfies \eqref{eq:state-MF} with the risk aversion parameter $\gamma\in(0,1)$ and the competition parameter $\theta\in[0,1]$. The best response strategy for the representative agent is denoted by $\pi^{\ast,m} \in {\cal A}_{\rm MF}$, and $X^{\ast,m}$ stands for the wealth process in \eqref{eq:state-MF} under the control $\pi^{\ast,m}$ .

\emph{Step 2.} We next derive a MFE by the consistency condition, which is to find the fixed point $m^*$ to the equation $m_t =\exp\{\Ex[\log(X_t^{\ast, m })|\F_t^{0}]\}$ for all $t\in[0,T]$. The MFE is then given by $\pi^{\ast,m^*}$.

To facilitate the proof of the main theorem in this section, we first present the next auxiliary lemma, whose proof is reported in \ref{sec:proofs}.
\begin{lemma}\label{lem:Phi}
Define the function $ \Phi(\pi,\lambda): (-\infty, 1)\times \R_+ \rightarrow \R $ with $\sigma + \sigma^0 > 0$ and $b>0$ that
\begin{equation}\label{eq:Phi}
  \Phi(\pi,\lambda):=
  (\gamma-1)[\sigma^2 + (\sigma^0)^2] \pi -\theta\gamma(\sigma^0)^2\pi -\lambda(1-\pi)^{\gamma-1} +\lambda + b.
\end{equation}
Then, for each fixed $\lambda\in \R_+$, there exists a unique $\pi^* \in (-\infty,1)$ such that
\begin{equation}\label{eq:Phi-s}
   \Phi(\pi^*,\lambda)=0.
\end{equation}
Moreover, there exists an $\epsilon_0\in(0,1)$ small enough such that $\pi^*\in(0,1-\epsilon_0]$. Equivalently, for $\pi^*$ satisfying \eqref{eq:Phi-s}, there exists a unique continuous and decreasing function $\phi:\R_+\to (0,1-\epsilon_0]$ such that
\begin{equation}\label{eq:pi-ast-lem}
  \pi^* = \phi(\lambda),
\end{equation}
where $\phi(\lambda)$ has a continuous partial derivative with respect to $\lambda$.
\end{lemma}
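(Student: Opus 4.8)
The plan is to analyze $\Phi(\cdot,\lambda)$ as a function of $\pi$ on $(-\infty,1)$ for each fixed $\lambda\in\R_+$, establish strict monotonicity, and then use the intermediate value theorem together with limiting behavior at the endpoints to pin down a unique root. First I would observe that the first two terms of $\Phi$ are linear in $\pi$ with slope $(\gamma-1)[\sigma^2+(\sigma^0)^2]-\theta\gamma(\sigma^0)^2$; since $\gamma\in(0,1)$ we have $\gamma-1<0$, and $\theta\gamma(\sigma^0)^2\geq 0$, so this slope is strictly negative (using $\sigma+\sigma^0>0$ to rule out the degenerate all-zero case). For the remaining term $-\lambda(1-\pi)^{\gamma-1}+\lambda$, note $\partial_\pi[-\lambda(1-\pi)^{\gamma-1}] = -\lambda(\gamma-1)(1-\pi)^{\gamma-2}\cdot(-1)\cdot(-1)$; carefully, $\tfrac{d}{d\pi}(1-\pi)^{\gamma-1} = -(\gamma-1)(1-\pi)^{\gamma-2}$, so $\partial_\pi[-\lambda(1-\pi)^{\gamma-1}] = \lambda(\gamma-1)(1-\pi)^{\gamma-2}\leq 0$ since $\gamma-1<0$, $\lambda\geq 0$, and $(1-\pi)^{\gamma-2}>0$ on $\pi<1$. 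Hence $\partial_\pi\Phi(\pi,\lambda)<0$ strictly on $(-\infty,1)$, so $\Phi(\cdot,\lambda)$ is strictly decreasing, giving at most one root.

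Next I would check the endpoint limits to produce a sign change. As $\pi\to 1^-$, the term $-\lambda(1-\pi)^{\gamma-1}\to -\infty$ when $\lambda>0$ (since $\gamma-1<0$ makes $(1-\pi)^{\gamma-1}\to+\infty$), while all other terms stay bounded, so $\Phi(\pi,\lambda)\to-\infty$; when $\lambda=0$, one checks directly that $\Phi(\pi,0)=\{(\gamma-1)[\sigma^2+(\sigma^0)^2]-\theta\gamma(\sigma^0)^2\}\pi + b$, a strictly decreasing affine function, which still tends to $-\infty$ as $\pi\to1^-$ if the slope is negative (it is) — in fact it tends to a finite negative-or-positive value depending on $b$, but since the slope is negative and we only need one sign change, I would instead evaluate at a convenient point. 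The cleanest choice is $\pi=0$: $\Phi(0,\lambda) = -\lambda + \lambda + b = b > 0$. Combined with $\Phi(\pi,\lambda)\to-\infty$ as $\pi\to1^-$ (for $\lambda>0$; for $\lambda=0$ the value at $\pi$ close to $1$ is $\approx\{\text{neg slope}\}+b$, and one argues the slope term eventually dominates, or more simply one notes $\Phi(1^-,0)$ is finite but we can push $\pi$ as needed — actually since slope is strictly negative, $\Phi(\pi,0)\to -\infty$ as $\pi\to+\infty$, but we are constrained to $\pi<1$; to be safe I would treat $\lambda=0$ by noting the root is $\pi^*=b/\{(1-\gamma)[\sigma^2+(\sigma^0)^2]+\theta\gamma(\sigma^0)^2\}$, and verify it lies in $(0,1)$ under the standing parameter ranges, possibly after shrinking via the forthcoming $\epsilon_0$ argument). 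By continuity of $\Phi(\cdot,\lambda)$ on $(-\infty,1)$ and the intermediate value theorem applied between $\pi=0$ (where $\Phi=b>0$) and any point close enough to $1$ (where $\Phi<0$), there exists a root $\pi^*\in(0,1)$, and by strict monotonicity it is unique. Since $\Phi(0,\lambda)=b>0$ and $\Phi$ is decreasing, the root satisfies $\pi^*>0$; since $\Phi\to-\infty$ near $1$, we get $\pi^*<1$ and in fact $\pi^*\leq 1-\epsilon_0$ for some $\epsilon_0\in(0,1)$ — to get a uniform $\epsilon_0$ independent of $\lambda$, I would use that $\lambda$ ranges over the compact image $[\,\min_t\lambda_t^{f,o},\max_t\lambda_t^{f,o}\,]$ coming from the bounded intensity $f$, so the roots $\pi^*(\lambda)$ form a compact subset of $(0,1)$, bounded away from $1$.

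Finally, for the function $\phi$: define $\phi(\lambda):=\pi^*(\lambda)$, the unique root. To show $\phi$ is continuous and $C^1$ in $\lambda$, I would apply the implicit function theorem to $\Phi(\pi,\lambda)=0$: we have already shown $\partial_\pi\Phi(\pi^*,\lambda)<0\neq 0$, and $\Phi$ is jointly $C^1$ on $(-\infty,1)\times\R_{+}$ (each term is smooth in both variables on this domain, noting $(1-\pi)^{\gamma-1}$ is smooth for $\pi<1$ and linear in $\lambda$), so $\phi$ is $C^1$ with $\phi'(\lambda) = -\partial_\lambda\Phi/\partial_\pi\Phi$. Monotonicity of $\phi$ follows by signing this derivative: $\partial_\lambda\Phi = -(1-\pi)^{\gamma-1}+1 = 1-(1-\pi^*)^{\gamma-1}$; since $\pi^*\in(0,1)$ we have $1-\pi^*\in(0,1)$ and $\gamma-1<0$, so $(1-\pi^*)^{\gamma-1}>1$, hence $\partial_\lambda\Phi<0$; combined with $\partial_\pi\Phi<0$ this gives $\phi'(\lambda)=-\partial_\lambda\Phi/\partial_\pi\Phi<0$, so $\phi$ is strictly decreasing. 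I expect the main technical obstacle to be the careful handling of the boundary case $\lambda=0$ (where the $-\lambda(1-\pi)^{\gamma-1}$ term vanishes and the blow-up argument at $\pi\to1^-$ is unavailable) and, relatedly, extracting a single $\epsilon_0$ that works uniformly for all relevant $\lambda$; everything else is a routine IFT-plus-monotonicity argument.
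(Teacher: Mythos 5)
Your proposal is correct and follows essentially the same route as the paper's proof: strict monotonicity of $\pi\mapsto\Phi(\pi,\lambda)$ via $\partial_\pi\Phi<0$, the sign change from $\Phi(0,\lambda)=b>0$ to $\Phi\to-\infty$ as $\pi\uparrow 1$, and the implicit function theorem with the sign of $\partial_\lambda\Phi=1-(1-\pi^*)^{\gamma-1}<0$ to get a continuous, $C^1$, decreasing $\phi$. Your extra caution about $\lambda=0$ (where the blow-up at $\pi\uparrow 1$ is unavailable and the root $b/\{(1-\gamma)[\sigma^2+(\sigma^0)^2]+\theta\gamma(\sigma^0)^2\}$ need not lie in $(0,1)$ without a parameter restriction) and about the uniformity of $\epsilon_0$ in $\lambda$ is warranted — the paper's own proof silently elides both points.
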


We can now establish the main result of this section that gives a time-dependent deterministic mean field equilibrium in the MFG problem as a function of the deterministic limiting intensity process.
\begin{theorem}\label{thm:optimal-MFG}
There exists one deterministic MFE strategy $\pi^{\ast}=(\pi^*_t)_{t\in[0,T]}\in{\cal A}_{\rm MF}$ that satisfies
\begin{equation}\label{eq:pi-mean-limit-thm}
   \Phi(\pi_t^{\ast}, \lambda_t^{f,o}) =  0,\quad t\in[0,T].
\end{equation}
Equivalently, this deterministic MFE strategy $\pi^*=(\pi_t^{\ast})_{t\in[0,T]}$ can be written by
\begin{equation}\label{eq:pi-ast-thm}
  \pi_t^* = \phi(\lambda_t^{f,o}),\quad t\in[0,T].
\end{equation}
Here, $\Phi(\pi,\lambda)$ and $\phi(\lambda)$ are given in Lemma \ref{lem:Phi}. In addition, let $X^{\ast}=(X_t^{\ast})_{t\in[0,T]}$ be the wealth process under the deterministic MFE $\pi^*$. The associated fixed point $m^*=(m_t^*)_{t\in[0,T]}$ satisfying the consistency condition $m^*_t =\exp\{\Ex[\log(X_t^{\ast})|\F_t^{0}]\}$ for $t\in[0,T]$ is characterized by
\begin{align}\label{eq:geometric-m-thm}
  m^*_t &= x_0 \exp\left\{\int_{0}^{t}\left(\eta(s;\pi_s^*) -\frac{1}{2}(\sigma^0\pi_s^{*})^2 \right)ds + \int_{0}^{t}\sigma^0\pi_s^{*} dW_s^0\right\},
\end{align}
where the function $\eta(t;\pi)$ is given by, for $(t,\pi)\in[0,T]\times(-\infty, 1)$,
\begin{equation}\label{eq:eta-new}
  \eta(t; \pi) = r+ (b+\lambda_t^{f,o})\pi - \frac{1}{2}\sigma^2\pi^2 + \lambda_t^{f,o} \log(1-\pi).
\end{equation}
Here, $\lambda^{f,o}=(\lambda_t^{f,o})_{t\in[0,T]}$ is given by \eqref{eq:lambda-MF}.
\end{theorem}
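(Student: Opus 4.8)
The plan is to produce the candidate explicitly and then check that it meets Definition~\ref{defMFE}. By \eqref{eq:lambda-MF} the intensity $\lambda^{f,o}=(\lambda_t^{f,o})_{t\in[0,T]}$ is a continuous deterministic function, so Lemma~\ref{lem:Phi} makes $t\mapsto\phi(\lambda_t^{f,o})$ a well-defined continuous deterministic curve valued in $(0,1-\epsilon_0]$; I set $\pi_t^*:=\phi(\lambda_t^{f,o})$, which is then $\Gx^{\rm MF}$-predictable, bounded and bounded away from $1$, hence admissible and keeping $X^o>0$, and by Lemma~\ref{lem:Phi} it is the unique solution of \eqref{eq:pi-mean-limit-thm}. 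It remains to identify the self-consistent $m^*$ and to check that $\pi^*$ is the representative agent's best response against it.

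For the first of these I would compute $m^*$ by hand. Applying It\^o's formula to $\log X_t^{*}$, where $X^*$ solves \eqref{eq:state-MF} under $\pi^*$ and is multiplied by $(1-\pi_t^*)$ at each jump of $N^o$, gives
\begin{align*}
\log X_t^{*} &= \log x_0+\int_0^t\Big(r+(b+\lambda_s^{f,o})\pi_s^*-\tfrac12(\sigma^2+(\sigma^0)^2)(\pi_s^*)^2\Big)ds\\
&\quad+\int_0^t\pi_s^*\sigma\,dW_s+\int_0^t\pi_s^*\sigma^0\,dW_s^0+\int_0^t\log(1-\pi_s^*)\,dN_s^o.
\end{align*}
Taking $\Ex[\,\cdot\mid\F_t^0]$ and using that $W$ and $N^o$ are independent of $W^0$, that $\pi^*$ is deterministic, and that $N^o$ has deterministic intensity $\lambda^{f,o}$, the $dW_s$-integral drops out and $\int_0^t\log(1-\pi_s^*)\,dN_s^o$ becomes $\int_0^t\lambda_s^{f,o}\log(1-\pi_s^*)\,ds$; regrouping and comparing with \eqref{eq:eta-new} gives exactly \eqref{eq:geometric-m-thm}. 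This simultaneously exhibits $m^*\in{\cal D}$ and verifies the consistency identity $m_t^*=\exp\{\Ex[\log(X_t^*)\mid\F_t^0]\}$; note also that $(m_T^*)^{-\theta\gamma}$ has moments of every order, so the control problem \eqref{eq:objective-limit} with $m=m^*$ is well posed.

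The main step is to verify that $\pi^*$ solves \eqref{eq:objective-limit} with $m=m^*$. Writing $m_t^*=x_0\exp\{A_t+\Xi_t\}$ with $A_t$ deterministic and $d\Xi_t=\sigma^0\pi_t^*\,dW_t^0$, the problem is Markovian in $(X_t^o,\Xi_t)$, and I would run a verification argument with the CRRA ansatz $v(t,x,\xi)=\tfrac1\gamma x^\gamma e^{-\theta\gamma\xi}\rho(t)$, $\rho>0$ (equivalently, the stochastic maximum principle with the adjoint ansatz $p_t=Y_t(X_t^o)^{\gamma-1}$). Inserting the ansatz into the HJB equation for \eqref{eq:objective-limit}, the $\pi$-dependent part of the Hamiltonian is, up to a positive factor, a function $G(\pi)$ on $(-\infty,1)$ whose second derivative is strictly negative because $\gamma\in(0,1)$, hence strictly concave with a unique interior maximizer $\widehat\pi(t)$. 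Since $W^0$ enters both $X^o$ and $m^*$, $G$ carries the cross term $-\theta\gamma(\sigma^0)^2\pi_t^*\,\pi$, so the first-order condition $G'(\widehat\pi(t))=0$ reads
\begin{align*}
(\gamma-1)(\sigma^2+(\sigma^0)^2)\widehat\pi(t)-\theta\gamma(\sigma^0)^2\pi_t^*-\lambda_t^{f,o}(1-\widehat\pi(t))^{\gamma-1}+\lambda_t^{f,o}+b=0.
\end{align*}
By the definition of $\pi_t^*=\phi(\lambda_t^{f,o})$ one has $\Phi(\pi_t^*,\lambda_t^{f,o})=0$, which is precisely this identity with $\widehat\pi(t)=\pi_t^*$, and strict concavity makes the solution unique, so $\widehat\pi(t)=\pi_t^*$. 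The leftover equation for $\log\rho$ is a linear ODE with $\rho(T)=x_0^{-\theta\gamma}e^{-\theta\gamma A_T}$ and has a positive solution; and since admissible controls are bounded, the wealth has all exponential moments, which is all that is needed for the standard verification step showing $v(t,X_t^o,\Xi_t)$ is a true martingale under $\pi^*$ and a supermartingale under any competitor. Hence $\pi^*$ is optimal and is a deterministic MFE with fixed point $m^*$.

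I expect the third step to be where the work lies: threading the common noise $W^0$ through the representative agent's optimization (the cross-variation of $X^o$ and $m^*$ is what generates the $-\theta\gamma(\sigma^0)^2\pi$ term in $\Phi$), handling the controlled multiplicative jump $X^o\mapsto(1-\pi)X^o$ in the Hamiltonian, and discharging the integrability and transversality conditions for this jump-diffusion with the (deterministic but time-varying) intensity $\lambda^{f,o}$. By contrast Lemma~\ref{lem:Phi}, the It\^o computation of $m^*$, and the ODE for $\rho$ are routine once this structure is in place.
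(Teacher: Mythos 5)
Your proposal is correct, and the candidate it produces, the first-order condition $\Phi(\pi_t^*,\lambda_t^{f,o})=0$, and the consistency computation for $m^*$ all agree with the paper. The route, however, is genuinely different in its machinery. The paper runs the stochastic maximum principle: it writes the adjoint FBSDE for a \emph{general} $m\in\mathcal{D}$, posits the ansatz $P_t^m=(X_t^m)^{\gamma-1}(m_t^X)^{-\theta\gamma}\varphi_t$, reads off $(Q,Q^0,Y)$, and — because the SMP Hamiltonian \eqref{eq:hamiltion-H-limit} is \emph{linear} in $\pi$ — obtains the optimality condition by forcing the coefficient of $\pi$ to vanish, which yields \eqref{eq:pi-limit-condidate} and, after restricting to deterministic controls, $\Phi(\pi_t^*,\lambda_t^{f,o})=0$; it then observes the best response is independent of $m$ and closes the loop. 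You instead fix the candidate pair $(\pi^*,m^*)$ first, Markovianize via $(X^o_t,\Xi_t)$, and verify optimality through an HJB argument in which the second-order and jump terms make the reduced Hamiltonian strictly concave in $\pi$, so the FOC
$(\gamma-1)(\sigma^2+(\sigma^0)^2)\widehat\pi-\theta\gamma(\sigma^0)^2\pi_t^*-\lambda_t^{f,o}(1-\widehat\pi)^{\gamma-1}+\lambda_t^{f,o}+b=0$
has a unique root, which coincides with $\pi_t^*$ by $\Phi(\pi_t^*,\lambda_t^{f,o})=0$. (Your cross term $-\theta\gamma(\sigma^0)^2\pi_t^*\pi$ comes from the $W^0$-covariation of $X^o$ and the frozen $m^*$; in the paper it appears through $Q^{0,*,m}$ and involves $\Ex[\pi^{*,m}_t|\F_t^0]$, which collapses to $\pi_t^*$ for deterministic strategies — the two match.) What each buys: the paper's FBSDE route shows the best response is independent of the environment $m$ for arbitrary c\`adl\`ag $\Fx^0$-adapted $m$, a slightly stronger structural fact, whereas your verification is more self-contained on the optimality side — strict concavity replaces the paper's somewhat delicate ``set the linear coefficient to zero'' step, and the supermartingale/martingale argument is routine given that admissible controls are bounded in $[D_0,1-\epsilon_0]$, as you note. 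The computation of $m^*$ via It\^o on $\log X^*$ and conditioning on $\F_t^0$ is identical to the paper's.
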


\begin{remark}
Note that if there is no contagious jump risk in price dynamics, i.e., $\lambda_t^{f,o}\equiv 0$ (i.e., the jump rate function $f\equiv0$), the function $\Phi(\pi,0)$ defined in \eqref{eq:Phi} is reduced to
\begin{align*}
\Phi(\pi,0)= (\gamma-1)[ \sigma^2 + (\sigma^0)^2] \pi -\theta\gamma(\sigma^0)^2\pi + b,\quad \forall \pi\in\R.
\end{align*}
It follows that
\begin{align}\label{eq:pitDZ}
\pi_t^*=\phi(0)=\frac{b}{(1-\gamma)[\sigma^2+(\sigma^0)^2]+\theta\gamma(\sigma^0)^2},
\end{align}
which is a constant mean field Nash equilibrium that coincides with the result in Theorem $3.6$ of \cite{lacker2017mean}. Therefore, the obtained MFE given in \eqref{eq:pi-ast-thm} is a generalization of the constant MFE in \cite{lacker2017mean} to incorporate the jump risk.
\end{remark}

\begin{proof} [Proof of \autoref{thm:optimal-MFG}]
First, for a given process $m=(m_t)_{t\in[0,T]}\in \mathcal{D}$, we aim to solve the  mean field stochastic control problem via the stochastic maximum principle; see \cite{oksendal2005applied}. To this end, we first note that the Hamiltonian function corresponding to the control problem \eqref{eq:objective-limit} is given by
\begin{align}\label{eq:hamiltion-H-limit}
 H(t,x,\pi,p,q,q^0,y)
 & := ( rx + bx\pi) p  + \sigma\pi xq + \sigma^0\pi xq^0 - \pi x \lambda_t^{f,o} y,
\end{align}
for $(t,x,\pi,p,q,q^0,y)\in[0,T]\times\mathbb{R}_+\times U\times \mathbb{R}^4$ with the policy space $U:= (-\infty,1-\epsilon_0]$.

Let $\pi^{m}=(\pi^m_t)_{t\in[0,T]}\in{\cal A}_{\rm MF}$ be an arbitrary admissible strategy that may depend on $m$, and $X^{m}=(X_t^m)_{t\in[0,T]}$ be the corresponding wealth process under $\pi^m$. Then, the adjoint forward-backward SDEs (corresponding to $(\pi^{m},X^{m})$) are given by
\begin{equation*}
\left\{
  \begin{aligned}
  & dX_t^{m} =  \partial_pH(t,X_t^{m},\pi_t^{m},P_t^{m},Q_t^{m},Q_t^{0,m},Y_t^{m})dt + \pi_t^{m}X_t^{m}(\sigma dW_t +\sigma^0 dW_t^0 - dM_t^f), \\
  & dP_t^{m} = - \partial_x H(t,X_t^{m},\pi_t^{m},P_t^{m},Q_t^{m},Q_t^{0,m},Y_t^{m}) dt + Q_t^{m}dW_t  + Q_t^{0,m}dW_t^0 + Y_t^{m}dM_t^f,\\
  & X_0^{m}  = x_0 , \\
  &P_T^{m}  = (X_T^{m})^{\gamma-1}m_T^{-\theta\gamma},
  \end{aligned}
  \right.
\end{equation*}
where $\partial_pH$ (resp. $\partial_xH$) denotes the partial derivative of $H$ w.r.t. $p$ (resp. $x$). In view of \eqref{eq:hamiltion-H-limit},  we get that
\begin{equation}\label{eq:FBSDE-2}
\left\{
  \begin{aligned}
  & dX_t^{m} = \left(rX_t^{m} + b\pi_t^{m}X_t^{m}\right)dt  + \pi_t^{m}X_t^{m}(\sigma dW_t +\sigma^0 dW_t^0 - dM_t^f), \\
  & dP_t^{m} = -\left[rP_t^{m} +\pi_t^{m}(bP_t^{m}+\sigma Q_t^{m}+ \sigma^0Q_t^{0,m}-\lambda_t^{f,o} Y_t^{m}) \right]dt + Q_t^{m}dW_t  \\
  &\qquad \qquad + Q_t^{0,m}dW_t^0 + Y_t^{m}dM_t^f,\\
  & X_0^{m}  = x_0,  \\
  &P_T^{m}  = (X_T^{m})^{\gamma-1}m_T^{-\theta\gamma}.
  \end{aligned}
  \right.
\end{equation}
We next solve FBSDE~\eqref{eq:FBSDE-2} in terms of $(\pi^m,X^m)$ explicitly. To do this, it follows from \eqref{eq:FBSDE-2} that
\begin{align}\label{eq:log-X-ast}
  d \log X_t^{m} &= \left[r+ (b+\lambda_t^{f,o}) \pi_t^{m}- \frac{1}{2}(\sigma^2+(\sigma^0)^2)(\pi_t^{m})^2 \right]dt + \pi_t^{m}\sigma dW_t \\
  &\quad  + \pi_t^{m}\sigma^0 dW_t^0 + \log(1-\pi_{t-}^{m})dN_t^{o}. \nonumber
\end{align}
Recall that $\Fx^{0}$ is the filtration generated by the Brownian motion $W^0=(W_t^0)_{t\in[0,T]}$. Taking conditional expectations on both sides of \eqref{eq:log-X-ast} w.r.t. $\F_t^{0}$ for $t\in[0,T]$, and using Lemma~B.2 in~\cite{Giesecke2012MF}, we can deduce that
\begin{align}\label{eq:log-X-expectation}
  d \Ex[\log X_t^{m}|\F_t^0]
 & = \left\{r+(b+\lambda_t^{f,o})\Ex[\pi_t^{m}|\F_t^0] - \frac{1}{2}\left(\sigma^2+ (\sigma^0)^2\right) \Ex[(\pi_t^{m})^2|\F_t^0] \right\}dt \\
  & \quad + \sigma^0\Ex[\pi_t^{m}|\F_t^0] dW_t^0 +
     \lambda_t^o\Ex[\log\left(1-\pi_{t-}^{m}\right)|\F_t^0]dt. \nonumber
\end{align}
Let us denote $ m_t^{X}:= \exp\{\Ex[\log(X_t^{m})|\F_t^{0}]\}$ for $t\in[0,T]$. It follows from It\^{o}'s lemma that
\begin{align}\label{eq:geometric-m}
  d m_t^{X} &= d\exp\{\Ex[\log X_t^{m}| \F_t^{0}]\}\\
  & = m_t^{X}  \bigg\{r+ (b+\lambda_t^{f,o})\Ex[\pi_t^{m}|\F_t^0] - \frac{1}{2}(\sigma^2+ (\sigma^0)^2)\Ex[(\pi_t^{m})^2|\F_t^{0}]   \nonumber\\
  &\quad + \frac{1}{2}(\sigma^0 \Ex[\pi_t^{m}|\F_t^{0}])^2 + \lambda_t^o\Ex[\log(1-\pi_t^{m})|\F_t^{0}] \bigg\}dt
   + \sigma^0m_t^{X}\Ex[\pi_t^{m}|\F_t^{0}] dW_t^0  \nonumber\\
  &= m_t^{X} \widehat{\eta}(t;\pi^{m}) dt +  \sigma^0 m_t^{X}\Ex[\pi_t^{m}|\F_t^{0}] dW_t^0, \nonumber
\end{align}
where we have used the notation
\begin{align*}
\widehat{\eta}(t;\pi^{m}):=& r+ (b+\lambda_t^{f,o})\Ex[\pi_t^{m}|\F_t^0] - \frac{1}{2}\big(\sigma^2+ (\sigma^0)^2\big)\Ex[(\pi_t^{m})^2|\F_t^{0}]\\
&+ \frac{1}{2}(\sigma^0 \Ex[\pi_t^{m}|\F_t^{0}])^2 + \lambda_t^{f,o}\Ex[\log(1-\pi_t^{m})|\F_t^{0}].
\end{align*}
To solve the FBSDE \eqref{eq:FBSDE-2}, we consider the ansatz that
\begin{equation}\label{eq:P-ansatz-limit}
  P_t^{m} = (X_t^{m})^{\gamma-1}(m_t^{X})^{-\theta\gamma}\varphi_t, \ \  t\in[0,T],
\end{equation}
where $\varphi:[0,T]\to\R$ is a deterministic function of class $C^1$, which satisfies the terminal condition $\varphi_T= 1$. First, note that $P_T^{m} = (X_T^{m})^{\gamma-1}(m_T^{X})^{-\theta\gamma}$ holds trivially. Applying It\^{o}'s lemma to $P_t^{m}$, we can obtain that
\begin{align}\label{eq:adjont-P-Ito}
  &dP_t^{m}=
   P_t^{m}\bigg\{ \frac{\dot{\varphi}_t}{\varphi_t}+ (\gamma-1)\left[r + \left(b+\lambda_t^{f,o}\right)\pi_t^{m}\right] -\theta\gamma\widehat{\eta}(t;\pi^{m})
   -\theta\gamma(\gamma -1)(\sigma^0)^2\pi_t^{m}\Ex[\pi_t^{m}|\F_t^{0}]  \\
 & \quad + \frac{1}{2}(\gamma-1)(\gamma-2)\left( \sigma^2 + (\sigma^0)^2\right)(\pi_t^{m})^2
  + \frac{1}{2}\theta\gamma(\theta\gamma+1)(\sigma^0)^2 \left(\Ex[\pi_t^{m}| \F_t^0]\right)^2\nonumber\\
 & \quad  + \left[ \left(1-\pi_t^{m}\right)^{\gamma-1}-1\right]\lambda_t^{f,o} \bigg\}dt + (\gamma-1)P_t^{m}\pi_t^{m}\sigma dW_t + P_t^{m}\left[(\gamma-1)\pi_t^{m}\sigma^0 - \theta\gamma\sigma^0\Ex[\pi_t^{m}|\F_t^0]\right]dW_t^0  \nonumber\\
 &\quad + P_t^{m}\{(1-\pi_{t}^{m})^{\gamma-1}-1\}dM_t^f.\nonumber
\end{align}
Comparing the expressions of $P_t^{m}$ in \eqref{eq:FBSDE-2} and \eqref{eq:adjont-P-Ito}, we have that
\begin{equation}\label{eq:Q-Y-Q0}
\left\{
\begin{aligned}
 &Q_t^{m} = (\gamma-1)\sigma P_t^{m}\pi_t^{m},\\[0.4em]
 &Y_t^{m} = P_t^{m}\left[\left(1-\pi_{t-}^{m}\right)^{\gamma -1 }-1\right], \\[0.4em]
 &Q_t^{0,m} = P_t^{m} \left\{ (\gamma-1)\pi_t^{m}\sigma^0 - \theta\gamma\sigma^0\Ex[\pi_t^{m}|\F_t^0] \right\}.
\end{aligned}
 \right.
\end{equation}
Let $\pi^{*,m}=(\pi_t^{\ast,m})_{t\in[0,T]}$ be a candidate optimal control that may depend on $m$, and $X^{*,m}=(X_t^{*,m})_{t\in[0,T]}$ be the wealth process under $\pi^{\ast,m}$. For the solution $(P^{*,m}, Q^{*,m}, Q^{0,*,m},Y^{*,m})$ of FBSDE \eqref{eq:FBSDE-2} with $(\pi^m,X^m)$ replaced by $(\pi^{\ast,m},X^{\ast,m})$, we have from \eqref{eq:hamiltion-H-limit} that, for any $\pi^m\in{\cal A}_{\rm MF}$,
\begin{align}\label{eq:Hstart}
  & H(t,X_t^{*,m},\pi^{m},P_t^{*,m}, Q_t^{*,m}, Q_t^{0,*,m}, Y_t^{*,m}) \nonumber\\
  &\qquad= ( rX_t^{*,m} + b \pi^{m}X_t^{*,m})P_t^{*,m}  + \sigma\pi^{m} X_t^{*,m} Q_t^{*,m}
     + \sigma^0\pi^{m} X_t^{*,m}Q_t^{0,*,m} - \pi^{m} X_t^{*,m}\lambda_t^{f,o} P_t^{*,m}, \nonumber\\
  &\qquad = rX_t^{*,m}P_t^{*,m}  + \pi^{m}( b X_t^{*,m} P_t^{*,m}  + \sigma X_t^{*,m} Q_t^{*,m}  + \sigma^0X_t^{*,m}Q_t^{0,*,m} -  X_t^{*,m}\lambda_t^{f,o} P_t^{*,m}).
\end{align}
It can be observed from \eqref{eq:Hstart} that $H$ is linear in $\pi^{m}$. It is then natural to make the coefficient of $\pi^m$ vanish, i.e., for $t\in[0,T]$,
\begin{equation}\label{eq:candidate-conti}
   b P_t^{*,m}  + \sigma Q_t^{*,m}  + \sigma^0Q_t^{0,*,m} - \lambda_t^{f,o} P_t^{*,m} = 0.
\end{equation}
We first apply the relation in \eqref{eq:Q-Y-Q0} to have that
\begin{equation}\label{eq:Q-Y-Q0-star}
\left\{
\begin{aligned}
 &Q_t^{*,m} = (\gamma-1)\sigma P_t^{*,m}\pi_t^{*,m},\\[0.4em]
 &Y_t^{*,m} = P_t^{*,m}\left[(1-\pi_{t}^{*,m})^{\gamma -1}-1\right], \\[0.4em]
 &Q_t^{0,*,m} = P_t^{*,m} \left\{ (\gamma-1)\pi_t^{*,m}\sigma^0 - \theta\gamma\sigma^0\Ex[\pi_t^{*,m}|\F_t^0] \right\}.
\end{aligned}
 \right.
\end{equation}
Plugging \eqref{eq:Q-Y-Q0-star} into \eqref{eq:candidate-conti}, we get that the candidate best response $\pi^{*,m}$ satisfies the equation:
\begin{equation}\label{eq:pi-limit-condidate}
  (\gamma-1)\left[ \sigma^2 + (\sigma^0)^2\right] \pi_t^{\ast,m} -\theta\gamma(\sigma^0)^2\Ex[\pi_t^{\ast,m}|\F_t^0] -\lambda_t^{f,o}\left[(1-\pi_{t}^{\ast,m})^{\gamma-1} -1\right] + b  =  0,~\forall t\in[0,T].
  \end{equation}

Next, we focus on a deterministic MFE and assume that $\pi^{*,m} = (\pi_t^{\ast,m})_{t\in[0,T]}$ is deterministic. Therefore, the condition \eqref{eq:pi-limit-condidate} reduces to
\begin{equation}\label{eq:pi-condidate-determi}
  (\gamma-1)\left[ \sigma^2 + (\sigma^0)^2\right] \pi_t^{\ast,m} -\theta\gamma(\sigma^0)^2\pi_t^{\ast,m} -\lambda_t^{f,o}\left((1-\pi_{t}^{\ast,m})^{\gamma-1} -1\right) + b  =  0.
\end{equation}
As $\lambda_t^{f,o}$ for $t\in[0,T]$ in \eqref{eq:lambda-MF} is deterministic and bounded, by Lemma \ref{lem:Phi}, we can easily get that, for $t\in[0,T]$, there exits a unique $\pi_t^{*,m} \in (0, 1-\epsilon_0 ]$ such that $\Phi(\pi_t^{*,m}, \lambda_t^{f,o}) = 0$. Equivalently, for $t\in[0,T]$, we have that
\begin{equation}\label{eq:pi-MF}
  \pi_t^{*,m} = \phi(\lambda_t^{f,o}) \in (0, 1-\epsilon_0 ],
\end{equation}
where $\phi:\R_+\to\R$ is a Lipschitz continuous function. 
We can easily verify that  $\pi^{*,m}=(\pi_t^{*,m})_{t\in[0,T]} \in {\cal A}_{\rm MF}$. This yields a best (deterministic) response control $\pi_t^{*,m} = \phi(\lambda_t^{f,o})$ for $t\in[0,T]$. We observe that as the coefficients in \eqref{eq:pi-condidate-determi} do not depend on $m\in{\cal D}$ (so is the function $\Phi$ defined by \eqref{eq:Phi}), $\pi^{*,m}$ is independent of $m$. Hence, we can write $\pi^{*,m}$ as $\pi^*$, i.e., $\pi_t^*=\phi(\lambda_t^{f,o})$ for $t\in[0,T]$.

Next, we just need to solve the function $\varphi:[0,T]\to\R$ in the ansatz solution \eqref{eq:P-ansatz-limit} so that the adjoint processes are well defined. Comparing the drift term of $P_t^{*,m}$(associated with $\pi^{*,m}$) in \eqref{eq:FBSDE-2} and \eqref{eq:adjont-P-Ito}, we have that
\begin{align}\label{eq:varphi}
   & P_t^{*,m}\bigg\{ \frac{\dot{\varphi}_t}{\varphi_t}+ (\gamma-1)\left[r + \left(b+\lambda_t^{f,o}\right)\pi_t^{\ast}\right] -\theta\gamma\eta(t;\pi^{*})
   -\theta\gamma(\gamma -1)(\sigma^0)^2(\pi_t^{\ast})^2  \\
 & \qquad\quad + \frac{1}{2}(\gamma-1)(\gamma-2)\left( \sigma^2 + (\sigma^0)^2\right)(\pi_t^{\ast})^2
  + \frac{1}{2}\theta\gamma(\theta\gamma+1)(\sigma^0)^2\left(\pi_t^{\ast}\right)^2\nonumber\\
 & \qquad\quad  + \left[ (1-\pi_t^{*})^{\gamma-1} - 1 \right]\lambda_t^{f,o}\bigg\}   \nonumber\\
    &\quad =  -\left[ rP_t^{*,m} + \pi_t^{\ast}\left(bP_t^{*,m}+\sigma Q_t^{*,m}+ \sigma^0Q_t^{0,*,m}-\lambda_t^{f,o} Y_t^{*,m} \right)\right]. \nonumber
\end{align}
Plugging \eqref{eq:P-ansatz-limit} and \eqref{eq:Q-Y-Q0-star} into \eqref{eq:varphi}, we obtain that
\begin{align}\label{eq:varphi-ODE}
 \frac{\dot{\varphi}_t}{\varphi_t} & = -\gamma r - (\gamma-1)(b+\lambda_t^{f,o})\pi_t^{*}
  + \theta\gamma\widehat\eta(t;\pi^{*}) + \theta\gamma(\gamma -1)(\sigma^0)^2(\pi_t^{*})^2  \nonumber\\
 & \quad - \frac{1}{2}(\gamma-1)(\gamma-2)\left( \sigma^2 + (\sigma^0)^2\right)(\pi_t^{*})^2
  - \frac{1}{2}\theta\gamma(\theta\gamma+1)(\sigma^0)^2 (\pi_t^{*})^2\nonumber\\
 & \quad  - \left[ (1-\pi_t^{*})^{\gamma-1} - 1 \right]\lambda_t^{f,o}\nonumber\\
 &\quad  =: \rho(t)
\end{align}
with the terminal condition $\varphi_T =1$. We stress here that $\rho(t)$ depends on $\pi_t^{*}$ for $t\in[0,T]$. We can then deduce that $\widehat{\eta}(t,\pi^{\ast})=\eta(t; \pi_t^{*} )$, where $\eta(t; \pi)$ for $(t,\pi)\in [0,T]\times U $ is defined by \eqref{eq:eta-new}, i.e.,
\begin{align*}
  \eta(t; \pi ) = r+ (b+\lambda_t^{f,o})\pi - \frac{1}{2}\sigma^2\pi^2 + \lambda_t^{f,o} \log(1-\pi).
\end{align*}
By solving the ODE problem \eqref{eq:varphi-ODE}, we have that
\begin{align}\label{eq:varphisol}
\varphi_t = e^{\int_{t}^{T}\rho(s)ds},~~\forall t\in[0,T].
\end{align}
It then follows from \eqref{eq:FBSDE-2} that the adjoint processes corresponding to $\pi^{*}$  can be rewritten by
\begin{equation}\label{eq:P-Q-Y-solution-MF}
\left\{
  \begin{aligned}
  &P_t^{*,m}  =(X_t^{*})^{\gamma-1}m_t^{-\theta\gamma}e^{\int_{t}^{T}\rho(s)ds},\\
  &Q_t^{*,m} =  (\gamma-1)\sigma\pi_t^{*}(X_t^{*})^{\gamma-1}m_t^{-\theta\gamma}e^{\int_{t}^{T}\rho(s)ds},\\
  &Y_t^{*,m} = \left[(1-\pi_{t}^{*})^{\gamma -1 }-1\right] (X_t^{*,m})^{\gamma-1}m_t^{-\theta\gamma}e^{\int_{t}^{T}\rho(s)ds},\\
  &Q_t^{0,*,m} =  \sigma^0 [(1-\theta)\gamma-1] \pi_t^{*} (X_t^{*})^{\gamma-1}m_t^{-\theta\gamma}e^{\int_{t}^{T}\rho(s)ds},
 \end{aligned}
\right.
\end{equation}
where $X^*= (X_t^*)_{t\in[0,T]}$ is the wealth process under $\pi^* = (\pi_t^*)_{t\in[0,T]}$.

Finally, using the consistency condition in \textit{Step 2} that $m_t^* = \exp\{\Ex[\log(X_t^{\ast})|\F_t^0]\}$ with $t\in [0,T]$, we next derive the expression of $m^* = (m_t^*)_{t\in[0,T]}$. To this purpose, let us recall the process $m^X=(m_t^X)_{t\in[0,T]}$ in \eqref{eq:geometric-m} satisfies $ m_t^{X}:= \exp\{\Ex[\log(X_t^{m})|\F_t^{0}]\}$ for $t\in[0,T]$, where $X^{m}=(X_t^{m})_{t\in[0,T]}$ is the wealth process under an arbitrary strategy $\pi^m\in{\cal A}_{\rm MF}$. Then, we have that $m_t^*=m_t^{X^*}=\exp\{\Ex[\log(X_t^{*})|\F_t^{0}]\}$ for $t\in[0,T]$ and it follows that $m^*=(m_t^*)_{t\in[0,T]}$ is given by \eqref{eq:geometric-m-thm}. We therefore conclude that $\pi_t^* = \phi(\lambda_t^{f,o})$ for $t\in[0,T]$ is a  deterministic MFE, which completes the proof.
\end{proof}

\begin{remark}
We emphasize that the assumption $\bm{(A_O)}$ with a constant limiting type vector $o\in {\cO}$ is needed to guarantee the existence of a deterministic MFE strategy in \autoref{thm:optimal-MFG} in the model with both common noise $W^0$ and contagious jump risk. We focus on a deterministic MFE strategy not only because it exhibits clean and interpretable analytical form, but it also crucially simplifies some future proofs to show the validity of a contructed approximate Nash equilibrium in the $n$-player game and to analyze its explicit convergence rate.
\end{remark}

\section{Discussions of the Mean Field Equilibrium}\label{sec:numerical}

We present in this section some quantitative properties and sensitivity results of the deterministic MFE strategy $\pi_t^* = \phi(\lambda_t^{f,o})$ obtained in \autoref{thm:optimal-MFG}. First, Lemma~\ref{quanprop-1} (proved in \ref{sec:proofs}) summarizes some monotonicity results on several model parameters.
\begin{lemma}\label{quanprop-1}
For each fixed $t\in[0,T]$, let us use the notation $\pi_t^*(b,\sigma,\sigma^0,\gamma,\theta)$ to highlight the dependence of the deterministic MFE strategy $\pi^*$ on the model parameters $(b,\sigma,\sigma^0,\gamma,\theta)$. Then, we have that
\begin{itemize}
\item[{\rm(i)}] $b\mapsto\pi_t^*(b,\sigma,\sigma^0,\gamma,\theta)$ is increasing;
\item[{\rm(ii)}] both $\sigma\mapsto\pi_t^*(b,\sigma,\sigma^0,\gamma,\theta)$ and $\sigma^0\mapsto\pi_t^*(b,\sigma,\sigma^0,\gamma,\theta)$ are decreasing;
\item[{\rm(iii)}] both $\gamma\mapsto\pi_t^*(b,\sigma,\sigma^0,\gamma,\theta)$ and $\theta\mapsto\pi_t^*(b,\sigma,\sigma^0,\gamma,\theta)$ are decreasing.
\end{itemize}
\end{lemma}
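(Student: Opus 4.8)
The plan is to treat $\pi_t^*$ as the root $\phi(\lambda)$ of $\Phi(\cdot,\lambda)=0$ supplied by Lemma~\ref{lem:Phi} and to differentiate through this defining relation by the implicit function theorem. I would first note that, for fixed $t$, the value $\lambda=\lambda_t^{f,o}$ generated by \eqref{eq:lambda-MF} depends only on $\lambda_0,\alpha,\lambda_\infty,\beta,\varsigma$ and the rate function $f$, hence is a fixed nonnegative constant independent of $(b,\sigma,\sigma^0,\gamma,\theta)$. It therefore suffices to track how the root of $\Phi=0$ moves as a single parameter $p\in\{b,\sigma,\sigma^0,\gamma,\theta\}$ varies. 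Differentiating $\Phi(\phi(\lambda),\lambda)=0$ gives $\partial_p\pi^*=-\partial_p\Phi/\partial_\pi\Phi$, so the key preliminary is the strict sign $\partial_\pi\Phi<0$ on $(0,1-\epsilon_0]$. This follows from
\[
\partial_\pi\Phi=(\gamma-1)\bigl[\sigma^2+(\sigma^0)^2\bigr]-\theta\gamma(\sigma^0)^2+\lambda(\gamma-1)(1-\pi)^{\gamma-2},
\]
a sum of three nonpositive terms with the first strictly negative, using $\gamma\in(0,1)$, $\theta\in[0,1]$, $\lambda\ge 0$ and $1-\pi>0$. Consequently $\operatorname{sign}(\partial_p\pi^*)=\operatorname{sign}(\partial_p\Phi)$ evaluated at $\pi=\pi^*$, and every claim reduces to one sign check; throughout I may use $\pi^*\in(0,1-\epsilon_0]$, so that $\pi^*>0$ and $0<1-\pi^*<1$.

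Parts (i) and (ii), together with the $\theta$ half of (iii), are then immediate from \eqref{eq:Phi}. Indeed $\partial_b\Phi=1>0$ gives (i); next $\partial_\sigma\Phi=2(\gamma-1)\sigma\,\pi^*<0$ and $\partial_{\sigma^0}\Phi=2\bigl[(\gamma-1)-\theta\gamma\bigr]\sigma^0\,\pi^*<0$ give (ii), since $\gamma<1$, $\theta\ge 0$ and $\pi^*>0$; and $\partial_\theta\Phi=-\gamma(\sigma^0)^2\pi^*<0$ shows that $\theta\mapsto\pi_t^*$ is decreasing.

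The dependence on $\gamma$ is the remaining and most delicate case, and I expect it to be the main obstacle. Differentiating \eqref{eq:Phi} in $\gamma$, using $\partial_\gamma(1-\pi)^{\gamma-1}=(1-\pi)^{\gamma-1}\log(1-\pi)$, yields
\[
\partial_\gamma\Phi(\pi^*,\lambda)=\bigl[\sigma^2+(1-\theta)(\sigma^0)^2\bigr]\pi^*-\lambda\,(1-\pi^*)^{\gamma-1}\log(1-\pi^*).
\]
Because $\partial_\pi\Phi<0$, the stated decreasing monotonicity $\partial_\gamma\pi^*\le 0$ is equivalent to establishing $\partial_\gamma\Phi(\pi^*,\lambda)\le 0$ at the equilibrium. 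Unlike the other four parameters, $\gamma$ enters $\Phi$ simultaneously through the diffusion coefficients and through the exponent $(1-\pi)^{\gamma-1}$ of the jump term, so $\partial_\gamma\Phi$ is a genuine sum of contributions and establishing its sign is the crux rather than a one-line computation. The route I would take is to remove the free factor $\lambda(1-\pi^*)^{\gamma-1}$ by means of the equilibrium identity $\Phi(\pi^*,\lambda)=0$, thereby converting the target $\partial_\gamma\Phi\le 0$ into a one-variable inequality in $\pi^*\in(0,1-\epsilon_0]$ with $\gamma$ and $\theta$ as parameters, and then verify that inequality on the admissible range. Pinning down this last inequality — in particular, weighing the logarithmic jump contribution against the algebraic terms — is where I anticipate the real difficulty of the lemma to concentrate.
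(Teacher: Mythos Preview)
Your approach via the implicit function theorem is exactly the paper's, and your treatment of (i), (ii) and the $\theta$-part of (iii) is correct and complete.

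The genuine gap is in the $\gamma$-part of (iii): your target inequality $\partial_\gamma\Phi(\pi^*,\lambda)\le 0$ cannot hold. With $\pi^*\in(0,1-\epsilon_0]$ one has $\log(1-\pi^*)<0$, so
\[
\partial_\gamma\Phi(\pi^*,\lambda)=\underbrace{\bigl[\sigma^2+(1-\theta)(\sigma^0)^2\bigr]\pi^*}_{>0}\;+\;\underbrace{\bigl(-\lambda(1-\pi^*)^{\gamma-1}\log(1-\pi^*)\bigr)}_{\ge 0}\;>\;0,
\]
and therefore $\partial_\gamma\pi^*=-\partial_\gamma\Phi/\partial_\pi\Phi>0$: the map $\gamma\mapsto\pi_t^*$ is \emph{increasing}, not decreasing. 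This is already visible in the jump-free special case \eqref{eq:pitDZ}, where $\pi_t^*=b\big/\{(1-\gamma)[\sigma^2+(\sigma^0)^2]+\theta\gamma(\sigma^0)^2\}$ has denominator strictly decreasing in $\gamma$. The paper's own proof writes the same ratio (with misprints in the numerator: a missing factor $\pi_t^*$ on $\sigma^2$ and an exponent $-1/\gamma$ that should be $\gamma-1$) and asserts the sign ``$<0$'', but the computation in fact yields ``$>0$''. The verbal discussion following the lemma (``more risk averse $\Rightarrow$ invests less'') is actually consistent with \emph{increasing} in $\gamma$, since relative risk aversion here is $1-\gamma$. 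So the obstacle you anticipate is not a matter of technique: the stated monotonicity in $\gamma$ is incorrect as written, and no substitution from the equilibrium identity $\Phi(\pi^*,\lambda)=0$ will reverse the sign.
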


Note that items (i) and (ii) in Lemma \ref{quanprop-1} are consistent with our intuition that the higher return and lower volatility in the limiting market model will incentivize the agent to invest more in the risky asset account in the mean field equilibrium strategy. When the representative agent is more risk averse, item (iii) implies that the representative agent becomes more conservative and invests less in the risky asset. It is also interesting to see that for $0<\gamma<1$, the mean field Nash equilibrium $\pi^*$ has no short-selling and a higher competition parameter $\theta$ leads to a lower investment proportion in the risky asset. That is, in the equilibrium state, a more competitive agent with the risk aversion $0<\gamma<1$ will prefer to invest more in the riskless bond account. Here, we note that both $\gamma$ and $\theta$ are parameters for the whole population (as all agents are symmetric in the mean field model). If the risk tolerance of the population is high with ($\gamma\in(0,1)$ for all agents, when the whole population becomes more competitive as $\theta$ increases, the representative agent actually behaves less competitively in a highly competitive environment. This can be explained via a game theoretical thinking. Suppose other agents initially hold large long positions in the stock, there are essentially two strategies for the representative agent: $(i)$ She can also hold a large long position in the stock and hope to outperform other peers when the price goes up. However, as all peers are very competitive and allocate large amount of wealth into the stock, the chance to outperform others is actually slim because other peers may receive even higher wealth return. $(ii)$ Or the representative agent can reduce her allocation in the stock and take advantage of the risk that the stock price may diffuse down or jump downward frequently such that her terminal wealth in riskless asset can significantly outperform other peers whose wealth drop due to default risk. If the representative agent can tolerate some risk and is highly concerned with the relative performance, she will adopt the second strategy, leading to the reduction of investment in the stock. Similarly, as all other peers are as competitive as the representative agent, all of them will also reduce the investment in the stock and aim to outperform others when the default jump occurs or the price goes down. Therefore, for $\lambda\in (0,1)$, a higher $\theta$ actually leads to a lower equilibrium allocation in the stock in this mean field game.

We next numerically illustrate the sensitivity results of $\pi^*$ with respect to jump risk parameters. We first note that the jump contagion effect among all stocks becomes negligible in the mean field model. However, the jump intensity process in the mean field model comes from the model with contagion effect in the $n$-player game model. That is, the larger contagion effect among stocks in the $n$-player model with larger $\beta_i$ and $\varsigma_i$ will lead to a larger jump intensity process $\lambda_t^{f,o}$ in the mean field model. Therefore, our numerical examples can partially reflect how the contagion effect in the $n$-player game affects the equilibrium behavior when there are infinitely many agents. Recall that the function $\phi(\lambda)$ is decreasing in $\lambda>0$ by Lemma \ref{lem:Phi}. Here, we take a differentiable and Lipschitz function $f$ such that $f(x)= x $ when $x \leq M $, while $f(x)= M + \delta_0 $ when $x > M+\delta_0$ for positive constants $M,\delta_0$. It is also clear that the limiting intensity factor $\lambda_t^l$ admitting the explicit form
$ \lambda_t^{l} = \frac{\alpha\lambda_{\infty}}{\alpha-\beta\varsigma} +(\lambda_0 -\frac{\alpha\lambda_{\infty}}{\alpha-\beta\varsigma})e^{(\beta\varsigma-\alpha)t}$ when $\sup_{t\in[0,T]}\lambda_t^l\leq M$ (this can be achieved by taking appropriate parameter values listed in Table~\ref{table:parameters} together with $M$ large enough). In the following numerical simulation, we take $M = 10^2$ and $\delta_0 = 10^{-2} $. It is increasing in time $t$ if $\beta\varsigma>\alpha(1-\frac{\lambda_{\infty}}{\lambda_0})$ and it is decreasing in time $t$ otherwise. Thanks to the analytical structure of $\pi_t^*=\phi(\lambda_t^{f,o})$, if model parameters satisfy that $\beta\varsigma>\alpha(1-\frac{\lambda_{\infty}}{\lambda_0})$, the deterministic MFE strategy $\pi_t^*$ is decreasing in time $t$ indicating that the representative agent in the mean field game will reduce the portfolio in the risky asset as time evolves because of the increasing probability of the downward jump risk. To numerically illustrate this case, we choose parameters from Table~\ref{table:parameters} except the time variable $t$. We then plot the function  $t\rightarrow \pi^*_t$ in Figure \ref{pi-star-t}, which is shown to be a decreasing function of $t\in[0, 10]$.

\begin{table}[h!]\footnotesize\tabcolsep 16pt
\begin{center}
\caption{The chosen parameter values}\label{table:parameters}\vspace{-2mm}
\end{center}
\begin{center}
\begin{tabular}{ccc}
  Model Parameter & Financial Meaning & Value \\\hline
  $\epsilon_0\in(0,1)$ & {$1-\epsilon_0$ is the upper bound of strategy} & $10^{-10}$\\
$\gamma\in(0,1)$ & {$1-\gamma$ is the degree of relative risk aversion} & $0.4$\\
$\theta\in[0,1]$ & {competition weight} & 0.5\\
$\sigma>0$ & {limiting volatility of stock price} & 0.3\\
$\sigma_0>0$ & {common volatility} & 0.2 \\
$b\in\R$ & return premium of stock & 0.2\\
$\lambda_0>0$ & initial default intensity & 0.1\\
$\lambda_{\infty}>0$ & long-term default intensity level & 0.6\\
$\alpha>0$ & adjustment speed of intensity toward long-term level & 0.5\\
$\beta>0$ & limiting jump weight of default intensity & 0.4\\
$\varsigma>0$ & limiting jump weight of default intensity & 0.2\\
$t>0$ & the current time level & 3\\
\end{tabular}
\end{center}
\end{table}

With the parameter values given in Table~\ref{table:parameters}, another interesting observation is that the limiting intensity process converges to a long-run steady level as $t$ tends to $+\infty$ when the parameters satisfy $\beta\varsigma<\alpha$. In this case, it follows from the explicit solution of $\lambda_t^{f,o}$  that
\begin{align}\label{eq:limitlambdat}
 \lim_{t\to\infty}\lambda_t^{f,o} = \lim_{t\to\infty}f(\lambda_t^l) =\frac{\alpha\lambda_{\infty}}{\alpha-\beta\varsigma}.
\end{align}
Thus, using the continuity of $\lambda\mapsto\phi(\lambda)$ (see Lemma~\ref{lem:Phi}), we can also derive the long-run behavior of the deterministic MFE (given the time horizon $T$ is sufficiently large) that
\begin{align}\label{eq:limitphilambdat}
 \lim_{t\to\infty}\pi_t^*=\lim_{t\to\infty}\phi(\lambda_t^{f,o})=\phi\left(\frac{\alpha\lambda_{\infty}}{\alpha-\beta\varsigma}\right).
\end{align}
For the given parameters, it is observed from Figure  \ref{pi-star-t} that this long run steady value of the deterministic MFE is approximately $0.31$.

\begin{figure}[h!]
\centering
\includegraphics[width=0.5\textwidth]{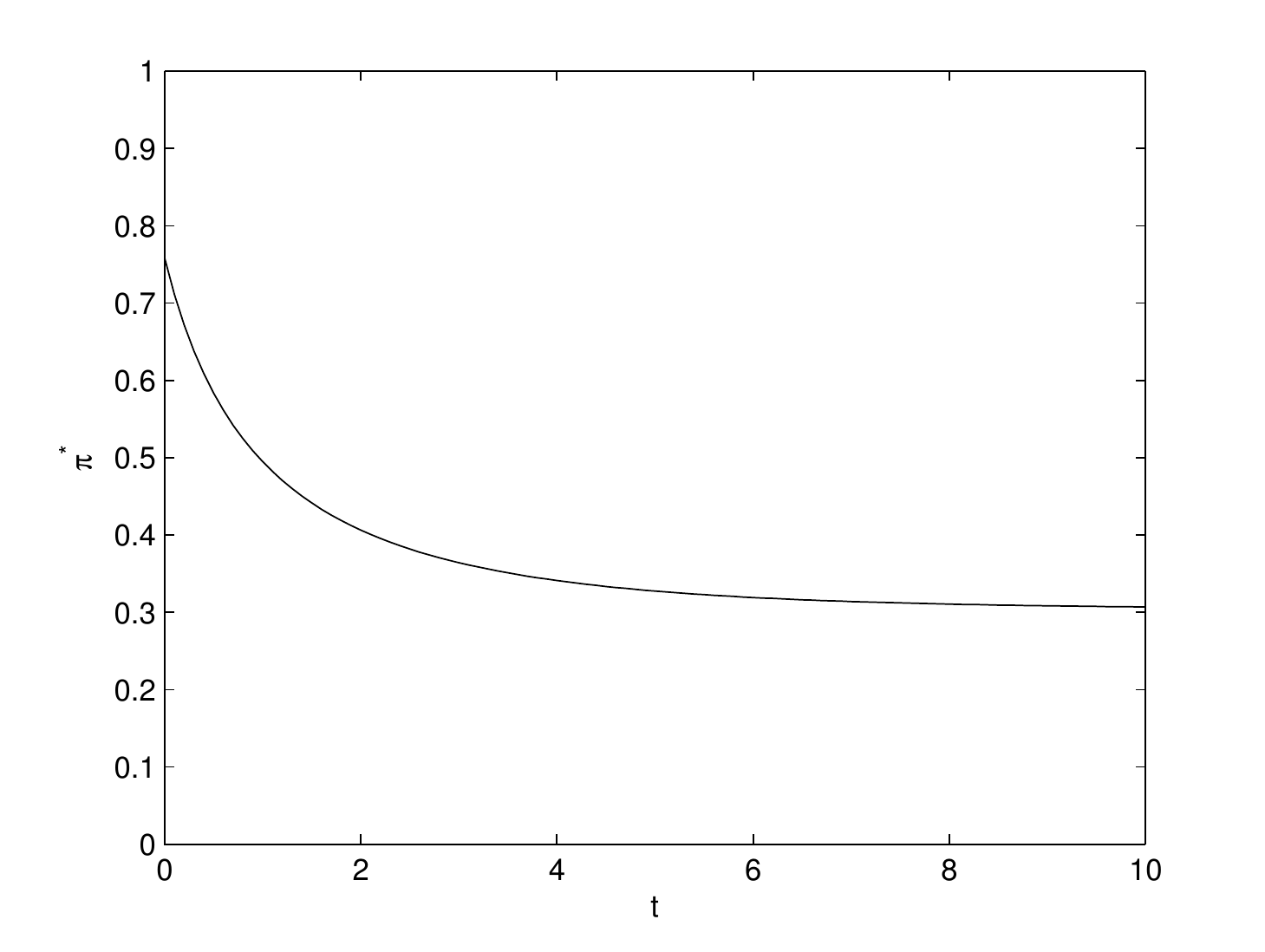}
\caption{The deterministic MFE $\pi^*$ as a function of time $t$.}
\label{pi-star-t}
\end{figure}

Let us then examine the sensitivity of the mean field equilibrium $\pi^*$ w.r.t. the mean recovery speed parameter $\alpha$ in the mean field model. To this end, we choose and fix parameters as in Table~\ref{table:parameters} apart from the parameter $\alpha$, and plot the function of $ \alpha\rightarrow \pi^*_t $ in terms of the parameter $\alpha$ in Figure \ref{pi-star-alpha} on the interval $[0.1, 1]$. For the given $t=3$, we can observe that the equilibrium $\pi^*$ is decreasing in the parameter $\alpha$.   For the chosen parameters, it is easy to see that a larger $\alpha$ leads to a larger default intensity of the risky asset. Consequently, to avoid the higher probability of default, the agent prefers to invest less in the risky asset account. Similarly, we plot in Figure \ref{pi-star-beta} the mean filed equilibrium $\beta\rightarrow \pi_t^*$ as a function in terms of the parameter $\beta$ from the intensity process $\lambda_t^{f,o}$. We choose and fix other parameters as in Table~\ref{table:parameters} apart from the parameter $\beta$. For the given $t=3$, the mean field equilibrium $ \pi_t^*(\beta)$ is decreasing in the parameter $\beta$. We also note that $\beta$ and $\varsigma$ are symmetric in the definition of $\lambda_t^{f,o}$, the sensitivity result of the mean filed equilibrium $\pi_t^*(\varsigma)$ w.r.t. the parameter $\varsigma$ is similar to the case w.r.t. the parameter $\beta$. Again, from the definition of $\lambda_t^{f,o}$, one can see that the intensity value is increasing in terms of $\beta$ and $\varsigma$. Therefore, as $\beta$ or $\varsigma$ increases, the agent invests less in the risky asset due to the higher probability of default.

\begin{figure}[h!]
\centering
\includegraphics[width=0.5\textwidth]{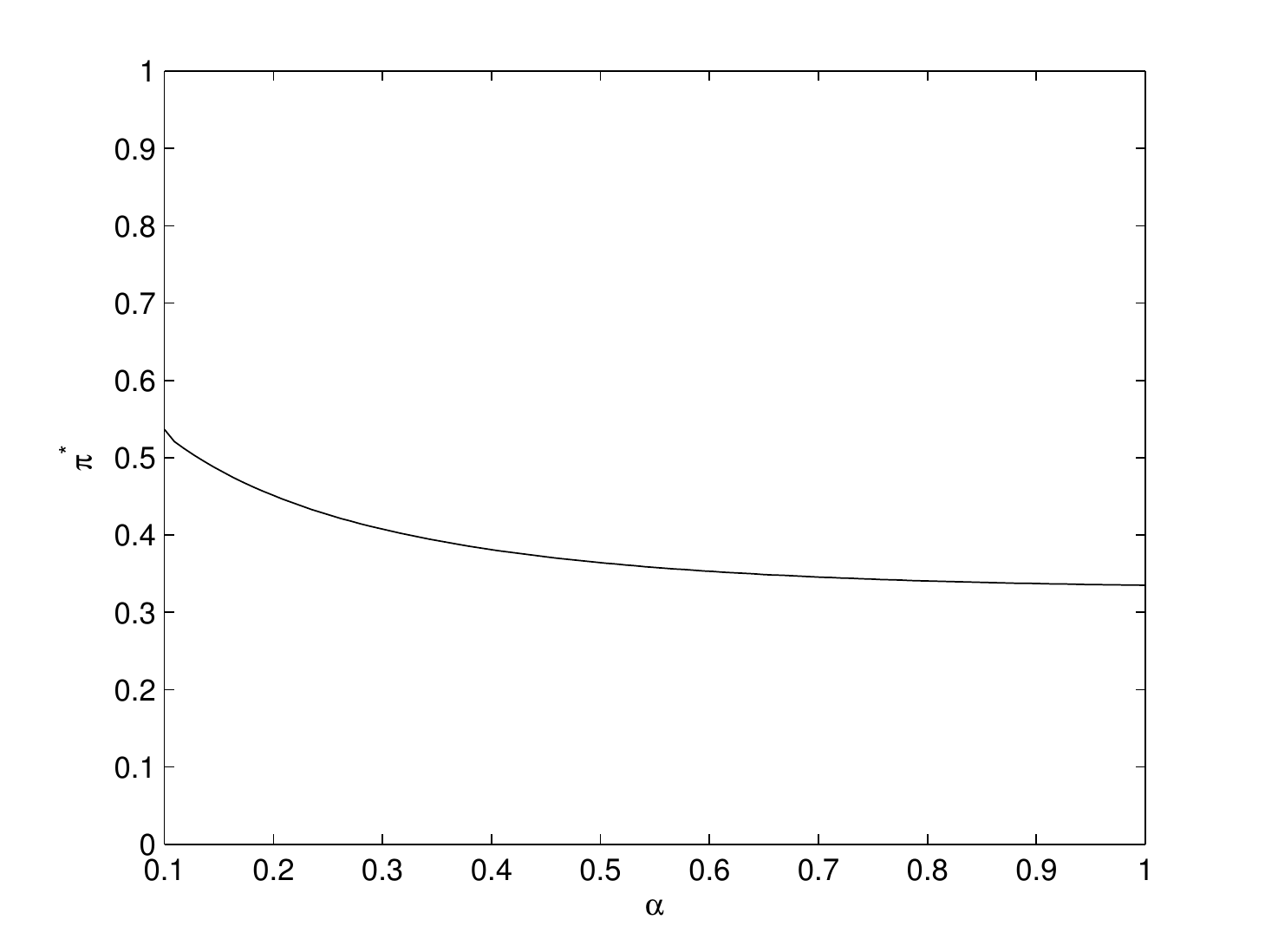}
\caption{{\small The sensitivity of the deterministic MFE $\pi^*$ w.r.t. the parameter $\alpha$.}}
\label{pi-star-alpha}
\end{figure}

\begin{figure}[h!]
\centering
\includegraphics[width=0.5\textwidth]{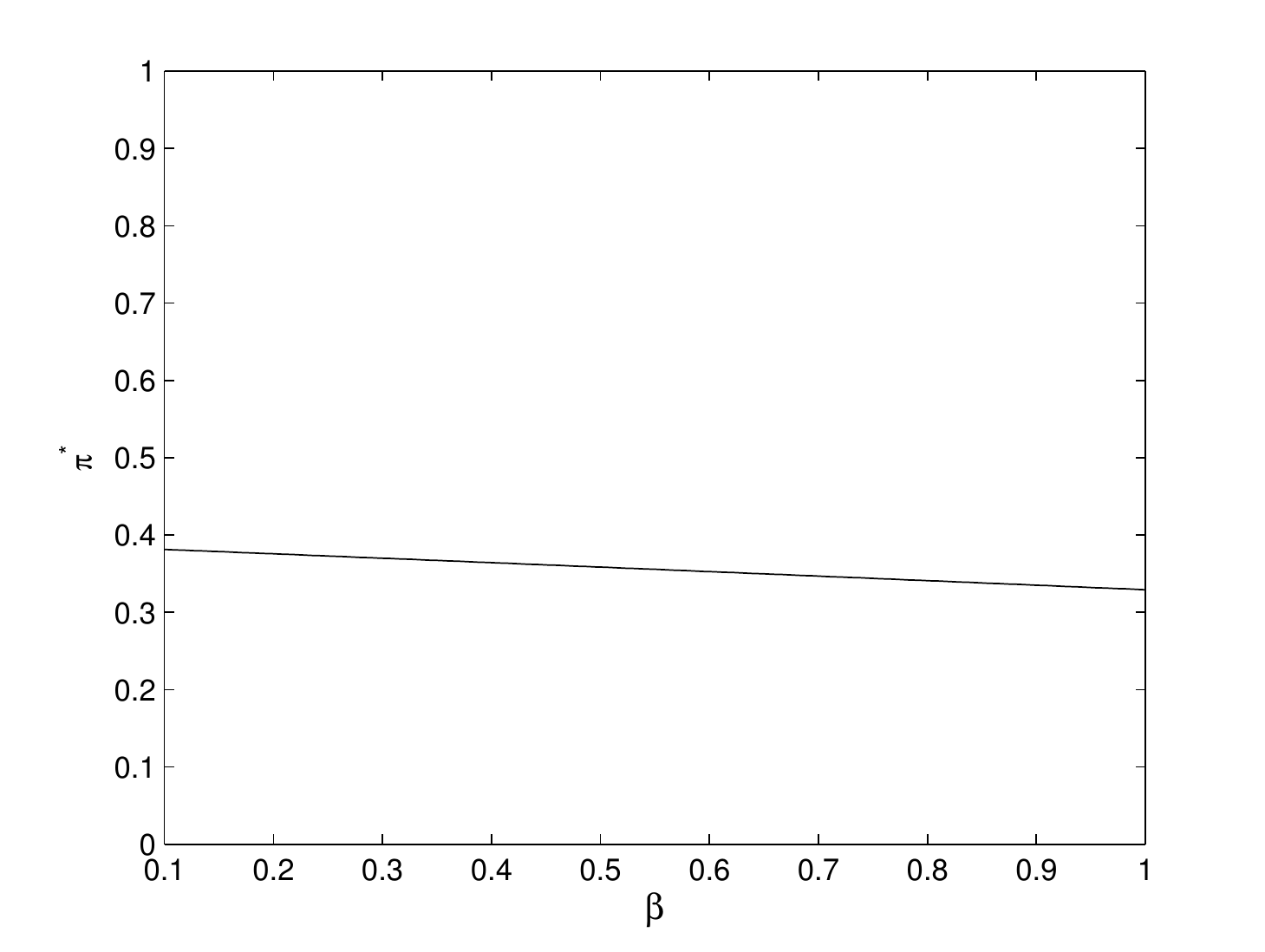}
\caption{{\small The sensitivity of the deterministic MFE $\pi^*$ w.r.t. the parameter $\beta$.}}
\label{pi-star-beta}
\end{figure}

\section{Approximate Nash Equilibrium in the $n$-Player Game}\label{sec:epsilon-Nash-equilibrium}

The goal of this section is to show that the mean field equilibrium obtained in \autoref{thm:optimal-MFG} can help us to construct an approximate Nash equilibrium in the game with a sufficiently large but finite number $n$ of agents. Furthermore, the explicit order of the approximation error can also be derived, which will facilitate the practical implementations  of the mean field approximation in finite population game applications.

Recall that the intensity process $ (\Lambda_t^{f,i,n})_{t\in[0,T]}$ follows the dynamics that
\begin{equation}\label{eq:lambda-i-0}
  \Lambda_t^{f,i,n} = f(\lambda_t^{i,n}),\quad d\lambda_t^{i,n} = \alpha_i(\lambda_{\infty}^i - \lambda_t^{i,n} )dt + \frac{\beta_i}{n}\sum_{j=1}^{n}\varsigma_{j}dN^{j,n}_t, 
  \quad i=1,\ldots,n,
\end{equation}
with the speed parameter $\alpha_i >0$, the mean-reverting level $\lambda_{\infty}^i >0 $, and the jump risk contagion effect parameter $\beta_i\varsigma_j >0$. To avoid the possible ambiguity in notation, we will keep the superscript $n$ in this section. Hereafter, $\Xi_{\Lambda}\subset\R_+$ denotes the state space of $\Lambda^{f,i,n}=(\Lambda_t^{f,i,n})_{t\in[0,T]}$, which is a bounded set and is independent of $(i,n)$ since the jump rate function $f$ is bounded (c.f. \eqref{eq:lambda-i-0}).

Next, before constructing an $\varepsilon$-Nash equilibrium in the $n$-player game, we first introduce an auxiliary problem based on the limiting case, whose best response strategy will help us to construct an approximate Nash equilibrium. Let us define the auxiliary control problem $(\bf{P_n}$) by
\begin{equation}\label{eq:objective-J-bar-n}
 \sup_{\pi^i \in \mathcal{A}^i} \bar{J}_i^n(\pi^i;m^*) :=  \sup_{\pi^i \in \mathcal{A}^i} \Ex\left[\frac{1}{\gamma_i}(X_T^{i,n})^{\gamma_i}(m_T^*)^{-\theta_i\gamma_i} \right], \quad i= 1,\ldots,n,
\end{equation}
subjecting to
\begin{equation}\label{eq:wealth-X-i-P-n}
  \left\{
  \begin{aligned}
  & d X_t^{i,n} = X_t^{i,n}( r + b_i\pi_t^i)dt + \pi_t^i X_{t-}^{i,n}(\sigma_idW_t^i  + \sigma_i^0dW_t^0 - dM_t^{f,i,n}), \\
  & \Lambda_t^{f,i,n} = f(\lambda_t^{i,n}),\quad d\lambda_t^{i,n} = \alpha_i(\lambda_{\infty}^i - \lambda_t^{i,n})dt + \frac{\beta_i}{n}\sum_{j=1}^{n}\varsigma_{j}dN_t^{j,n},\\
  & dm_t^* = m_t^*\eta(t;\pi_t^{*}) dt + \sigma^0 m_t^{*}\pi_t^{*}dW_t^0.
 \end{aligned}
 \right.
\end{equation}
Here, we recall that $m^*=(m_t^*)_{t\in[0,T]}$ is the fixed point given by \eqref{eq:geometric-m-thm}, $\pi^{\ast}=(\pi^*_t)_{t\in[0,T]}\in{\cal A}_{\rm MF}$ is the deterministic MFE characterized by \eqref{eq:pi-mean-limit-thm} in \autoref{thm:optimal-MFG}, and $M_t^{f,i,n} = N_t^{i,n}-\int_0^t\Lambda_s^{f,i,n} ds$ is a $(\Px,\Gx)$-martingale for $i=1,\ldots n$. The following lemma characterizes the optimal strategy of the auxiliary control problem $(\bf{P_n})$. The proof is reported in \ref{sec:proofs}.

\begin{lemma}\label{lem:optimum-Pn}
Let $\tilde{\pi}^{i,n}=\tilde{\pi}^{i,n}(t,{\bm\lambda})\in(-\infty,1)$ for $(t,{\bm\lambda})=(t,(\lambda_1,\ldots,\lambda_n)^{\top})\in[0,T]\times\Xi_{\Lambda}^n$ be the optimal (feedback) strategy of the auxiliary control problem $(\bf{P_n})$. Then, we have that
\begin{align}\label{eq:FOC-22-PP0}
\Phi_i(t,\lambda_i,\tilde{\pi}^{i,n})=  O\left(\frac{1}{n}\right),
\end{align}
where the function $\Phi_i(t,\lambda,\pi): [0,T]\times \Xi_{\Lambda}  \times  (-\infty, 1) \rightarrow\R$ is defined by
\begin{align}\label{eq:Phi-i}
  \Phi_i(t,\lambda,\pi): = (\gamma_i-1)[ \sigma_i^2 + (\sigma_i^0)^2] \pi -  \theta_i\gamma_i\sigma_i^0\sigma^0 \pi_t^* - \lambda(1-\pi )^{\gamma_i -1} + \lambda + b_i,
\end{align}
and $\pi^*=(\pi_t^*)_{t\in[0,T]}$ is the deterministic MFE given in \autoref{thm:optimal-MFG}. Moreover, there exists a pair $(\tilde{D},\tilde{\epsilon})\in\R\times(0,1)$ independent of $(t,{\bm\lambda},i,n)$ such that $\tilde{\pi}^{i,n}=\tilde{\pi}^{i,n}(t,{\bm\lambda})\in[\tilde{D},1-\tilde{\epsilon}]$.
\end{lemma}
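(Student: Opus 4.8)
The plan is to solve the auxiliary problem $(\bf{P_n})$ by the stochastic maximum principle, mirroring the proof of \autoref{thm:optimal-MFG} but now with the \emph{stochastic} intensity vector. For agent $i$ the only controlled state is the wealth $X^{i,n}$, while the intensity factor vector $\bm\lambda=(\lambda^{1,n},\dots,\lambda^{n,n})$ (whose $j$-th component jumps by $\tfrac{\beta_j\varsigma_i}{n}$ each time $N^{i,n}$ fires, so that $\bm\lambda$ jumps by $\tfrac{\varsigma_i}{n}\bm\beta$ with $\bm\beta:=(\beta_1,\dots,\beta_n)$) and the exogenous fixed process $m^*$ are uncontrolled. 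The Hamiltonian
\[
H_i^n(t,x,\pi,p,q^i,q^0,y)=(rx+b_i x\pi)p+\sigma_i\pi x q^i+\sigma_i^0\pi x q^0-\pi x f(\lambda_t^{i,n})\,y
\]
is affine in $\pi$, so the maximality condition reduces to requiring that the coefficient of $\pi$ vanish along the optimum, i.e. $b_iP_t^{i,n}+\sigma_iQ_t^{i,n}+\sigma_i^0Q_t^{0,i,n}-f(\lambda_t^{i,n})Y_t^{i,n}=0$, where $(P^{i,n},Q^{i,n},Q^{0,i,n},Y^{i,n})$ solves the adjoint BSDE with $P_T^{i,n}=(X_T^{i,n})^{\gamma_i-1}(m_T^*)^{-\theta_i\gamma_i}$ and $Y^{i,n}$ is the coefficient of $dM^{f,i,n}$.

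\textbf{Ansatz and first-order condition.} As in \eqref{eq:P-ansatz-limit} I would postulate $P_t^{i,n}=(X_t^{i,n})^{\gamma_i-1}(m_t^*)^{-\theta_i\gamma_i}\,\psi^{i,n}(t,\bm\lambda_t)$ for a deterministic function $\psi^{i,n}$, $C^1$ in $t$, with $\psi^{i,n}(T,\cdot)\equiv1$ (equivalently, the value function of $(\bf{P_n})$ is $\tfrac1{\gamma_i}x^{\gamma_i}m^{-\theta_i\gamma_i}\psi^{i,n}(t,\bm\lambda)$). Itô's formula and matching of the diffusion and jump coefficients give, as in \eqref{eq:Q-Y-Q0-star}, $Q_t^{i,n}=(\gamma_i-1)\sigma_i\tilde\pi_t^{i,n}P_t^{i,n}$ and $Q_t^{0,i,n}=\big[(\gamma_i-1)\sigma_i^0\tilde\pi_t^{i,n}-\theta_i\gamma_i\sigma^0\pi_t^*\big]P_t^{i,n}$ (the $\pi_t^*$ arising from the $W^0$-volatility $\sigma^0m^*\pi^*$ of $m^*$), while the simultaneous jumps of $X^{i,n}$ and $\bm\lambda$ at a firing of $N^{i,n}$ force the nonlocal term
\[
Y_t^{i,n}=P_t^{i,n}\Big[(1-\tilde\pi_t^{i,n})^{\gamma_i-1}\,\tfrac{\psi^{i,n}(t,\bm\lambda_t+\tfrac{\varsigma_i}{n}\bm\beta)}{\psi^{i,n}(t,\bm\lambda_t)}-1\Big].
\]
Inserting these into the maximality condition and cancelling $P_t^{i,n}>0$ yields, with $\lambda_i:=f(\lambda_t^{i,n})=\Lambda_t^{f,i,n}$ and $\Phi_i$ as in \eqref{eq:Phi-i},
\[
\Phi_i(t,\lambda_i,\tilde\pi_t^{i,n})=\lambda_i\,(1-\tilde\pi_t^{i,n})^{\gamma_i-1}\Big[\tfrac{\psi^{i,n}(t,\bm\lambda_t+\tfrac{\varsigma_i}{n}\bm\beta)}{\psi^{i,n}(t,\bm\lambda_t)}-1\Big].
\]
The drift matching determines $\psi^{i,n}$ as the solution of a nonlocal first-order PDE, whose solvability and positivity follow from a Feynman--Kac representation together with the boundedness of $f$, of all coefficients, and of $\pi^*$; the sufficient version of the maximum principle then confirms that the $\tilde\pi^{i,n}$ so constructed is indeed optimal.

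\textbf{The $O(1/n)$ bound and the main difficulty.} It remains to estimate the right-hand side above. Since $\lambda_i\in\Xi_\Lambda$ is bounded (and, using $\bm{(A_O)}$ and positivity of $f$ on $(0,\infty)$, bounded away from $0$) and $\tilde\pi^{i,n}$ is bounded away from $1$ (next paragraph), the prefactor $\lambda_i(1-\tilde\pi^{i,n})^{\gamma_i-1}$ is $O(1)$, and one shows $\psi^{i,n}\ge\underline c>0$ uniformly. The crux is the increment $\psi^{i,n}(t,\bm\lambda+\tfrac{\varsigma_i}{n}\bm\beta)-\psi^{i,n}(t,\bm\lambda)$. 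The key structural fact is that the factor $\lambda^{j,n}$, $j\neq i$, enters agent $i$'s state only through the jump rate of $N^{j,n}$, and every jump of $N^{j,n}$ shifts $\lambda^{i,n}$ only by the $O(1/n)$ amount $\tfrac{\beta_i\varsigma_j}{n}$; this makes the sensitivity $\partial_{\lambda_j}\psi^{i,n}$ of order $1/n$ for $j\neq i$, while $\partial_{\lambda_i}\psi^{i,n}=O(1)$. Hence
\[
\big|\psi^{i,n}(t,\bm\lambda+\tfrac{\varsigma_i}{n}\bm\beta)-\psi^{i,n}(t,\bm\lambda)\big|\ \le\ \tfrac{\varsigma_i\beta_i}{n}\,O(1)+\sum_{j\neq i}\tfrac{\varsigma_i\beta_j}{n}\,O(\tfrac1n)\ =\ O(\tfrac1n),
\]
and dividing by $\psi^{i,n}\ge\underline c$ gives $\Phi_i(t,\lambda_i,\tilde\pi^{i,n})=O(1/n)$, which is \eqref{eq:FOC-22-PP0}. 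I expect the uniform-in-$n$ sensitivity estimate $\partial_{\lambda_j}\psi^{i,n}=O(1/n)$ for $j\neq i$ to be the main obstacle: it should be obtained either by differentiating the nonlocal PDE for $\psi^{i,n}$ in $\lambda_j$ and applying a Gronwall argument — the $1/n$ entering through the coupling coefficient of $\lambda^{j,n}$ into the rest of the system — or, probabilistically, via a synchronous coupling of the state started from $\bm\lambda$ and from $\bm\lambda+\tfrac{\varsigma_i}{n}\bm\beta$ followed by a direct comparison of the two objective values, where one must control that the propagated discrepancy in $\lambda^{i,n}$ stays of order $1/n$.

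\textbf{Uniform bounds on $\tilde\pi^{i,n}$.} As in \autoref{lem:Phi}, for fixed $(t,\lambda_i)$ the map $\pi\mapsto\Phi_i(t,\lambda_i,\pi)$ is strictly decreasing on $(-\infty,1)$ with $|\partial_\pi\Phi_i|\ge(1-\gamma_i)[\sigma_i^2+(\sigma_i^0)^2]$, which is bounded below by a positive constant uniformly in $i$ (by $\gamma_i\in(0,1)$, $\sigma_i,\sigma_i^0$ bounded away from $0$, and $\bm{(A_O)}$); moreover $\Phi_i(t,\lambda_i,\pi)\to+\infty$ as $\pi\downarrow-\infty$ (using $(1-\pi)^{\gamma_i-1}\le1$ there, $b_i$ bounded below, $\theta_i\gamma_i\sigma_i^0\sigma^0\pi_t^*$ bounded above) and $\to-\infty$ as $\pi\uparrow1$ (using $\lambda_i\ge\underline\lambda>0$ and $\gamma_i$ away from $1$), both uniformly in $(t,\lambda_i,i)$. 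Hence the unique zero of $\Phi_i(t,\lambda_i,\cdot)$ lies in a fixed compact interval $[c_1,c_2]\subset(-\infty,1)$ for all $(t,\lambda_i)$ and all large $n$. Since $\tilde\pi^{i,n}$ solves $\Phi_i(t,\lambda_i,\tilde\pi^{i,n})=O(1/n)$ and $|\partial_\pi\Phi_i|$ is bounded away from $0$, $\tilde\pi^{i,n}$ lies within $O(1/n)$ of that zero, hence in a fixed interval $[\tilde D,1-\tilde\epsilon]$ independent of $(t,\bm\lambda,i,n)$ for all large $n$, as claimed.
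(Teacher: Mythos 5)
Your proposal is correct and is essentially the paper's own argument in dual form: the paper writes the HJB equation for $(\bf{P_n})$ with the separable value function $V=\frac{1}{\gamma_i}x^{\gamma_i}m^{-\theta_i\gamma_i}B(t,\bm\lambda)$, derives exactly your first-order condition (its \eqref{eq:FOC-2-P}, with $B$ playing the role of your $\psi^{i,n}$ and $P=\partial_xV$), obtains \eqref{eq:FOC-22-PP0} from $\bigl|B(t,\bm\lambda+\tfrac{\beta_i\varsigma_i}{n}\bm e_i^n)/B(t,\bm\lambda)-1\bigr|\le \tfrac{\beta_i\varsigma_i}{n}\|\partial_{\lambda_i}B\|_\infty/B=O(1/n)$, and then proves the uniform bounds on $\tilde\pi^{i,n}$ by the same monotonicity-plus-uniform-coercivity argument you give. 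The only substantive divergence is the jump of the state: the paper shifts only the $i$-th component of $\bm\lambda$ at a firing of $N^{i,n}$, whereas your full-vector shift $\tfrac{\varsigma_i}{n}\bm\beta$ is the faithful reading of \eqref{eq:lambda-i-0}, and your accounting of the extra $n-1$ components still yields $O(1/n)$; also, the uniform-in-$n$ bound on the $\lambda$-derivatives of the separated function, which you rightly flag as the main obstacle and propose to prove by a Gronwall or coupling argument, is in the paper simply asserted from the cited classical-solvability results rather than proved.
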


Intuitively, it can be seen from \eqref{eq:FOC-22-PP0} in Lemma~\ref{lem:optimum-Pn} that, as $n$ tends to infinity, the optimal feedback strategy $\tilde{\pi}^{i,n}=\tilde{\pi}^{i,n}(t,{\bm\lambda})$ converges to some $\hat{\pi}^i$, where $\hat{\pi}^i$ satisfies $\Phi_i(t,\lambda_i,\hat{\pi}^i)=0$. The following lemma characterizes the zero point of $\pi\mapsto\Phi_i(t,\lambda,\pi)$. The proof is similar to that of Lemma~\ref{lem:Phi}, and we hence omit it.
\begin{lemma}\label{lem:map-Phi-i}
For the function $\Phi_i(t,\lambda,\pi)$ defined by \eqref{eq:Phi-i}, we have that, for any $(t,\lambda)\in[0,T]\times\Xi_{\Lambda}$, there exists a unique $ \hat{\pi} =\hat{\pi}(t,\lambda)\in [D_0,1-\epsilon_0]$ such that $\Phi_i(t,\lambda,\hat{\pi})=0$ where $(D_0,\epsilon_0)\in\R\times(0,1)$ is a pair of constants that are independent of $(t,\lambda,i,n)$. Moreover, there exists a unique continuous function $\phi_i:[0,T]\times\Xi_{\Lambda}\to\R$ such that
\begin{equation}\label{eq:a-ast-lem}
  \hat{\pi} = \phi_i(t,\lambda), \ \ (t,\lambda) \in [0,T]\times\Xi_{\Lambda},
\end{equation}
where $\phi_i$ also has a continuous partial derivative with respect to $\lambda$.
\end{lemma}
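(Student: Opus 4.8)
The plan is to follow the same blueprint as the proof of Lemma~\ref{lem:Phi}, exploiting the one structural simplification that the term $\theta_i\gamma_i\sigma_i^0\sigma^0\pi_t^*$ in \eqref{eq:Phi-i} does \emph{not} involve the variable $\pi$: it is built from the already-determined MFE value $\pi_t^*$ of \autoref{thm:optimal-MFG}, so it only translates the graph of $\pi\mapsto\Phi_i(t,\lambda,\pi)$ vertically by a bounded amount. Hence the $\pi$-dependence of $\Phi_i$ is, up to a constant shift, of exactly the same shape as that of the function $\Phi$ in Lemma~\ref{lem:Phi}, and existence, uniqueness and regularity of the root can be obtained by the same three moves.

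First I would fix $(t,\lambda)\in[0,T]\times\Xi_{\Lambda}$ and set $g(\pi):=\Phi_i(t,\lambda,\pi)$ on $(-\infty,1)$. Differentiating gives $\partial_\pi g(\pi)=(\gamma_i-1)[\sigma_i^2+(\sigma_i^0)^2]+\lambda(\gamma_i-1)(1-\pi)^{\gamma_i-2}$, and since $\gamma_i\in(0,1)$, $\lambda\geq 0$ and $1-\pi>0$, both summands are nonpositive with the first strictly negative; thus $g$ is continuous and strictly decreasing. Arguing exactly as for $\Phi$ in Lemma~\ref{lem:Phi}, one checks that $g(\pi)\to+\infty$ as $\pi\to-\infty$ and $g(\pi)\to-\infty$ as $\pi\to 1^-$ (the latter because $-\lambda(1-\pi)^{\gamma_i-1}+\lambda$ is the dominant term when $\lambda>0$, the degenerate case $\lambda=0$ being handled as in Lemma~\ref{lem:Phi}). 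The intermediate value theorem then produces a unique $\hat{\pi}=\hat{\pi}(t,\lambda)\in(-\infty,1)$ with $\Phi_i(t,\lambda,\hat{\pi})=0$.

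Next I would sharpen the location of $\hat{\pi}$ to a band $[D_0,1-\epsilon_0]$ with $(D_0,\epsilon_0)$ independent of $(t,\lambda,i,n)$. By Assumption $\bm{(A_O)}$ the parameters $(b_i,\sigma_i,\sigma_i^0,\gamma_i,\theta_i)$ converge to $o_1$ and therefore stay in a fixed compact subset of $\mathcal{O}_1$; the set $\Xi_{\Lambda}$ is bounded; and $\pi_t^*\in(0,1-\epsilon_0]$ is bounded by \autoref{thm:optimal-MFG}. Feeding these uniform bounds into the monotone function $g$ lets one choose $\epsilon_0\in(0,1)$ small and $D_0\in\R$ so that $\Phi_i(t,\lambda,1-\epsilon_0)\leq 0\leq\Phi_i(t,\lambda,D_0)$ for every $(t,\lambda,i,n)$, and strict monotonicity then forces $\hat{\pi}(t,\lambda)\in[D_0,1-\epsilon_0]$.

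Finally, for the regularity claim I would invoke the implicit function theorem on the open set $[0,T]\times\Xi_{\Lambda}\times(-\infty,1)$: there $\Phi_i$ is $C^1$ in $(\lambda,\pi)$, and it is jointly continuous in $(t,\lambda,\pi)$ since its only $t$-dependence enters through $\pi_t^*=\phi(\lambda_t^{f,o})$, which is continuous in $t$ by the continuity of $t\mapsto\lambda_t^{f,o}$ in \eqref{eq:lambda-MF} and of $\phi$ in Lemma~\ref{lem:Phi}. Because $\partial_\pi\Phi_i(t,\lambda,\hat{\pi})=\partial_\pi g(\hat{\pi})\neq 0$, the theorem yields, together with the global uniqueness from the previous step, a single continuous map $\phi_i:[0,T]\times\Xi_{\Lambda}\to\R$ with $\hat{\pi}=\phi_i(t,\lambda)$ and with a continuous partial derivative in $\lambda$. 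The hard part will be the uniformity in $(i,n)$ in the band estimate: one must verify that the thresholds used to bracket the root can be fixed once and for all, which is precisely where the type convergence of $\bm{(A_O)}$, the boundedness of $\Xi_{\Lambda}$, and the a priori bounds on $\pi^*$ have to be combined carefully (including a short separate treatment near $\lambda=0$); the monotonicity and implicit-function steps are then routine transcriptions of the proof of Lemma~\ref{lem:Phi}.
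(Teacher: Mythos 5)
Your proposal is correct and follows essentially the same route as the paper, which omits the proof of this lemma precisely because it is the argument of Lemma~\ref{lem:Phi} transcribed to $\Phi_i$: strict monotonicity of $\pi\mapsto\Phi_i(t,\lambda,\pi)$, endpoint sign checks for existence and uniqueness of the root, and the implicit function theorem for continuity and the continuous $\lambda$-derivative. Your additional care in deriving the uniform band $[D_0,1-\epsilon_0]$ from $\bm{(A_O)}$, the boundedness of $\Xi_{\Lambda}$ and the a priori bound on $\pi^*$ (needed because the constant shift $-\theta_i\gamma_i\sigma_i^0\sigma^0\pi_t^*$ can make $\Phi_i(t,\lambda,0)$ negative, unlike $\Phi(0,\lambda)=b>0$) mirrors the uniformity argument the paper carries out explicitly in the proof of Lemma~\ref{lem:optimum-Pn}.
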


The above function $\phi_i$ in \eqref{eq:a-ast-lem} also plays an important role in the construction of an approximating Nash equilibrium. More precisely, for $ i=1,\ldots,n$, we consider a strategy of the agent $i$ that
\begin{equation}\label{eq:pi-epsilon-i}
  \pi_t^{*,i,n} := \phi_i(t,\Lambda_t^{f,i,n}),\quad t\in[0,T],
\end{equation}
where $\Lambda^{f,i,n}=(\Lambda_t^{f,i,n})_{t\in[0,T]}$ is given in \eqref{eq:wealth-X-i-P-n}. It follows from Lemma \ref{lem:map-Phi-i} that $\pi_t^{*,i,n} = \phi_i(t,\Lambda_t^{f,i,n}) \in [D_0, 1-\epsilon_0]$ for some $(D_0,\epsilon_0)\in\R\times(0,1)$ in view that $\Lambda^{f,i,n}\in\Xi_{\Lambda}$ is bounded. Then, the corresponding wealth process $ X^{*,i} =  (X_t^{*,i})_{t\in[0,T]} $ of agent $i$ under the strategy $\pi^{\ast,i,n}=(\pi^{*,i,n}_t)_{t\in[0,T]}$ is governed by
 \begin{align}\label{eq:wealth-X-i-ast}
  d X_t^{\ast,i,n}
  & = X_t^{\ast,i,n}( r + b_i\pi_t^{\ast,i,n})dt + \pi_t^{\ast,i,n}X_{t-}^{\ast,i,n}(\sigma_idW_t^i  + \sigma_i^0 dW_t^0  -dM_t^{f,i,n}).
\end{align}

\begin{remark}
Comparing $\Phi_i(t,\lambda,\pi)$ in \eqref{eq:Phi-i} for the $n$-player game and $ \Phi(\pi,\lambda)$ in \eqref{lem:Phi} for the MFG, we note that the approximate Nash equilibrium $\pi_t^{*,i,n}$ in \eqref{eq:pi-epsilon-i} is constructed specifically based on the analytical form of the mean field equilibrium $\pi^*$ when parameters $(b,\sigma,\sigma^0,\gamma)$ in $ \Phi(\pi,\lambda)$ are modified to $(b_i,\sigma_i,\sigma_i^0,\gamma_i)$ and the term $\theta\gamma(\sigma^0)^2\pi $ is replaced by the term $ \theta_i\gamma_i\sigma_i^0\sigma^0 \pi_t^*$ depending on the mean field equilibrium $\pi^*$. That is, we propose a construction of $\pi_t^{*,i,n}$ based on the mean field equilibrium $\pi^*$ both implicitly (via the analytical form of $ \Phi(\pi,\lambda)$ in \eqref{lem:Phi}) and explicitly (via the term $ \theta_i\gamma_i\sigma_i^0\sigma^0 \pi_t^*$ in \eqref{eq:Phi-i}).
\end{remark}

We can now present the main result of this section, which gives the approximate Nash equilibrium for the $n$-player game.
\begin{theorem}\label{thm:epsilon-Nash-equilibrium}
Let the assumption $\bm{(A_{O})}$ hold. Consider the objective function \eqref{eq:objective} for the $i$th agent.
Then $ \bm{\pi}^{\ast,n} = (\pi_t^{\ast,1,n},\ldots, \pi_t^{\ast,n,n})_{t\in[0,T]}$ is an $ \varepsilon_n$-Nash equilibrium, where $\pi^{\ast,i,n}=(\pi_t^{\ast,i,n})_{t\in[0,T]}$ is given by \eqref{eq:pi-epsilon-i} for $i=1,\ldots, n$. That is, we have that
\begin{equation}\label{eq:thm-epsilon-Nash}
   \sup_{\pi^i \in \mathcal{A}^i} J_i(\pi^{i},\pi^{\ast,-i,n}) \leq J_i(\pi^{\ast,i,n}, \pi^{\ast,-i,n}) + \varepsilon_n,
\end{equation}
where $\pi^{\ast,-i,n}:=(\pi^{*,1,n},\ldots,\pi^{*,i-1,n},\pi^{*,i+1,n},\ldots, \pi^{*,n,n})$. Moreover, the order of the error term satisfies $ \varepsilon_n = O(n^{-\frac{1}{4}})$.
\end{theorem}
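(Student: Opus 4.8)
\emph{Proof proposal.} The plan is to bridge the true best-response problem $\sup_{\pi^i}J_i(\pi^i,\pi^{\ast,-i,n})$ and the mean field game through the auxiliary problem $({\bf P_n})$ of \eqref{eq:objective-J-bar-n}, and to bound the three resulting gaps. Fix $i\in\{1,\dots,n\}$ and suppose each $j\neq i$ plays the constructed feedback $\pi^{\ast,j,n}=\phi_j(\cdot,\Lambda^{f,j,n})$; since the Hawkes network $(\lambda^{j,n},N^{j,n})_j$ in \eqref{eq:lambda-i-0} is exogenous to all controls, $X^{\ast,j,n}$ does not depend on $\pi^i$ for $j\neq i$. \textbf{Step 1 (replacing the competition factor by $m^*$).} I would show there exist a constant $C$ and a rate $\delta_n=O(n^{-1/4})$, both independent of $i$, with
\[
 \bigl|J_i(\pi^i,\pi^{\ast,-i,n})-\bar J_i^n(\pi^i;m^*)\bigr|\le \delta_n\qquad\text{for every }\pi^i\in\mathcal{A}^i.
\]
Since agent $i$'s wealth obeys the same SDE in $J_i$ and $\bar J_i^n$, the difference equals $\tfrac1{\gamma_i}\Ex\bigl[(X_T^{i,n})^{\gamma_i}\bigl((\bar X_T)^{-\theta_i\gamma_i}-(m_T^*)^{-\theta_i\gamma_i}\bigr)\bigr]$ with $\bar X_T=(X_T^{i,n})^{1/n}\widetilde X_T^{-i}$ and $\widetilde X_T^{-i}:=(\prod_{j\neq i}X_T^{\ast,j,n})^{1/n}$. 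I would (a) peel off the $O(1/n)$ self-contribution $(X_T^{i,n})^{1/n}$ using $|x^{-\theta_i\gamma_i/n}-1|\le \tfrac{C}{n}|\log x|(1+x^{-1/n})$ together with the uniform bounds $\Ex[X_T^{i,n}]\le x_ie^{(r+b_i)T}$ (valid for every admissible control because $\pi^i<1$ keeps the drift coefficient $\le r+b_i$) and $x^{\gamma_i}\le 1+x$, reducing to the deviation-free term $\widetilde X_T^{-i}$; (b) invoke the quantitative law of large numbers of Step 2; (c) transfer it through $|y^{-s}-z^{-s}|\le s(y^{-s-1}\vee z^{-s-1})|y-z|$ and Hölder, using that $\widetilde X_T^{-i}$ and $m_T^*$ have negative moments of every order uniformly in $n$ (for $\widetilde X_T^{-i}$ by Jensen on the geometric mean, $\Ex[(\widetilde X_T^{-i})^{-p}]\le \Ex[(X_T^{1})^{-p}]$) together with exponential moments of $\log X^{\ast,j,n}$ and $\log m_T^*$ coming from the boundedness of $f$ and of the feedback controls.

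\textbf{Step 2 (the quantitative LLN; main obstacle).} The heart of the proof is $\bigl\|\log\widetilde X_T^{-i}-\log m_T^*\bigr\|_{L^2}=O(n^{-1/4})$. I would split $\log\widetilde X_T^{-i}-\log m_T^*$ into (i) a propagation-of-chaos error, comparing the interacting intensities $\{\lambda^{j,n}\}$ to the deterministic mean field limit $\lambda^l$ of \eqref{eq:lambda-MF} via a synchronous coupling of the point processes $N^{j,n}$ with limiting Poisson processes and a Gr\"onwall argument for the coupled wealth--intensity system; (ii) a conditional law of large numbers given $\F_T^0$ for the decoupled log-wealths, which are conditionally independent as they share only the common noise $W^0$; and (iii) a bias term handled by $\bm{(A_O)}$, since $\tfrac1n\sum_j g(o^j)-g(o)=O(1/n)$ for Lipschitz $g$. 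Parts (ii) and (iii) are $O(n^{-1/2})$; the exponent $1/4$ originates in the jump component of (i): under a synchronous coupling the number of ``discrepancy jumps'' on $[0,T]$ is a counting variable with mean $O(n^{-1/2})$ (it is driven at rate $|f(\lambda^{j,n})-f(\lambda^l)|$), so its $L^2$-norm is only $O(n^{-1/4})$; because the wealth under an arbitrary deviation $\pi^i$ is controlled in $L^1$ but not in higher $L^p$, the Hölder split in Step 1(c) genuinely needs an $L^2$ bound on this error, so the square-root loss cannot be avoided. The delicate work here is the coupling construction, the moment estimates uniform in both $j$ and $n$, and controlling the cross terms between agent $i$'s uncontrolled wealth and the empirical quantities, where one exploits the $1/n$ scaling of the Hawkes interaction and conditioning on $W^0$.

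\textbf{Step 3 (near-optimality of $\pi^{\ast,i,n}$ for $({\bf P_n})$) and chaining.} By Lemma~\ref{lem:optimum-Pn} the optimal feedback $\tilde\pi^{i,n}(t,\bm\lambda)$ of $({\bf P_n})$ satisfies $\Phi_i(t,\lambda_i,\tilde\pi^{i,n})=O(1/n)$, while by construction and Lemma~\ref{lem:map-Phi-i} we have $\Phi_i(t,\Lambda_t^{f,i,n},\pi_t^{\ast,i,n})=0$. Since $\pi\mapsto\Phi_i(t,\lambda,\pi)$ is strictly monotone with $\partial_\pi\Phi_i$ bounded away from zero on the compact range $[\tilde D,1-\tilde\epsilon]$, this yields $\|\tilde\pi^{i,n}-\pi^{\ast,i,n}\|_\infty=O(1/n)$; a Lipschitz-stability estimate for the value of $({\bf P_n})$ with respect to the control (controls uniformly bounded away from $1$, with the moment bounds of Step 1) then gives $\bar J_i^n(\pi^{\ast,i,n};m^*)\ge \sup_{\pi^i}\bar J_i^n(\pi^i;m^*)-O(1/n)$. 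Chaining Step 1 (for an arbitrary $\pi^i$), Step 3, and Step 1 again (for $\pi^i=\pi^{\ast,i,n}$) produces $\sup_{\pi^i}J_i(\pi^i,\pi^{\ast,-i,n})\le J_i(\pi^{\ast,i,n},\pi^{\ast,-i,n})+2\delta_n+O(n^{-1})$, whence $\varepsilon_n=2\delta_n+O(n^{-1})=O(n^{-1/4})$, uniformly in $i$. Steps 1 and 3 are, by comparison, routine once the uniform moment bounds and the coupling of Step 2 are in place; Step 2 is where I expect the real effort to lie.
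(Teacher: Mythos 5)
Your proposal follows the same architecture as the paper's proof: introduce the auxiliary problem $({\bf P_n})$, split $\sup_{\pi^i}J_i(\pi^i,\pi^{\ast,-i,n})-J_i(\bm{\pi}^{\ast,n})$ into (a) the gap between $J_i$ and $\bar J_i^n$ uniformly over deviations, itself decomposed into an $O(1/n)$ self-contribution peel-off plus the law-of-large-numbers comparison of the empirical geometric mean with $m^*$, and (b) the near-optimality of $\pi^{\ast,i,n}$ for $({\bf P_n})$ obtained from Lemma~\ref{lem:optimum-Pn} and the mean value theorem applied to $\Phi_i$ (the paper's estimate \eqref{eq:pio1overn}), then chain. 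The one place where you genuinely diverge is the execution of the quantitative LLN (the paper's Lemma~\ref{lem:auxiliarylemmas}(ii)--(iii)): you propose a synchronous coupling of the point processes with limiting Poisson processes plus Gr\"onwall and conditioning on $W^0$, whereas the paper proves a mean-square Gr\"onwall bound $\sup_t\Ex[|\bar\Lambda_t^{f,n}-\lambda_t^{f,o}|^2]=O(n^{-2})$ for the intensities and then handles the wealths by showing $\Ex[e^{k(\bar Y_t-Z_t)}]=1+O(1/n)$ for the relevant powers $k$ via a generalized H\"older decomposition into idiosyncratic-noise, parameter-heterogeneity and intensity-discrepancy pieces; both routes are viable and land on the same $\Ex[|\bar X_T^{\ast,n}-m_T^*|^2]=O(n^{-1/2})$. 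Two small inaccuracies in your account: the paper locates the $n^{-1/4}$ exponent in two successive Cauchy--Schwarz losses (once inside Lemma~\ref{lem:auxiliarylemmas}(iii), once in the estimate of $I_2^i$), not in the jump component of a coupling; and your claim that the discrepancy-jump count has mean $O(n^{-1/2})$ understates what the intensity estimate gives (the $L^1$ intensity discrepancy is $O(n^{-1})$, so the count has mean $O(n^{-1})$) --- this only improves your bound, so it is not a gap. Your Step~3 rate $O(1/n)$ for the $({\bf P_n})$-optimality gap is more optimistic than the paper's $O(n^{-1/2})$ (obtained by a change of measure), but either is dominated by the $O(n^{-1/4})$ terms, so the final rate is unaffected.
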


To prove \autoref{thm:epsilon-Nash-equilibrium}, we first introduce the following auxiliary results, whose proofs are given in \ref{sec:proofs}.
\begin{lemma}\label{lem:auxiliarylemmas}
Let the assumption $\bm{(A_O)}$ hold. Then, we have that
\begin{itemize}
\item[{\rm(i)}] Consider $\pi^{*,i,n}=(\pi_t^{\ast,i,n})_{t\in[0,T]} $ defined in \eqref{eq:pi-epsilon-i} and $X^{*,i,n}=(X_t^{\ast,i,n})_{t\in[0,T]}$ given by \eqref{eq:wealth-X-i-ast} for $i = 1,\ldots,n$. For any $p\in\R$, there exists a constant $D_p$ independent of $(i,n)$ such that
    \begin{equation}\label{eq:bound-X-i-ast}
  \sup_{t\in[0,T]}\Ex\left[\left|X_t^{\ast,i,n}\right|^p\right] \leq D_p.
\end{equation}
\item[{\rm(ii)}] Let $\bar{\Lambda}^{f,n}_t := \frac{1}{n}\sum_{i=1}^{n} \Lambda_t^{f,i,n}$ for $t\in[0,T]$, and the mean field intensity $\lambda^{f,o}=(\lambda_t^{f,o})_{t\in[0,T]}$ be given by \eqref{eq:lambda-MF}. We have that
    \begin{equation}\label{eq:lem-lambda}
   \sup_{t\in[0,T]}\Ex\left[\left|\bar{\Lambda}_t^{f,n} -\lambda_t^{f,o}\right|^2\right] = O\left(\frac{1}{n^2}\right).
\end{equation}
\item[{\rm(iii)}] Denote by $\bar{X}_t^{*,n}=(\prod_{i=1}^n X_t^{\ast,i,n})^{\frac{1}{n}}$ the geometric mean wealth process at $t\in[0,T]$, where $X^{*,i,n}=(X_t^{\ast,i,n})_{t\in[0,T]}$ is defined in  \eqref{eq:wealth-X-i-ast}, and $m^*=(m_t^{*})_{t\in[0,T]}$ in the limiting model is obtained in \autoref{thm:optimal-MFG}. We have that, for all $t\in [0,T]$,
\begin{equation}\label{eq:lem-bar-X-m}
  \sup_{t\in[0,T]}\Ex\left[\left|\bar{X}_t^{*,n} - m_t^{*}\right|^2\right] = O\left(\frac{1}{\sqrt{n}}\right).
\end{equation}
\end{itemize}
\end{lemma}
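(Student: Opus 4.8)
The plan is to treat the three estimates in order, since (ii) feeds into (iii), and (i) is the engine behind the moment controls needed throughout. For part (i), I would start from the SDE \eqref{eq:wealth-X-i-ast} and write $X_t^{\ast,i,n}$ explicitly via the Doléans-Dade exponential: taking logarithms as in \eqref{eq:log-X-ast} gives $\log X_t^{\ast,i,n}$ as a drift term plus a stochastic integral against $W^i$ and $W^0$ plus $\log(1-\pi_{s-}^{\ast,i,n})\,dN_s^{i,n}$. Because Lemma~\ref{lem:map-Phi-i} guarantees $\pi_t^{\ast,i,n}=\phi_i(t,\Lambda_t^{f,i,n})\in[D_0,1-\epsilon_0]$ with $(D_0,\epsilon_0)$ \emph{independent of $(i,n)$}, the drift coefficients are uniformly bounded and $\log(1-\pi^{\ast,i,n})\in[\log\epsilon_0,\log(1-D_0)]$ is uniformly bounded as well; moreover $N^{i,n}$ has intensity $f(\lambda^{i,n})\le \|f\|_\infty$, so its jumps are dominated by a Poisson process of bounded rate. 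Then $\Ex[|X_t^{\ast,i,n}|^p]=\Ex[\exp(p\log X_t^{\ast,i,n})]$ is handled by separating the exponential martingale part (whose exponential moments are controlled uniformly because the integrands are bounded) from the compensated-jump part (whose exponential moments over a finite horizon with bounded intensity are finite and $(i,n)$-uniform), e.g.\ via a Cauchy--Schwarz split and the elementary bound $\Ex[e^{c N_T}]\le \exp(\|f\|_\infty T(e^{c}-1))$. This yields \eqref{eq:bound-X-i-ast} with $D_p$ depending only on $p,T$ and the uniform bounds, not on $(i,n)$.

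For part (ii), I would subtract the dynamics: $\bar\Lambda_t^{f,n}=\frac1n\sum_i f(\lambda_t^{i,n})$ and $\lambda_t^{f,o}=f(\lambda_t^l)$ with $\lambda_t^l$ solving \eqref{eq:lambda-MF}. Using the Lipschitz property of $f$, it suffices to control $\frac1n\sum_i\Ex[|\lambda_t^{i,n}-\lambda_t^l|^2]$ (or an averaged version). From \eqref{eq:lambda-i-0}, $\lambda_t^{i,n}=\lambda_0^i+\int_0^t\alpha_i(\lambda_\infty^i-\lambda_s^{i,n})\,ds+\frac{\beta_i}{n}\sum_j\varsigma_j N_t^{j,n}$, and I would compare this with the mean field dynamics by (a) replacing $\frac1n\sum_j\varsigma_j dN_t^{j,n}$ with its compensator $\frac1n\sum_j\varsigma_j f(\lambda_t^{j,n})\,dt$ at the cost of a martingale whose $L^2$-norm is $O(n^{-1/2})$ by orthogonality of the compensated jump martingales (the $\frac1n$ prefactor and $n$ independent-increment-type terms give variance $O(1/n)$ — this is the heart of the propagation-of-chaos estimate), and (b) using assumption $\bm{(A_O)}$, which gives $o^i\to o$ at rate $O(n^{-1})$ so that the parameter discrepancies $|\alpha_i-\alpha|$, $|\beta_i\varsigma_i-\beta\varsigma|$, $|\lambda_0^i-\lambda_0|$, $|\lambda_\infty^i-\lambda_\infty|$ averaged over $i$ contribute $O(n^{-1})$. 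A Grönwall argument on $\sup_{s\le t}\Ex[|\lambda_s^{i,n}-\lambda_s^l|^2]$ (averaged over $i$) then closes the bound at the stated $O(n^{-2})$; the boundedness of $f$ keeps all the constants in Grönwall uniform. Actually, I should be careful: the stated rate $O(n^{-2})$ (rather than $O(n^{-1})$) suggests the martingale term's second moment is $O(n^{-2})$, which holds precisely because each $\frac{\beta_i\varsigma_j}{n}$ is $O(1/n)$ and there are $n$ of them with bounded predictable quadratic variation, so $\Ex[(\text{mart})^2]=\sum_j O(1/n^2)\cdot O(1)=O(1/n)$ — giving $O(n^{-1})$, not $O(n^{-2})$. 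I would re-examine this: likely the $O(n^{-2})$ is correct only after using the $O(n^{-1})$ rate in $\bm{(A_O)}$ \emph{and} the fact that the deterministic limit $\lambda^{f,o}$ absorbs the mean of the jump term, so that what remains is a genuinely $O(1/n)$-variance fluctuation whose contribution to $\Ex[|\lambda_t^{i,n}-\lambda_t^l|^2]$ is $O(1/n)$; I would then reconcile the exponent with the authors' convention, and if needed record $O(1/n)$ as the honest rate while noting it still suffices for part (iii).

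For part (iii), write $\log\bar X_t^{*,n}=\frac1n\sum_i\log X_t^{\ast,i,n}$ and $\log m_t^*=\Ex[\log X_t^{\ast}\mid\F_t^0]$ from \eqref{eq:geometric-m-thm}. Using \eqref{eq:log-X-ast} for each $i$ and the mean field analogue, the difference $\log\bar X_t^{*,n}-\log m_t^*$ splits into: a drift discrepancy controlled by part (ii) (Lipschitz dependence of $\eta$ and of $\pi^{\ast,i,n}=\phi_i(t,\Lambda^{f,i,n})$ on the intensities, plus the $O(n^{-1})$ parameter convergence from $\bm{(A_O)}$), a martingale term $\frac1n\sum_i\int_0^t\pi_s^{\ast,i,n}\sigma_i\,dW_s^i$ which has variance $O(1/n)$ by independence of the $W^i$, a compensated-jump term $\frac1n\sum_i\int_0^t\log(1-\pi_{s-}^{\ast,i,n})\,dM_s^{f,i,n}$ also of variance $O(1/n)$ by orthogonality, and a common-noise term $\frac1n\sum_i\int_0^t\pi_s^{\ast,i,n}\sigma_i^0\,dW_s^0$ which does \emph{not} vanish but converges to $\int_0^t\pi_s^*\sigma^0\,dW_s^0$ (this is exactly why $\sigma^0\pi^*$ appears, not an average) with $L^2$-error governed again by part (ii) and $\bm{(A_O)}$. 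This gives $\sup_t\Ex[|\log\bar X_t^{*,n}-\log m_t^*|^2]=O(n^{-1})$; then I would pass from $\log$ to the processes themselves via the inequality $|e^a-e^b|\le (e^a+e^b)|a-b|$, Cauchy--Schwarz, and part (i) together with the uniform moment bound on $m_t^*$ (finite exponential moments of the Gaussian integral in \eqref{eq:geometric-m-thm}), which costs a square root and produces the stated $O(n^{-1/2})$. \textbf{The main obstacle} is bookkeeping the propagation-of-chaos estimate in part (ii) cleanly — in particular getting the martingale-variance exponent right and threading the $O(n^{-1})$ parameter-convergence rate from $\bm{(A_O)}$ through the Grönwall loop uniformly in $(i,n)$ — since parts (i) and (iii) are then largely routine given uniform two-sided bounds on the feedback controls and the boundedness of $f$.
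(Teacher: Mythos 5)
Your proposal is essentially correct, and for parts (i) and (ii) it follows the same route as the paper: in (i) the paper also reduces everything to the uniform two-sided bound $\pi^{*,i,n}\in[D_0,1-\epsilon_0]$ from Lemma \ref{lem:map-Phi-i} and the boundedness of $f$, though it packages the exponential-moment control via a Girsanov change of measure (the density $\hat L$ in \eqref{eq:Girsanov-L-2}) rather than your Cauchy--Schwarz split with the Poisson bound $\Ex[e^{cN_T}]\le\exp(\|f\|_\infty T(e^c-1))$ --- both work; in (ii) the paper applies It\^o to $(\lambda^{i,n}_t-\lambda^l_t)^2$, averages over $i$, and closes with Gr\"onwall exactly as you describe. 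For (iii) you take a genuinely different path: you bound $\Ex[|\log\bar X^{*,n}_t-\log m^*_t|^2]$ and then pass to the processes via $|e^a-e^b|\le(e^a+e^b)|a-b|$ and the moment bounds from (i), whereas the paper works directly with exponential moments, introducing auxiliary log-wealths $\hat Y^i$ and $\underline Y^i$, applying a generalized H\"older inequality, and exploiting the algebraic cancellation $1+4+1-4+2-4=0$ in the expansion of $(e^{2(\bar Y-Z)}-2e^{\bar Y-Z}+1)^2$ so that only the $O(\cdot)$ corrections to $\Ex[e^{k(\bar Y_t-Z_t)}]=1+O(\cdot)$ survive. Your route is more transparent but requires a fourth (or slightly-higher-than-second) moment of the log-difference to execute the final H\"older step cleanly; the paper's route avoids that at the cost of having to verify the $1+O(\cdot)$ estimate for several powers $k$. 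Finally, your hesitation about the rate in (ii) is well founded and is in fact a weak point of the paper's own proof: the discarded term $\sum_{j=1}^n(\beta_i\varsigma_j/n)^2\,\Ex[\Lambda_s^{f,j,n}]$ consists of $n$ summands of size $O(n^{-2})$ and is therefore $O(n^{-1})$, not the $O(n^{-2})$ the paper records; moreover, since all $\lambda^{i,n}$ are driven by the \emph{same} aggregate jump process, averaging over $i$ does not reduce this fluctuation. The honest rate $O(n^{-1})$ in \eqref{eq:lem-lambda} still yields convergence in (iii) and in \autoref{thm:epsilon-Nash-equilibrium}, but, if tracked precisely, it degrades the exponents downstream, so your instinct to record the weaker rate and check that it suffices is the right one.
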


Based on all previous preparations, we can give the proof of \autoref{thm:epsilon-Nash-equilibrium}.
\begin{proof}[Proof of \autoref{thm:epsilon-Nash-equilibrium}]
Recall that $X^{*,i,n}=(X_t^{*,i,n})_{t\in[0,T]}$ is defined by \eqref{eq:wealth-X-i-ast}. For ease of presentation, let us denote
\begin{equation*}
  \bar{X}^{\ast,-i,n}_t := \left(X_t^{i,n} \prod_{j\neq i} X^{\ast,j,n}\right)^{\frac{1}{n}},\quad
 \bar{X}^{\ast, n}_t := \left(\prod_{j=1}^n X^{\ast,j,n}\right)^{\frac{1}{n}},\quad t\in[0,T].
\end{equation*}
Here, $X^{i,n}=(X_t^{i,n})_{t\in[0,T]}$ satisfies the wealth dynamics \eqref{eq:wealth-dynamics-i} under the strategy $\pi^{i}=(\pi_t^{i})_{t\in[0,T]}\in{\cal A}^i$. We recall the objective functional $J_i$ defined by \eqref{eq:objective}. Then, we have that
\begin{equation*}
    J_i(\pi^{i},\pi^{\ast,-i,n}) = \Ex\left[\frac{1}{\gamma_i}\left(X_T^{i,n}\right)^{\gamma_i} \left(\bar{X}_T^{\ast,-i,n}\right)^{-\theta_i\gamma_i} \right],\quad J_i({\bm\pi}^{*,n}) = \Ex\left[\frac{1}{\gamma_i}\left(X_T^{\ast,i,n}\right)^{\gamma_i} \left(\bar{X}_T^{\ast,n}\right)^{-\theta_i\gamma_i} \right].
\end{equation*}

Next, we proceed to prove \eqref{eq:thm-epsilon-Nash} by using the introduced auxiliary problem $(\bf{P_n})$ in \eqref{eq:objective-J-bar-n}-\eqref{eq:wealth-X-i-P-n}. Note that
\begin{align}\label{eq:J-J}
 & \sup_{\pi^i \in \mathcal{A}^i} J_i\left(\pi^{i},\pi^{\ast,-i,n}\right) - J_i(\pi^{*,i,n}, \pi^{\ast,-i,n})\notag\\
 &\qquad =  \left(\sup_{\pi^i \in \mathcal{A}^i} J_i(\pi^{i},\pi^{\ast,-i,n}) - \sup_{\pi^i \in \mathcal{A}^i}\bar{J}_i^n(\pi^i;m^*) \right)  + \left(\sup_{\pi^i \in \mathcal{A}^i} \bar{J}_i^n(\pi^i;m^*)  - J_i(\pi^{*,i,n}, \pi^{*,-i,n})\right)\notag\\
  &\qquad\leq \sup_{\pi^i \in \mathcal{A}^i} \left(J_i(\pi^{i},\pi^{\ast,-i,n}) -\bar{J}_i^n(\pi^i;m^*)\right)  + \left(\sup_{\pi^i \in \mathcal{A}^i} \bar{J}_i^n(\pi^i;m^*)  - J_i(\pi^{*,i,n}, \pi^{*,-i,n})\right).
\end{align}
For the first term of RHS of \eqref{eq:J-J}, we have that
\begin{align*}
& J_i(\pi^{i},\pi^{\ast,-i,n}) -\bar{J}_i^n(\pi^{i};m^*)= \Ex\left[\frac{1}{\gamma_i}\left(X_T^{i,n}\right)^{\gamma_i} \left(\bar{X}_T^{\ast,-i,n}\right)^{-\theta_i\gamma_i} \right]
 -  \Ex\left[\frac{1}{\gamma_i}\left(X_T^{i,n}\right)^{\gamma_i} (m_T^*)^{-\theta_i\gamma_i} \right]\\
&\qquad \leq \left\{\Ex\left[\frac{1}{\gamma_i}(X_T^{i,n})^{\gamma_i} (\bar{X}_T^{\ast,-i,n})^{-\theta_i\gamma_i} \right]
 -  \Ex\left[\frac{1}{\gamma_i}(X_T^{i,n})^{\gamma_i}(\bar{X}_T^{\ast,n})^{-\theta_i\gamma_i} \right]\right\}\\
 & \qquad\quad + \left\{\Ex\left[\frac{1}{\gamma_i}\left(X_T^{i,n}\right)^{\gamma_i} \left(\bar{X}_T^{\ast,n}\right)^{-\theta_i\gamma_i} \right]
 -  \Ex\left[\frac{1}{\gamma_i} \left(X_T^{i,n}\right)^{\gamma_i} (m_T^*)^{-\theta_i\gamma_i} \right]\right\}:= I_1^i + I_2^i.
\end{align*}
For the term $I_1^i$, we deduce that
\begin{align*}
 &\Ex\left[\frac{1}{\gamma_i}\left(X_T^{i,n}\right)^{\gamma_i} \left(\bar{X}_T^{\ast,-i,n}\right)^{-\theta_i\gamma_i} \right]
 -\Ex\left[\frac{1}{\gamma_i}\left(X_T^{i,n}\right)^{\gamma_i} \left(\bar{X}_T^{\ast,n}\right)^{-\theta_i\gamma_i} \right]\nonumber\\
 &\qquad
 = \Ex\left[\frac{1}{\gamma_i}\left(X_T^{i,n}\right)^{\gamma_i} \left((\bar{X}_T^{\ast,-i,n})^{-\theta_i\gamma_i}
 -\left(\bar{X}_T^{\ast,n}\right)^{-\theta_i\gamma_i} \right)\right].
\end{align*}
Note that $\gamma_i\in (0,1)$ implies that $p_i := \theta_i\gamma_i \in[0,1)$. Using the inequality $|a^{p_i}- b^{p_i}| \leq p_i |a-b|\max\{a^{p_i-1} ,b^{p_i-1}\}$ for all $a,b >0$, we can derive that
\begin{align*}
 &\left|\left(\bar{X}_T^{\ast,-i,n}\right)^{-\theta_i\gamma_i} -\left(\bar{X}_T^{\ast,n}\right)^{-\theta_i\gamma_i} \right|
 = \frac{1}{(\bar{X}_T^{\ast,-i,n})^{p_i} \left(\bar{X}_T^{\ast,n}\right)^{p_i}} \left| \left(\bar{X}_T^{\ast,n}\right)^{p_i} - \left(\bar{X}_T^{\ast,-i,n}\right)^{p_i}\right| \nonumber\\
 &\qquad= \frac{\left(\prod_{j\neq i}X_T^{\ast,j}\right)^{\frac{p_i}{n}}}{\left(\bar{X}_T^{\ast,-i,n}\right)^{p_i} \left(\bar{X}_T^{\ast,n}\right)^{p_i}} \left|\left(X_T^{\ast,i,n}\right)^{\frac{p_i}{n}} - \left(X_T^{i,n}\right)^{\frac{p_i}{n}}\right| \\
 &\qquad \leq \frac{p_i}{n}\frac{\left(\prod_{j\neq i}X_T^{\ast,j,n}\right)^{\frac{p_i}{n}}}{\left(\bar{X}_T^{\ast,-i,n}\right)^{p_i} \left(\bar{X}_T^{\ast,n}\right)^{p_i}} \left|X_T^{\ast,i,n}- X_T^{i,n}\right|  \max\left\{\left(X_T^{\ast,i,n}\right)^{\frac{p_i}{n}-1} , \left(X_T^{i,n}\right)^{\frac{p_i}{n}-1}\right\}.
\end{align*}
We only give the detailed proof on the event $\{X_T^{i,n} \leq X_T^{\ast,i,n}\}$, and skip similar arguments on the event $\{X_T^{i,n} > X_T^{\ast,i,n}\}$. It follows from $p_i/n\in(0,1)$ that
\begin{align*}
 &\left|\left(\bar{X}_T^{\ast,-i,n}\right)^{-\theta_i\gamma_i} -\left(\bar{X}_T^{\ast,n}\right)^{-\theta_i\gamma_i} \right|\nonumber\\
 &\qquad\leq \frac{p_i}{n}\frac{\left(\prod_{j\neq i}X_T^{\ast,j,n}\right)^{\frac{p_i}{n}}}{\left(\bar{X}_T^{\ast,-i,n}\right)^{p_i} \left(\bar{X}_T^{\ast,n}\right)^{p_i}} \left|X_T^{\ast,i,n}- X_T^{i,n}\right|  \max\left\{\left(X_T^{\ast,i,n}\right)^{\frac{p_i}{n}-1} , \left(X_T^{i,n}\right)^{\frac{p_i}{n}-1}\right\} \\
 &\qquad = \frac{p_i}{n}\frac{\left(\prod_{j\neq i}X_T^{\ast,j,n}\right)^{\frac{p_i}{n}} \left(X_T^{i,n}\right)^{\frac{p_i}{n}-1}  }{\left(\bar{X}_T^{\ast,-i,n}\right)^{p_i} \left(\bar{X}_T^{\ast,n}\right)^{p_i}} \left(X_T^{\ast,i,n}- X_T^{i,n}\right) \\
 &\qquad \leq \frac{p_i}{n}\frac{X_T^{\ast,i,n}}{\left(\bar{X}_T^{\ast,n}\right)^{p_i}X_T^{i,n}} .
\end{align*}
This yields that
\begin{align*}
&\Ex\left[\frac{1}{\gamma_i}\left(X_T^{i,n}\right)^{\gamma_i} \left(\bar{X}_T^{\ast,-i,n}\right)^{-\theta_i\gamma_i}\right]
 -\Ex\left[\frac{1}{\gamma_i}\left(X_T^{i,n}\right)^{\gamma_i} \left(\bar{X}_T^{\ast,n}\right)^{-\theta_i\gamma_i} \right]\\
 &\qquad \leq \Ex\left[ \frac{1}{\gamma_i}\left(X_T^{i,n}\right)^{\gamma_i} \frac{p_i}{n} \frac{1}{\left(\bar{X}_T^{\ast,n}\right)^{p_i} } \frac{X_T^{\ast,i,n}}{X_T^{i,n}} \right]= \frac{p_i}{\gamma_i}\frac{1}{n} \Ex\left[X_T^{\ast,i,n}\left(X_T^{i,n}\right)^{\gamma_i-1} \left(\bar{X}_T^{\ast,n}\right)^{-p_i} \right].
\end{align*}
For $j=1,\ldots,n$, let $q_j = 2n$. Then $\frac{1}{4} + \frac{1}{4}+ \sum_{j=1}^nq_j^{-1} =1$. We have from the H\"{o}lder's inequality and the estimate \eqref{eq:bound-X-i-ast} in Lemma \ref{lem:auxiliarylemmas} that
\begin{align}\label{eq:moment-bounded}
 &\Ex\left[ X_T^{\ast,i,n} \left(X_T^{i,n}\right)^{\gamma_i-1} \left(\prod_{j=1}^n X_T^{\ast,j,n}\right)^{-\frac{p_i}{n}} \right]\nonumber \\
 &\qquad\leq \Ex\left[\left(X_T^{*,i,n}\right)^{4} \right]^{\frac{1}{4}} \Ex\left[\left(X_T^{i,n}\right)^{4(\gamma_i-1)} \right]^{\frac{1}{4}}  \prod_{j=1}^{n} \Ex\left[\left(X_T^{*,j,n}\right)^{-\frac{q_jp_i}{n}} \right]^{\frac{1}{q_j}}   \nonumber\\
 &\qquad = \Ex\left[\left(X_T^{*,i,n}\right)^{4} \right]^{\frac{1}{4}} \Ex\left[\left(X_T^{i,n}\right)^{4(\gamma_i-1)} \right]^{\frac{1}{4}}  \prod_{j=1}^{n} \Ex\left[\left(X_T^{*,j,n}\right)^{-2p_i} \right]^{\frac{1}{2n}}     \nonumber \\
 &\qquad \leq \left\{D_4 D_{4(\gamma_i-1)}\right\}^{\frac{1}{4}}\prod_{j=1}^{n} \left\{(D_{-2p_i}\right\}^{\frac{1}{2n}}
  =  \left\{D_4 D_{4(\gamma_i-1)}\right\}^{\frac{1}{4}}\left\{D_{-2p_i}\right\}^{\frac{1}{2}}:=D,
\end{align}
where the constant $D>0$ is independent of $n$. Thus we have that
\begin{align}\label{eq:I-i-1}
  I_1^i &=  \Ex\left[ \frac{1}{\gamma_i} \left(X_T^{i,n}\right)^{\gamma_i} \left(\bar{X}_T^{\ast,-i,n}\right)^{-\theta_i\gamma_i} \right]
 -\Ex\left[ \frac{1}{\gamma_i}\left(X_T^{i,n}\right)^{\gamma_i}\left(\bar{X}_T^{\ast,n}\right)^{-\theta_i\gamma_i} \right] \leq \frac{1}{\gamma_i} \frac{p_iD}{n} = O\left(\frac{1}{n}\right).
\end{align}
Similarly, for the term $I_2^i$, we can apply H\"{o}lder inequality and the estimate \eqref{eq:lem-bar-X-m} in Lemma \ref{lem:auxiliarylemmas} to get that, for some constant $C>0$ independent of $n$,
\begin{align}\label{eq:I-i-2}
 & \Ex\left[ \frac{1}{\gamma_i}\left(X_T^{i,n}\right)^{\gamma_i} \left(\bar{X}_T^{\ast,n}\right)^{-\theta_i \gamma_i}\right]
 -  \Ex\left[ \frac{1}{\gamma_i}\left(X_T^{i,n}\right)^{\gamma_i} \left(m_T^*\right)^{-\theta_i \gamma_i } \right] \\
 &\qquad \leq \frac{1}{\gamma_i} \Ex\left[ \left(X_T^{i,n}\right)^{\gamma_i}
 \left(\left(\bar{X}_T^{\ast,n}\right)^{-p_i} - \left(m_T^*\right)^{-p_i}\right) \right] \nonumber\\
 &\qquad \leq \frac{1}{\gamma_i} \Ex\left[\left(X_T^{i,n}\right)^{\gamma_i } \frac{1}{ \left(m_T^*\right)^{p_i}\left(\bar{X}_T^{*,n}\right)^{p_i}} \left[ \left(m_T^*\right)^{p_i} - \left(\bar{X}_T^{\ast,n}\right)^{p_i} \right] \right] \nonumber \\
 &\qquad\leq \frac{1}{\gamma_i}\Ex\left[\left(X_T^{i,n}\right)^{\gamma_i} \frac{1}{\left(m_T^*\right)^{p_i}\left(\bar{X}_T^{\ast,n}\right)^{p_i}} p_i\left| m_T^* - \bar{X}_T^{*,n}\right|\max\left\{\left(m_T^*\right)^{p_i-1}, \left(\bar{X}_T^{*,n}\right)^{p_i-1}\right\}\right]\nonumber\\
 &\qquad\leq C \left\{\Ex\left[\left|m_T^* - \bar{X}_T^{*,n}\right|^2\right]\right\}^{\frac{1}{2}}= O\left(n^{-\frac{1}{4}}\right). \nonumber
\end{align}
For the second term of r.h.s. of \eqref{eq:J-J}, we can follow similar argument in proving \eqref{eq:I-i-2} to get that
\begin{equation}\label{eq:J-bar-J-1}
  \bar{J}_i^n(\pi^{*,i,n};m^*)- J_i(\pi^{*,i,n}, \pi^{*,-i,n}) = O\left(n^{-\frac{1}{4}}\right).
\end{equation}

We next claim that
\begin{align}\label{eq:Jn-bar-Jn}
 &\sup_{\pi^i\in \mathcal{A}^i}\bar{J}_i^n(\pi^i; m^*) - \bar{J}_i^n(\pi^{*,i,n};m^*) = O\left(n^{-\frac{1}{2}}\right),
\end{align}
where we recall the definition of $\pi^{*,i,n}=(\pi_t^{*,i,n})_{t\in[0,T]}$ in \eqref{eq:pi-epsilon-i}. It follows from \eqref{eq:pi-epsilon-i} that, $\Px$-a.s.
\begin{equation*}\label{eq:2}
 (\gamma_i-1)\left[\sigma_i^2 + (\sigma_i^0)^2\right]\pi_t^{*,i,n} -\theta_i\gamma_i\sigma_i^0\sigma^0\pi_t^*
  - \Lambda_t^{f,i,n}(1-\pi_t^{*,i,n})^{\gamma_i-1} + \Lambda_t^{f,i,n} + b_i = 0 .
\end{equation*}
Recall that the best response solution $\tilde{\pi}^{i,n}=\tilde{\pi}^{i,n}(t,{\bm\lambda})$ of the auxiliary control problem ($\bf{P_n}$) satisfies
\eqref{eq:FOC-22-PP0} in Lemma~\ref{lem:optimum-Pn}. Let $\tilde{\pi}_t^{i,n}=\tilde{\pi}^{i,n}(t,{\bm\Lambda}_t^f)$ and it holds that, $\Px$-a.s.
\begin{equation}\label{eq:Phi-i-value}
  \Phi_i(t,\Lambda_t^{f,i,n},\tilde{\pi}_t^{i,n})  = O\left( n^{-1}\right),\qquad \Phi_i(t,\Lambda_t^{f,i,n},\pi_t^{*,i,n}) = 0.
\end{equation}
For any $\xi\in [\min\{\tilde{\pi}_t^{i,n}, \pi_t^{*,i,n}\}, \max\{\tilde{\pi}_t^{i,n}, \pi_t^{*,i,n}\}]$, by Lemma~\ref{lem:optimum-Pn}, we have $\Px$-a.s. that $\partial_{\pi}\Phi_i(t,\Lambda_t^{f,i,n},\xi)\in[C_1,C_2]$ for constants $-\infty<C_1<C_2<\infty$, which are independent of $(t,{\bm\lambda},i,n)$. Using \eqref{eq:Phi-i-value} and the mean value theorem, we arrive at
\begin{align}\label{eq:pio1overn}
 \left|\tilde{\pi}_t^{i,n} - \pi_t^{*,i,n} \right| = O\left(\frac{1}{n}\right),\quad \forall t\in[0,T].
 \end{align}
Building upon the estimate \eqref{eq:pio1overn}, we next focus on the proof of \eqref{eq:Jn-bar-Jn}. In fact, by virtue of \eqref{eq:wealth-X-i-P-n}, it follows from It\^o's formula that, for any admissible strategy $\pi^i=(\pi_t^i)_{t\in[0,T]}\in{\cal A}^i$,
\begin{align}\label{eq:X-i-gamma}
  \frac{d(X_t^{i,n})^{\gamma_i}}{(X_{t-}^{i,n})^{\gamma_i}}
  & = \gamma_i [r+ (b_i+ \Lambda_t^{f,i} )\pi_t^i + \frac{1}{2}(\gamma_i-1)(\sigma_i^2(\pi_t^i)^2+ (\sigma_i^0 \pi_t^i)^2)]dt + \gamma_i \pi_t^i \sigma_i dW_t^i \nonumber\\
  & \quad  + \gamma_i \pi_t^i \sigma_i^0 dW_t^0  +[(1-\pi_{t}^i)^{\gamma_i} -1]dN_t^{i,n} \nonumber\\
  &:= F(t,\pi_t^i)dt + \gamma_i \pi_t^i \sigma_i dW_t^i  + \gamma_i \pi_t^i \sigma_i^0 dW_t^0  +[(1-\pi_{t}^i)^{\gamma_i} -1]dN_t^{i,n},
\end{align}
with $F(t,\pi_t^i) := \gamma_i \{r+ (b_i+\Lambda_t^{f,i,n})\pi_t^i + \frac{1}{2}(\gamma_i-1)(\sigma_i^2+ (\sigma_i^0 )^2) (\pi_t^i)^2\}$. This is equivalent to
\begin{align*}
 (X_t^{i,n})^{\gamma_i} & = x_i^{\gamma_i}\exp\Bigg\{\int_{0}^{T}F(t,\pi_t^i)dt + \int_{0}^{T} \gamma_i \pi_t^i \sigma_i dW_t^i  + \int_{0}^{T} \gamma_i \pi_t^i \sigma_i^0 dW_t^0 + \int_{0}^{T}\log(1-\pi_{t}^i)^{\gamma_i} dN_t^{i,n} \Bigg\}.
 \end{align*}
Let us recall that, for $t\in[0,T]$,
\begin{align*}
  m_t^* &= x_0 \exp\left\{\int_{0}^{t}\left(\eta(s;\pi_s^*) -\frac{1}{2}(\sigma^0\pi_s^{*})^2\right)ds + \int_{0}^{t}\sigma^0\pi_s^{*} dW_s^0\right\}.
\end{align*}
Therefore, for any $\pi^i\in{\cal A}^i$, it holds that
\begin{align}\label{eq:expectation-objective}
 & \Ex\left[\frac{1}{\gamma_i}\left(X_T^{i,n}\right)^{\gamma_i} \left(m_T^*\right)^{-\theta_i\gamma_i}\right]\nonumber  \\
 &\quad = \frac{1}{\gamma_i}\Ex\Bigg[ x_i^{\gamma_i} x_0^{-\theta_i\gamma_i}\exp\Bigg\{\int_{0}^{T}F(t,\pi_t^i)dt
 + \int_{0}^{T}\gamma_i\pi_t^i\sigma_i dW_t^i + \int_{0}^{T}\gamma_i\pi_t^i\sigma_i^0dW_t^0     \nonumber\\
 &\qquad  + \int_{0}^{T}\log (1-\pi_{t}^i)^{\gamma_i} dN_t^{i,n} - \int_{0}^{T}\theta_i\gamma_i(\eta(t;\pi_t^*) -\frac{1}{2}(\sigma^0\pi_t^{*})^2)dt - \int_{0}^{T}\theta_i\gamma_i \sigma^0\pi_t^{*}dW_t^0 \Bigg\}\Bigg] \nonumber\\
&\quad = \frac{1}{\gamma_i}x_i^{\gamma_i}x_0^{-\theta_i\gamma_i}\Ex\Bigg[\exp\Bigg\{\int_{0}^{T} G(t,\pi_t^i,\pi_t^*)dt + \int_{0}^{T}\gamma_i \pi_t^i\sigma_idW_t^i + \int_{0}^{T}\gamma_i(\pi_t^i\sigma_i^0 -\theta_i\sigma^0\pi_t^{*}) dW_t^0  \nonumber\\
 &\qquad\quad + \int_{0}^{T}\log (1-\pi_{t}^i)^{\gamma_i}dN_t^{i,n} \Bigg\}\Bigg]
\end{align}
with $G(t,\pi_t^i,\pi_t^*):=  F(t,\pi_t^i) - \theta_i\gamma_i(\eta(t;\pi_t^*) -\frac{1}{2}(\sigma^0\pi_t^{*})^2)$.

Let $\tilde{X}^{i,n}= (\tilde{X}_t^{i,n})_{t\in[0,T]}$ and $X^{*,i,n} = (X_t^{*,i,n})_{t\in[0,T]}$ be the wealth processes corresponding to the strategies $\tilde{\pi}^{i,n}$ and $\pi^{*,i,n}$, respectively. Using the representation \eqref{eq:expectation-objective} with $\pi^i$ replaced by $\tilde{\pi}^{i,n}$ and $\pi^{*,i,n}$ respectively, we deduce that
\begin{align}\label{eq:J-bar-J-2}
  & \sup_{\pi^i \in \mathcal{A}^i}\bar{J}_i^n(\pi^{i}; m^* ) - \bar{J}_i^n(\pi^{\ast,i,n};m^*) = \Ex\left[\frac{1}{\gamma_i}(\tilde{X}_T^{i,n})^{\gamma_i} (m_T^*)^{-\theta_i\gamma_i}\right] -
    \Ex\left[\frac{1}{\gamma_i}(X_T^{*,i,n})^{\gamma_i} (m_T^*)^{-\theta_i\gamma_i}\right] \nonumber \\
 &\quad = \frac{1}{\gamma_i} x_i^{\gamma_i}x_0^{-\theta_i\gamma_i}\Ex\Bigg[\exp\Bigg\{\int_{0}^{T} G(t,\tilde{\pi}_t^{i,n};\pi_t^*)dt + \int_{0}^{T}\gamma_i\tilde{\pi}_t^{i,n}\sigma_idW_t^i  + \int_{0}^{T}\gamma_i\big(\tilde{\pi}_t^{i,n}\sigma_i^0-\theta_i\sigma^0\pi_t^{*}\big)dW_t^0   \nonumber\\
 &\qquad + \int_{0}^{T}\log (1-\tilde{\pi}_t^{i,n})^{\gamma_i} dN_t^{i,n}  \Bigg\}- \exp\Bigg\{\int_{0}^{T}G(t,\pi_t^{*,i,n},\pi_t^*)dt
  + \int_{0}^{T}\gamma_i\pi_t^{*,i,n}\sigma_i dW_t^i  \nonumber\\
 &\qquad + \int_{0}^{T}\gamma_i(\pi_t^{*,i,n}\sigma_i^0 -\theta_i\sigma^0\pi_t^{*})dW_t^0
   + \int_{0}^{T}\log(1-\pi_t^{*,i,n})^{\gamma_i}dN_t^{i,n} \Bigg\}\Bigg]:= I_3^i.
\end{align}
After a straightforward calculation, the term $I_3^i$ can be further rewritten as:
\begin{align}\label{eq:I-3-i}
  I_3^i & = \frac{1}{\gamma_i} x_i^{\gamma_i}x_0^{-\theta_i\gamma_i}\Ex\Bigg[\Bigg\{ \exp\Bigg(\int_{0}^{T}(F(t,\tilde{\pi}_t^{i,n})- F(t,\pi_t^{*,i,n})) dt
   + \int_{0}^{T}\gamma_i \sigma_i(\tilde{\pi}_t^{i,n}-\pi_t^{*,i,n}) dW_t^i \nonumber\\
 & \quad  + \int_{0}^{T}\gamma_i\sigma_i^0(\tilde{\pi}_t^{i,n}-\pi_t^{*,i,n})dW_t^0  + \int_{0}^{T}\log\frac{(1-\tilde{\pi}_t^{i,n})^{\gamma_i}}{(1-\pi_t^{*,i,n})^{\gamma_i}} dN_t^{i,n} \Bigg) -1 \Bigg\} \nonumber \\
 &\quad\times \exp\Bigg\{\int_{0}^{T}G(t,\pi_t^{*,i,n},\pi_t^*)dt + \int_{0}^{T}\gamma_i\pi_t^{*,i,n}\sigma_i dW_t^i + \int_{0}^{T}\gamma_i(\pi_t^{*,i,n}\sigma_i^0 -\theta_i\sigma^0\pi_t^{*})dW_t^0 \nonumber\\
 &\qquad\qquad + \int_{0}^{T}\log(1-\pi_t^{*,i,n})^{\gamma_i}dN_t^{i,n} \Bigg\}\Bigg]  \nonumber \\
 & = \frac{1}{\gamma_i} \Ex\Bigg[\Bigg\{\exp\Bigg(\int_{0}^{T}(F(t,\tilde{\pi}_t^{i,n})- F(t,\pi_t^{*,i,n})) dt
  +\int_{0}^{T}\gamma_i(\tilde{\pi}_t^{i,n}-\pi_t^{*,i,n})(\sigma_i dW_t^i +\sigma_i^0dW_t^0) \nonumber\\
 &\qquad\quad   + \int_{0}^{T}\log\frac{(1-\tilde{\pi}_t^{i,n})^{\gamma_i}}{(1-\pi_t^{*,i,n})^{\gamma_i}} dN_t^{i,n} \Bigg) -1 \Bigg\}(X_T^{*,i,n})^{\gamma_i} \left(m_T^*\right)^{-\theta_i\gamma_i} \Bigg].
\end{align}
Applying the Cauchy-Schwarz inequality to \eqref{eq:I-3-i} results in
\begin{align}\label{eq:I-3-i-Cauchy}
  I_3^i & \leq \frac{1}{\gamma_i}\Ex\left[(X_T^{*,i,n})^{2\gamma_i}(m_T^*)^{-2\theta_i\gamma_i}\right]^{\frac{1}{2}} \Ex\Bigg[\Bigg|\exp\Bigg\{\int_{0}^{T}(F(t,\tilde{\pi}_t^{i,n})- F(t,\pi_t^{*,i,n})) dt\nonumber \\
 & \qquad  + \int_{0}^{T}\gamma_i(\tilde{\pi}_t^{i,n}-\pi_t^{*,i,n})(\sigma_idW_t^i +\sigma_i^0dW_t^0)  + \int_{0}^{T}\log\frac{(1-\tilde{\pi}_t^{i,n})^{\gamma_i}}{(1-\pi_t^{*,i,n})^{\gamma_i}} dN_t^{i,n} \Bigg\} -1 \Bigg|^2\Bigg]^{\frac{1}{2}} \nonumber\\
 & = \frac{1}{\gamma_i}\Ex\left[(X_T^{*,i,n})^{2\gamma_i}(m_T^*)^{-2\theta_i\gamma_i}\right]^{\frac{1}{2}} \Ex\Bigg[ \exp\Bigg\{ 2\int_{0}^{T}(F(t,\tilde{\pi}_t^{i,n})- F(t,\pi_t^{*,i,n}))dt \nonumber\\
 & \qquad + 2\int_{0}^{T}\gamma_i(\tilde{\pi}_t^{i,n}-\pi_t^{*,i,n})(\sigma_idW_t^i+\sigma_i^0dW_t^0)   + 2\int_{0}^{T}\log\frac{(1-\tilde{\pi}_t^{i,n})^{\gamma_i}}{(1-\pi_t^{*,i,n})^{\gamma_i}} dN_t^{i,n} \Bigg\}\nonumber \\
 &\qquad - 2\exp\Bigg\{ \int_{0}^{T}(F(t,\tilde{\pi}_t^{i,n})- F(t,\pi_t^{*,i,n})) dt  + \int_{0}^{T}\gamma_i(\tilde{\pi}_t^{i,n}-\pi_t^{*,i,n})(\sigma_idW_t^i +\sigma_i^0dW_t^0)  \nonumber\\
 & \qquad   + \int_{0}^{T}\log\frac{(1-\tilde{\pi}_t^{i,n})^{\gamma_i}}{(1-\pi_t^{*,i,n})^{\gamma_i}} dN_t^{i,n} \Bigg\} + 1 \Bigg]^{\frac{1}{2}}.
\end{align}
It follows from \eqref{eq:bound-X-i-ast} in Lemma \ref{lem:auxiliarylemmas} that $\Ex[(X_T^{*,i,n})^{2\gamma_i}(m_T^*)^{-2\theta_i\gamma_i}]$ is bounded and the bound is independent of $(i,n)$. Then, by \eqref{eq:I-3-i-Cauchy}, in order to verify \eqref{eq:Jn-bar-Jn}, we need to show that
\begin{align}\label{eq:estimateo1n}
 & \Ex\Bigg[\exp\Bigg\{\int_{0}^{T}(F(t,\tilde{\pi}_t^{i,n})- F(t,\pi_t^{*,i,n})) dt
   + \int_{0}^{T}\gamma_i(\tilde{\pi}_t^{i,n}-\pi_t^{*,i,n})(\sigma_idW_t^i +\sigma_i^0dW_t^0)     \nonumber\\
 & \qquad + \int_{0}^{T}\log\frac{(1-\tilde{\pi}_t^{i,n})^{\gamma_i}}{(1-\pi_t^{*,i,n})^{\gamma_i}} dN_t^{i,n} \Bigg\} \Bigg] =1+ O\left(\frac{1}{n}\right).
\end{align}
We introduce the density process $L=(L_t)_{t\in[0,T]}$ satisfying the following SDE under the original probability measure $\mathbb{P}$ that
\begin{align}\label{Girsanov-L}
  \frac{dL_t}{L_{t-}}&  = \gamma_i \sigma_i (\tilde{\pi}_t^{i,n}-\pi_t^{*,i,n}) dW_t^i
  + \gamma_i\sigma_i^0(\tilde{\pi}_t^{i,n}-\pi_t^{*,i,n})dW_t^0  + \left\{\frac{(1-\tilde{\pi}_t^{i,n})^{\gamma_i}}{(1-\pi_t^{*,i,n})^{\gamma_i}} -1 \right\} dM_t^{f,i,n}.
\end{align}
In view that $\tilde{\pi}^{i,n}, \pi^{*,i,n}\in{\cal A}^i$, the density process $L$ is in fact a martingale. We then define a probability measure $\mathbb{Q}\sim\Px$ by
\begin{align}\label{eq:Q}
  \frac{d \mathbb{Q}}{d\mathbb{P}}\big|_{\mathcal{G}_t} = L_t,\quad t\in[0,T].
\end{align}
Using the change of measure and \eqref{eq:pio1overn}, we obtain the existence of a constant $C>0$ independent of $n$ such that
\begin{align*}
 & \Ex\Bigg[\exp\Bigg\{\int_{0}^{T}(F(t,\tilde{\pi}_t^{i,n})- F(t,\pi_t^{*,i,n})) dt+ \int_{0}^{T}\gamma_i \sigma_i(\tilde{\pi}_t^{i,n}-\pi_t^{*,i,n})dW_t^i     \nonumber\\
 & \qquad\quad + \int_{0}^{T}\gamma_i\sigma_i^0(\tilde{\pi}_t^{i,n}-\pi_t^{*,i,n})dW_t^0 + \int_{0}^{T}\log\frac{(1-\tilde{\pi}_t^{i,n})^{\gamma_i}}{(1-\pi_t^{*,i,n})^{\gamma_i}} dN_t^{i,n} \Bigg\}\Bigg] \\
 &\quad = \Ex^{\mathbb{Q}}\Bigg[\exp\Bigg\{\int_{0}^{T}(F(t,\tilde{\pi}_t^{i,n})- F(t,\pi_t^{*,i,n})) dt
   + \int_{0}^{T}\frac{1}{2}\gamma_i^2 (\sigma_i^2+(\sigma_i^0)^2) (\tilde{\pi}_t^{i,n}-\pi_t^{*,i,n})^2 dt      \nonumber\\
 &\qquad\qquad\quad+\int_{0}^{T}\left(\frac{(1-\tilde{\pi}_t^{i,n})^{\gamma_i}}{(1-\pi_t^{*,i,n})^{\gamma_i}}-1\right)
 \Lambda_t^{f,i,n}dt \Bigg\}\Bigg] \nonumber\\
 &\quad =\Ex^{\mathbb{Q}}\Bigg[\exp\Bigg\{\int_{0}^{T} \bigg[\gamma_i(b_i+\Lambda_t^{f,i,n})(\tilde{\pi}_t^{i,n}-\pi_t^{*,i,n}) +\frac{1}{2}\gamma_i(\gamma_i-1)(\sigma_i^2+(\sigma_i^0)^2)((\tilde{\pi}_t^{i,n})^2-(\pi_t^{*,i,n})^2)     \nonumber\\
 &\qquad \qquad + \frac{1}{2}\gamma_i^2 (\sigma_i^2+(\sigma_i^0)^2) (\tilde{\pi}_t^{i,n}-\pi_t^{*,i,n})^2 + \left(\frac{(1-\tilde{\pi}_t^{i,n})^{\gamma_i}}{(1-\pi_t^{*,i,n})^{\gamma_i}}-1\right)\Lambda_t^{f,i,n}\bigg]dt \Bigg\}\Bigg]\\
 &\quad \leq \Ex^{\mathbb{Q}}\left[\exp\left(C \int_{0}^{T}|\tilde{\pi}_t^{i,n}-\pi_t^{*,i,n}| dt \right)\right]\\
 &\quad = 1+ O\left(\frac{1}{n}\right).
\end{align*}
Here, $\Ex^{\Qx}$ denotes the expectation operator under $\Qx$. This shows the validity of \eqref{eq:estimateo1n}. It then follows from \eqref{eq:estimateo1n} and \eqref{eq:I-3-i-Cauchy} that \eqref{eq:Jn-bar-Jn} holds. Combining \eqref{eq:I-i-1}-\eqref{eq:Jn-bar-Jn}, we can conclude the desired estimation that $\sup_{\pi^i \in \mathcal{A}^i} J_i(\pi^{i},\pi^{*,-i,n}) - J_i(\pi^{*,i,n}, \pi^{*,-i,n}) \leq O(n^{-\frac{1}{4}})$. Thus, we complete the proof of the theorem.
\end{proof}

\section{Conclusions}\label{sec:con}
This paper revisits the MFG and the $n$-player game under CRRA relative performance by allowing risky assets to have contagious jumps, which are modelled by a multi-dimensional mutually exciting nonlinear Hawkes process. As a first attempt to such problems to accommodate controlled jumps, it is assumed for tractability in the present paper that the limiting model has constant parameters. By using the FBSDE and stochastic maximum principle arguments, a deterministic Nash equilibrium for the MFG can be characterized as a function of the deterministic limiting intensity process. Furthermore, using the information of the MFE, we are able to construct a good approximation of the Nash equilibrium for the large but finite population game and the order of the approximation error is explicitly obtained.

Based on our current study, some future research directions can be considered. First, it will be attractive to consider both $n$-agent model and the limiting model with general random parameters. A deterministic mean field equilibrium may no longer exist. The existence of a mean field equilibria and the verification of an approximate Nash equilibrium for the $n$-player game will require different mathematical arguments. Second, our work may pave the way to consider other sophisticated default intensity processes. For example, the default intensity of each risky asset may depend on the asset price itself or other stochastic factors. Some novel analysis for the mean field FBSDE with jumps are in demand to tackle the MFG problem.

\ \\
\noindent
\textbf{Acknowledgements}\ We sincerely thank two anonymous referees for their helpful comments on the presentation of this paper. L. Bo is supported by Natural Science Basic Research Program of Shaanxi  (Program No. 2023-JC-JQ-05) and National Natural Science Foundation of China (Grant No. 11971368). S. Wang is supported by the Fundamental Research Funds for the Central Universities (Grant No. WK3470000024). X. Yu is supported by the Hong Kong Polytechnic University research (Grant No. P0031417 and No. P0039251).




\begin{appendix}

\section{Proofs of Some Auxiliary Results}\label{sec:proofs}

\begin{proof}[Proof of  Lemma \ref{lem:Phi}]
 Recall the definition of $ \Phi(\pi,\lambda): (-\infty, 1)\times \R_{+} \rightarrow \R $ in \eqref{eq:Phi}. As $0<\gamma <1$, the partial derivative is given by, for $(\pi,\lambda)\in(-\infty, 1)\times \R_{+}$,
\begin{equation*}
  \partial_{\pi} \Phi(\pi,\lambda) = (\gamma-1)[ \sigma^2 + (\sigma^0)^2]  -\theta\gamma(\sigma^0)^2 + (\gamma-1)\lambda(1-\pi)^{\gamma-2}  < 0,
\end{equation*}
which implies that $\Phi(\pi,\lambda)$ is monotonically decreasing with respect to $\pi\in(-\infty,1)$. Note that $ \Phi(0,\lambda) = b >0 $ and $ \lim_{\pi\uparrow1}\Phi(\pi,\lambda)=-\infty$. It is deduced that there exists a unique $ \pi^* \in (0,1-\epsilon_0]$ such that $\Phi(\pi^* ,\lambda)=0$, where the constant $\epsilon_0\in(0,1)$ is small enough. Moreover, as $ \partial_{\pi} \Phi(\pi,\lambda)  < 0 $ for all $(\pi,\lambda)\in(-\infty,1)\times\R_+$,
it follows from the implicit function theorem that, there exists a unique continuous function $\phi$ such that $\pi^* = \phi(\lambda)$, and $\phi$ has a continuous partial derivative with respect to $\lambda$. We also note that
\begin{equation*}
 \partial_{\lambda} \phi(\lambda) = -\frac{\partial_{\lambda} \Phi(\pi^*,\lambda)}{\partial_{\pi} \Phi(\pi^*,\lambda)}
  = - \frac{1- (1-\pi^*)^{\gamma -1}}{(\gamma-1)[\sigma^2 + (\sigma^0)^2]-\theta\gamma(\sigma^0)^2 + (\gamma-1)\lambda(1-\pi^*)^{\gamma-2} } < 0,
\end{equation*}
and $ |\partial_{\lambda} \phi(\lambda) |\leq K$ for some constant $K$ independent of $\lambda$. The desired result follows that the function $\phi$ is decreasing in $\lambda$ and is Lipschitz continuous with respect to $\lambda$.
\end{proof}

\begin{proof}[Proof of Lemma \ref{quanprop-1}]
Note that the mean filed  equilibrium $t\mapsto\pi^*_t$ in Theorem \ref{thm:optimal-MFG} is a positive deterministic function that $\pi_t^* = \phi(\lambda_t^{f,o})\in (0, 1-\epsilon_0)$ for $0<\gamma<1$. For each fixed $t\in[0,T]$, by straightforward computations, we can derive the derivatives that
\begin{align*}
\partial_b \phi &= -\frac{\partial_b \Phi}{\partial_{\pi}\Phi}
 = -\frac{1}{(\gamma-1)[\sigma^2+(\sigma^0)^2] -\theta\gamma(\sigma^0)^2 +(\gamma-1)\lambda_t^o(1-\pi_t^*)^{\gamma-2}}>0,\\
\partial_{\sigma} \phi & = -\frac{\partial_{\sigma} \Phi}{\partial_{\pi} \Phi}
 = -\frac{2(\gamma-1)\sigma\pi_t^*}{(\gamma-1)[\sigma^2+(\sigma^0)^2] -\theta\gamma(\sigma^0)^2 +(\gamma-1)\lambda_t^o(1-\pi_t^*)^{\gamma-2}} <0,\\
\partial_{\sigma_0} \phi &= -\frac{ \partial_{\sigma_0} \Phi}{\partial_{\pi} \Phi}
 = -\frac{2(\gamma-1)\sigma_0\pi_t^* -2\theta\gamma\sigma_0\pi_t^*}{(\gamma-1)[\sigma^2+(\sigma^0)^2] -\theta\gamma(\sigma^0)^2 +(\gamma-1)\lambda_t^o(1-\pi_t^*)^{\gamma-2}} <0,\\
\partial_{\gamma} \phi &= - \frac{\partial_{\gamma} \phi}{\partial_{\pi} \Phi}
 = - \frac{\sigma^2 +(1-\theta)(\sigma^0)^2\pi_t^* - \lambda_t^o\log(1-\pi_t^*)(1-\pi_t^*)^{-\frac{1}{\gamma}}} {(\gamma-1)[\sigma^2 + (\sigma^0)^2] -\theta\gamma(\sigma^0)^2 + (\gamma-1)\lambda_t^o(1-\pi_t^*)^{\gamma-2}}<0,\\
\partial_{\theta} \phi &= - \frac{\partial_{\theta} \Phi}{\partial_{\pi} \Phi}
= - \frac{-\gamma(\sigma^0)^2\pi_t^*}{(\gamma-1)[\sigma^2 + (\sigma^0)^2] -\theta\gamma(\sigma^0)^2 + (\gamma-1)\lambda_t^o(1-\pi_t^*)^{\gamma-2}} < 0.
\end{align*}
Thus, the claimed monotonicity results follow directly.
\end{proof}

\begin{proof}[Proof of Lemma~\ref{lem:optimum-Pn}]
Let ${\cal A}_t^i$ be the admissible control set starting with any time $t\in[0,T]$. Then, we can define the value function of the auxiliary control problem $(\bf{P_n})$ given by, for $(t,x,m,\bm{\lambda})\in [0,T]\times \mathbb{R}_{+}\times\R_+\times \Xi_{\Lambda}^n$,
\begin{equation}\label{eq:value-function-P}
  V(t,x,m,\bm{\lambda}):= \sup_{\pi^i \in {\cal A}_t^i} \Ex\left[\frac{1}{\gamma_i}(X_T^{i,n})^{\gamma_i}(m_T^*)^{-\theta_i\gamma_i}\Big|X_t^{i,n} = x, m_t^* =m,\bm{\lambda}_t = \bm{\lambda}  \right],
\end{equation}
where $\bm{\lambda}_t:=(\lambda_t^{1},\ldots,\lambda_t^{n})^{\top}$ for $t\in[0,T]$. The value function \eqref{eq:value-function-P} is then associated with the HJB equation that
\begin{align*}
 0& = \partial_t V(t,x,m,\bm{\lambda})+ \sum_{j=1}^n \partial_{\lambda_j} V(t,x,m,\bm{\lambda})\alpha_j(\lambda_{\infty}^j -\lambda_j)\nonumber\\
 &\quad  + \sum_{j\neq i}^{n}f(\lambda_j) \left(V\left(t,x,m,\bm{\lambda} + \frac{\beta_i\varsigma_j}{n}\bm{e}_j^n\right) - V(t,x,m,\bm{\lambda})\right) \nonumber \\
 &\quad + \sup_{\pi^i\in(-\infty,1)}\Bigg\{\partial_x V(t,x,m,\bm{\lambda}) x(r+ (b_i+f(\lambda_i))\pi^i) + \partial_m V(t,x,m,\bm{\lambda})\eta(t;\pi_t^*)m  \nonumber\\
 &\qquad\qquad\qquad\quad + \frac{1}{2}\partial_{xx}V(t,x,m,\bm{\lambda})x^2(\sigma_i^2+ (\sigma_i^0)^2)(\pi^i)^2 + \frac{1}{2}\partial_{mm}V(t,x,m,\bm{\lambda}) m^2(\sigma^0\pi_t^{\ast})^2 \nonumber\\
  &\quad+ \partial_{xm}V(t,x,m,\bm{\lambda}) xm\sigma_i^0\pi^i \sigma^0\pi_t^{\ast}
    + f(\lambda_i)\left(V\left(t,(1-\pi^i)x,m,\bm{\lambda}+ \frac{\beta_i\varsigma_i}{n}\bm{e}_i^n\right) - V(t,x,m,\bm{\lambda})\right) \Bigg\}
\end{align*}
with the terminal condition $V(T,x,m,\bm{\lambda}) = \frac{1}{\gamma_i}x^{ \gamma_i} m^{-\theta_i\gamma_i}$. Here, ${\bm e}_i^n$ denotes the $n$-dimensional column vector whose $i$-th entry is $1$ and remaining ones are $0$.  Let us consider the decoupled form that $V(t,x,m,\bm{\lambda}) = \frac{1}{\gamma_i}x^{\gamma_i}m^{-\theta_i\gamma_i}B(t,\bm{\lambda})$, where $B(t,\bm{\lambda})$ solves the following equation:
\begin{align}\label{eq:HJB-B}
0 &=\partial_t B(t,\bm{\lambda}) + \frac{1}{2}\theta_i\gamma_i(\theta_i\gamma_i +1) (\sigma^0\pi_t^{*})^2B(t,\bm{\lambda})  -\theta_i\gamma_i\eta(t;\pi_t^*)B(t,\bm{\lambda}) + \sup_{\pi^i\in(-\infty,1)}{\cal H}(t,{\bm\lambda};\pi^i)\nonumber \\
 &\quad + \sum_{j=1}^{n}\partial_{\lambda_j} B(t,\bm{\lambda}) \alpha_j(\lambda_{\infty}^j - \lambda_j)+  \sum_{j\neq i}f(\lambda_j) \left(B\left(t,\bm{\lambda}+\frac{\beta_i\varsigma_j}{n}\bm{e}_j^n\right)  - B(t,\bm{\lambda}) \right),
\end{align}
where the terminal condition is given by $B(T,\bm{\lambda})=1$, and ${\cal H}(t,{\bm\lambda};\pi^i)$ corresponds to the Hamiltonian operator that
\begin{align}\label{eq:Hamilton-P}
{\cal H}(t,{\bm\lambda};\pi^i)&:= \gamma_i [ r+ (b_i + f(\lambda_i) )\pi^i ]B(t,{\bm\lambda})
  +\frac{1}{2}\gamma_i(\gamma_i-1)(\sigma_i^2+ (\sigma_i^0)^2)(\pi^i)^2 B(t,{\bm\lambda})\nonumber \\
 &\quad -\theta_i \gamma_i^2\sigma_i^0\pi^i\sigma^0\pi_t^* B(t,{\bm\lambda} )  +  f(\lambda_i) \left((1-\pi^i)^{\gamma_i}B\left(t,{\bm\lambda} +\frac{\beta_i\varsigma_i}{n}{\bm e}_i^n\right) -B(t,{\bm\lambda})\right).
\end{align}
It follows from Theorem 4.1 in \cite{bocapponichen2019} and Proposition 4.3 in \cite{delong2008optimal} that \eqref{eq:HJB-B} admits a unique (positive) classical solution. By applying the first-order condition to ${\cal H}(t,{\bm\lambda};\pi^i)$ with respect to $\pi^i\in(-\infty,1)$, we obtain that the optimum $\tilde{\pi}^{i,n}=\tilde{\pi}^{i,n}(t,{\bm\lambda})\in(-\infty,1)$ in Eq.~\eqref{eq:HJB-B} satisfies
\begin{align}\label{eq:FOC-2-P}
b_i + f(\lambda_i) + (\gamma_i-1)(\sigma_i^2+ (\sigma_i^0)^2)\tilde{\pi}^{i,n} -\theta_i\gamma_i\sigma_i^0\sigma^0\pi_t^* - (1-\tilde{\pi}^{i,n})^{\gamma_i-1} \frac{B\left(t,\bm{\lambda}+\frac{\beta_i\varsigma_i}{n}\bm{e}_i^n\right)}{B(t,\bm{\lambda})}f(\lambda_i) = 0.
\end{align}
By the assumption $\bm{(A_O)}$, it holds that
\begin{align*}
\left|\frac{B\left(t,\bm{\lambda}+\frac{\beta_i\varsigma_i}{n}\bm{e}_i^n\right)}{B(t,\bm{\lambda})}-1\right|
=\frac{\left|B\left(t,\bm{\lambda}+\frac{\beta_i\varsigma_i}{n}\bm{e}_i^n\right)-B(t,\bm{\lambda})\right|}{B(t,\bm{\lambda})}\leq \frac{\frac{\beta_i\varsigma_i}{n}\left\|\partial_{\lambda_i}B(\cdot,\cdot)\right\|_{\infty}}{B(t,\bm{\lambda})}=O\left(\frac{1}{n}\right).
\end{align*}
This yields the desired result \eqref{eq:FOC-22-PP0}.

It follows from the assumption $\bm{(A_O)}$ and \eqref{eq:FOC-22-PP0} that, there exist a constant $K>0$ independent of $(i,n)$ such that $|\Phi_i(t,f(\lambda_i),\tilde{\pi}^{i,n})|\leq K$. Note that $\pi\mapsto\Phi_i(t,f(\lambda_i),\pi)$ is continuous and decreasing, $\lim_{\pi\to1} \Phi_i(t,f(\lambda_i),\pi) = -\infty $ and $\lim_{\pi\to -\infty}\Phi_i(t,f(\lambda_i),\pi)=+\infty$. We get the existence of some $(D_i,\epsilon_i)\in\R\times(0,1)$ independent of $(t,{\bm\lambda},n)$ such that $\tilde{\pi}^{i,n} \in [D_i, 1-\epsilon_i]$. We next prove that there exists a pair $(\tilde{D},\tilde{\epsilon})\in\R\times(0,1)$, which is independent of $(t,{\bm\lambda},i,n)$ such that $D_i\geq\tilde{D}$ and $\epsilon_i\geq\tilde{\epsilon}$. In fact, by the monotonicity of $\pi\mapsto\Phi_i(t,\lambda_i,\pi)$, it follows that
\begin{align*}
\Phi(t,f(\lambda_i),D_i) &= (\gamma_i-1)[\sigma_i^2+(\sigma_i^0)^2]D_i - \theta_i\gamma_i\sigma_i^0\sigma^0 \pi_t^* - f(\lambda_i)(1-D_i)^{\gamma_i-1} + f(\lambda_i) + b_i  \leq K,\nonumber\\
 \Phi(t,f(\lambda_i), 1-\epsilon_i)&= (\gamma_i-1)[\sigma_i^2 + (\sigma_i^0)^2](1- \epsilon_i) -  \theta_i\gamma_i\sigma_i^0\sigma^0 \pi_t^* - f(\lambda_i)(\epsilon_i)^{\gamma_i -1} + f(\lambda_i) + b_i \geq -K.
\end{align*}
Thanks to the assumption $\bm{(A_O)}$, we have that
\begin{align*}
  (\gamma_i-1)[\sigma_i^2+(\sigma_i^0)^2] D_i & \leq K + \theta_i\gamma_i\sigma_i^0\sigma^0 \pi_t^* +f(\lambda_i)(1-D_i)^{\gamma_i-1} - f(\lambda_i) - b_i
   \leq K + \sup_{i\geq 1}\theta_i\gamma_i\sigma_i^0\sigma^0.
\end{align*}
This yields that
\begin{align*}
 -D_i &\leq \frac{K + \sup_{i\geq 1}\theta_i\gamma_i\sigma_i^0\sigma^0}{(1-\gamma_i)[\sigma_i^2+(\sigma_i^0)^2]}\leq \frac{K + \sup_{i\geq 1}\theta_i\gamma_i\sigma_i^0\sigma^0}{\inf_{i\geq 1} (1-\gamma_i)[\sigma_i^2+(\sigma_i^0)^2]}: = -  \tilde{D} .
\end{align*}
Similarly, we also have that
\begin{align*}
  f(\lambda_i)(\epsilon_i)^{\gamma_i -1} & \leq K +(\gamma_i-1)[\sigma_i^2 + (\sigma_i^0)^2](1- \epsilon_i) -  \theta_i\gamma_i\sigma_i^0\sigma^0 \pi_t^* + f(\lambda_i) + b_i  \\
  & \leq K +(\gamma_i-1)[\sigma_i^2 + (\sigma_i^0)^2] + f(\lambda_i) + b_i  \\
  & \leq K + \sup_{i\geq 1}(\gamma_i-1)[\sigma_i^2+(\sigma_i^0)^2] + \sup_{i\geq 1}f(\lambda_i) + \sup_{i\geq 1}b_i := C.
\end{align*}
It holds that $(\epsilon_i)^{\gamma_i -1} \leq \frac{C}{f(\lambda_i)}\leq \frac{C}{\inf_{i\geq 1}f(\lambda_i)}$. Therefore, we obtain that
\begin{align*}
  \epsilon_i & \geq \left(\frac{\inf_{i\geq 1}\lambda_i}{C}\right)^{\frac{1}{1-\gamma_i}} \geq  \inf_{i\geq1} \left(\frac{\inf_{i\geq 1}\lambda_i}{C}\right)^{\frac{1}{1-\gamma_i}} := \tilde{\epsilon},
\end{align*}
which completes the proof.
\end{proof}

\begin{proof}[Proof of Lemma \ref{lem:auxiliarylemmas}]
(i) By \eqref{eq:pi-epsilon-i}, we have that $\pi_t^{*,i,n} = \phi_i(t,\Lambda_t^{f,i,n}) \in [D_0,1-\epsilon_0]$. Also recall that the wealth process $X^{*,i,n}=(X_t^{*,i,n})_{t\in[0,T]}$ satisfies \eqref{eq:wealth-X-i-ast}. The conclusion clearly holds when $p=0$. It suffices to prove the result when $p\neq 0$. By It\^{o} formula, we have that
\begin{align}\label{eq:X-i-star-p}
 d(X_t^{*,i,n})^p
 & = p(X_t^{*,i,n})^p(r+ (b_i +\Lambda_t^{f,i,n})\pi_t^{*,i,n}) +\frac{1}{2}p(p-1) (\pi_t^{*,i,n})^2(\sigma_i^2+(\sigma_i^0)^2)]dt \nonumber \\
 &\quad + p(X_t^{*,i,n})^p \pi_t^{*,i,n}(\sigma_i dW_t^i +\sigma_i^0dW_t^0) + (X_{t-}^{*,i,n})^p\log(1-\pi_t^{*,i,n})^p dN_t^{i,n}.
\end{align}
It follows that
\begin{align}\label{eq:X-i-star-expect}
 \Ex\left[\left(X_t^{*,i,n}\right)^p\right]& = x_i^p \Ex\Bigg[\exp\Bigg\{p\int_{0}^{t}\left[r+(b_i +\Lambda_s^{f,i,n})\pi_s^{*,i,n} - \frac{1}{2}(\pi_s^{*,i,n})^2(\sigma_i^2+(\sigma_i^0)^2)\right]ds\nonumber \\
 & \quad + p\int_{0}^{t}\pi_s^{*,i,n}(\sigma_i dW_s^i +\sigma_i^0dW_s^0)+ \int_{0}^{t}\log(1-\pi_s^{*,i,n})^p dN_s^{i,n} \Bigg\}\Bigg] \nonumber \\
 & = x_i^p \Ex^{\widehat{\mathbb{Q}}}\Bigg[\exp\Bigg\{ \int_{0}^{t} p\left[r+ b_i \pi_s^{*,i,n} + \frac{1}{2}(p-1) (\pi_s^{*,i,n})^2(\sigma_i^2+(\sigma_i^0)^2)\right]\nonumber\\
 &\quad + ((1-\pi_s^{*,i,n})^p +p\pi_s^{*,i,n}-1)\Lambda_s^{f,i,n} ds \Bigg\}\Bigg].
\end{align}
Here, $\widehat{\Qx}\sim\Px$ is defined by $\frac{d\widehat{\mathbb{Q}} }{d\mathbb{P}}\big|_{\mathcal{G}_t}=\widehat{L}_t$, where $\hat{L}=(\hat{L}_t)_{t\in[0,T]}$ satisfies SDE under $\Px$ that
\begin{equation}\label{eq:Girsanov-L-2}
  \frac{d\hat{L}_t}{\hat{L}_{t-}} = p\sigma_i \pi_t^{*,i,n}dW_t^i + p\sigma_i^0 \pi_t^{*,i,n} dW_t^0 + [(1-\pi_t^{*,i,n})^p -1]dM_t^{f,i,n},\quad\hat{L}_0=1.
\end{equation}
Note that both of $\pi^{*,i,n}=(\pi_t^{*,i,n})_{t\in[0,T]}$ and $\Lambda^{f,i,n}=(\Lambda_t^{f,i,n})_{t\in[0,T]}$ are bounded. The desired estimate~\eqref{eq:bound-X-i-ast} follows from \eqref{eq:X-i-star-expect}.

(ii) Note that the mean field intensity factor process reduces to
\begin{equation}\label{eq:lambda-MF-000}
  \lambda_t^{f,o} = f(\lambda_t^{l}),\quad \lambda_t^{l} = \lambda_0 + \int_{0}^{t}\alpha(\lambda_{\infty} -\lambda_s^{l})ds + \int_{0}^{t} \beta \varsigma f(\lambda_s^{l})ds,
\end{equation}
and the intensity process $\Lambda^{f,i,n} = (\Lambda_t^{f,i,n})_{t\in[0,T]}$ takes the form that
 \begin{equation}\label{eq:Lambdain00}
   \Lambda_t^{f,i,n} = f(\lambda_t^{i,n}),\quad d\lambda_t^{i,n} = \alpha_i(\lambda_{\infty}^i - \lambda_t^{i,n})dt + \frac{\beta_i}{n}\sum_{j=1}^{n} \varsigma_{j}dN_t^{j,n}.
\end{equation}
It follows from \eqref{eq:lambda-MF-000} and \eqref{eq:Lambdain00} that
\begin{equation*}
  d(\lambda_t^{i,n} - \lambda_t^l) = \left[\alpha_i(\lambda_{\infty}^i - \lambda_t^{i,n}) -\alpha(\lambda_{\infty} -\lambda_t^{l}) \right]dt + \frac{\beta_i}{n}\sum_{j=1}^{n}\varsigma_{j}dN_t^{j,n} - \beta \varsigma f(\lambda_t^l) dt.
\end{equation*}
Applying It\^{o}'s lemma, we obtain that
\begin{align*}
   d(\lambda_t^{i,n} - \lambda_t^l)^2  & = 2 (\lambda_t^{i,n} - \lambda_t^l)\left[ \alpha_i(\lambda_{\infty}^i - \lambda_t^{i,n})-\alpha(\lambda_{\infty} -\lambda_t^{l})  - \beta \varsigma f(\lambda_t^l) \right]dt \\
  & \quad + \sum_{j=1}^{n} \left[ \left(\lambda_t^{i,n} - \lambda_t^l + \frac{\beta_i}{n}\varsigma_{j} \right)^2 -(\lambda_t^{i,n} - \lambda_t^l)^2  \right]dN_t^{j,n}.
\end{align*}
Taking the integral from $0$ to $t$ and then taking expectations on both sides, we arrive at
\begin{align*}
  \mathbb{E}\left[ \left(\lambda_t^{i,n} - \lambda_t^l\right)^2  \right] & = \int_{0}^{t} \Ex\left[ 2 (\lambda_s^{i,n} - \lambda_s^l)\left( \alpha_i(\lambda_{\infty}^i - \lambda_s^{i,n})-\alpha(\lambda_{\infty} -\lambda_s^{l}) - \beta \varsigma f(\lambda_s^l) \right) \right] ds  + O\left(\frac{1}{n^2}\right) \\
  &\quad + \frac{1}{n} \sum_{j=1}^{n}\int_{0}^{t} 2\beta_i\varsigma_j \mathbb{E}\left[(\lambda_s^{i,n} - \lambda_s^l) f(\lambda_s^{j,n})\right]ds \\
  & = \int_{0}^{t} \mathbb{E}\left[ 2 (\lambda_s^{i,n} - \lambda_s^l)\left( \alpha_i\big(\lambda_{\infty}^i - \lambda_s^{i,n}\big)-\alpha(\lambda_{\infty} -\lambda_s^{l}) \right) \right] ds  + O\left(\frac{1}{n^2}\right) \\
  &\quad + \frac{1}{n} \sum_{j=1}^{n}\int_{0}^{t} 2 \mathbb{E}\left[(\lambda_s^{i,n} - \lambda_s^l)\left( \beta_i\varsigma_j  f(\lambda_s^{j,n}) - \beta \varsigma f(\lambda_s^l)\right)\right]ds.
\end{align*}
It follows from the Lipschitz continuity of $f$ that, there exists a constant $C>0$ independent of $n$ such that
\begin{align*}
 & \frac{1}{n}\sum_{i=1}^{n} \mathbb{E}\left[ \left(\lambda_t^{i,n} - \lambda_t^l\right)^2  \right] \\
  & = \frac{1}{n}\sum_{i=1}^{n} \int_{0}^{t} \mathbb{E}\left[ 2 (\lambda_s^{i,n} - \lambda_s^l)\left( \alpha_i\big(\lambda_{\infty}^i - \lambda_s^{i,n}\big)-\alpha(\lambda_{\infty} -\lambda_s^{l}) \right) \right] ds  + O\left(\frac{1}{n^2}\right) \\
  &\quad + \frac{1}{n^2}\sum_{i=1}^{n}\sum_{j=1}^{n} \int_{0}^{t} 2 \mathbb{E}\left[(\lambda_s^{i,n} - \lambda_s^l)\left( \beta_i\varsigma_j  f(\lambda_s^{j,n}) - \beta \varsigma f(\lambda_s^l)\right)\right]ds \\
 &\quad\leq \frac{1}{n}\sum_{i=1}^{n}  \int_{0}^{t} C \Ex\left[ (\lambda_s^{i,n} - \lambda_s^l)^2 \right] ds  + O\left(\frac{1}{n^2}\right)
 + \frac{1}{n} \sum_{i=1}^{n}\int_{0}^{t}  C \Ex\left[(\lambda_s^{i,n} - \lambda_s^l)\frac{1}{n} \sum_{j=1}^{n}(\lambda_s^{j,n} -\lambda_s^l) \right]ds \\
  &\quad \leq C \int_{0}^{t} \frac{1}{n}\sum_{i=1}^{n} \Ex\left[\left(\lambda_s^{i,n} - \lambda_s^l\right)^2 \right] ds  + O\left(\frac{1}{n^2}\right).
\end{align*}
By using Gronwall's inequality, we conclude that
\begin{equation*}
  \frac{1}{n}\sum_{i=1}^{n} \Ex\left[ \left(\lambda_t^{i,n} -\lambda_t^l\right)^2 \right] = O\left(\frac{1}{n^2}\right).
\end{equation*}
Applying Jensen inequality and the Lipschitz continuity property, it holds that
\begin{align*}
 \Ex\left[\left(\bar{\Lambda}_t^{f,n} - \lambda_t^{f,o}\right)^2  \right] & = \Ex\left[\left|\frac{1}{n}\sum_{i=1}^{n}\left( f(\lambda_t^{i,n}) - f(\lambda_t^l)\right) \right|^2  \right]\leq \frac{1}{n}\sum_{i=1}^{n} \Ex\left[\left(f(\lambda_t^{i,n}) - f(\lambda_t^l)\right)^2  \right] \\
 &\leq C\frac{1}{n}\sum_{i=1}^{n} \Ex\left[\left(\lambda_t^{i,n} - \lambda_t^l \right)^2  \right]= O\left(\frac{1}{n^2}\right).
\end{align*}
The desired claim \eqref{eq:lem-lambda} then holds.

(iii)
Recall that $\bar{X}_t^{*,n}=(\prod_{i=1}^n X_t^{*,i,n})^{\frac{1}{n}}$ where $ X_t^{*,i,n}$ is defined by  \eqref{eq:wealth-X-i-ast}, and the geometric mean process $m^*= (m_t^{*})_{t\in[0,T]}$ satisfies \eqref{eq:geometric-m-thm}. Let us define $Y_t^i:= \log(X_t^{*,i,n}) $ and
\begin{equation*}
  \bar{Y}_t := \frac{1}{n}\sum_{i = 1}^{n}Y_t^i = \log(\bar{X}_t^{*,n}) = \frac{1}{n} \sum_{i=1}^{n}\log(X_t^{*,i,n}).
\end{equation*}
It follows from \eqref{eq:wealth-X-i-ast} that
\begin{align}\label{eq:log-Y-ast-i}
  dY_t^i &= \left[r+ (b_i+\Lambda_t^{f,i,n})\pi_t^{*,i,n} - \frac{1}{2}\left(\sigma_i^2(\pi_t^{*,i,n})^2+ (\sigma_i^0 \pi_t^{*,i,n})^2\right)\right]dt + \pi_t^{*,i,n}\sigma_idW_t^i + \pi_t^{*,i,n}\sigma_i^0dW_t^0 \nonumber \\
  &\quad + \log(1-\pi_{t}^{*,i,n})dN_t^{i,n}.
\end{align}
We denote $ Z_t :=\log(m_t^*) $, which satisfies that
\begin{equation}\label{eq:Z-t}
  dZ_t =  \left\{r+ (b+\lambda_t^{f,o})\pi_t^{\ast} - \frac{1}{2}(\sigma^2+\sigma_0^2) (\pi_t^{\ast})^2
 + \lambda_t^o\log(1-\pi_t^{\ast})\right\}dt   + \sigma^0\pi_t^{\ast}dW_t^0.
\end{equation}
Note that
\begin{align*}
 & \Ex\left[\left| \bar{X}_t^{*,n} - m_t^* \right|^2 \right]= \Ex\left[ e^{2Z_t} \left( e^{\bar{Y}_t -Z_t} -1 \right)^2\right]\\
 &\qquad\leq \Ex\left[ e^{4 Z_t} \right]^{\frac{1}{2}} \Ex\left[\left( e^{2(\bar{Y}_t -Z_t) } - 2e^{\bar{Y}_t -Z_t } + 1 \right)^2 \right]^{\frac{1}{2}}\\
 &\qquad =  \Ex\left[ e^{4 Z_t} \right]^{\frac{1}{2}} \Ex\left[ e^{4(\bar{Y}_t -Z_t)} + 4e^{2(\bar{Y}_t -Z_t) } + 1 - 4e^{3(\bar{Y}_t -Z_t) } + 2e^{2(\bar{Y}_t -Z_t) } - 4e^{(\bar{Y}_t -Z_t)}\right]^{\frac{1}{2}}.
\end{align*}
To prove the claim \eqref{eq:lem-bar-X-m},  by the boundedness of $\bar{X}^{*,n}=(\bar{X}_t^{*,n})_{t\in[0,T]}$, it is sufficient to prove that, for any $t\in[0,T]$
\begin{equation}\label{eq:bar-Y-Y}
   \lim_{n\rightarrow \infty} \Ex\left[ e^{\bar{Y}_t -Z_t} \right] = 1.
\end{equation}
To this end, for $i = 1,\ldots,n$, we introduce the auxiliary SDE that
\begin{align*}
 d \hat{Y}_t^i &= \left[r+ (b_i+\Lambda_t^{f,i,n} )\pi_t^{*,i,n}- \frac{1}{2}\left(\sigma_i^2+ (\sigma_i^0)^2\right) (\pi_t^{*,i,n})^2 + \Lambda_t^{f,i,n}\log(1-\pi_t^{*,i,n}) \right]dt + \pi_t^{*,i,n}\sigma_i^0dW_t^0,
\end{align*}
and
\begin{align*}
  d\underline{Y}_t^i&= \left[r+ (b+\Lambda_t^{f,i,n})\pi_t^{*,i,n}- \frac{1}{2}\left(\sigma^2+ (\sigma^0)^2 \right)(\pi_t^{*,i,n})^2 +  \Lambda_t^{f,i,n} \log(1-\pi_t^{*,i,n}) \right]dt + \pi_t^{*,i,n}\sigma^0 dW_t^0.
\end{align*}
For some positive constants $p_i$ for $i=1,2,3$ satisfying $\sum_{i=1}^3 \frac{1}{p_i}=1$, we can derive by generalized H\"{o}lder inequality that
\begin{align}\label{holdineql}
  \Ex\left[ e^{\bar{Y}_t -Z_t} \right]
  & =  \Ex\left[  e^{(\bar{Y}_t - \frac{1}{n}\sum_{i=1}^{n}\hat{Y}_t^i) +(\frac{1}{n}\sum_{i=1}^{n} \hat{Y}_t^i - \frac{1}{n}\sum_{i=1}^{n} \underline{Y}_t^i)
  + (\frac{1}{n}\sum_{i=1}^{n}\underline{Y}_t^i -Z_t)}\right]\\
  & \leq  \Ex\left[ e^{ p_1 (\bar{Y}_t - \frac{1}{n}\sum_{i=1}^{n}\hat{Y}_t^i) }\right]^{1/p_1} \Ex\left[e^{ p_2(\frac{1}{n}\sum_{i=1}^{n}\hat{Y}_t^i - \frac{1}{n}\sum_{i=1}^{n}\underline{Y}_t^i) }\right]^{1/p_2}
  \Ex\left[e^{ p_3( \frac{1}{n}\sum_{i=1}^{n}\underline{Y}_t^i -Z_t) }\right]^{1/p_3}. \notag
\end{align}
For the first term on the RHS of \eqref{holdineql}, we have that
\begin{align*}
  \Ex\left[ e^{\frac{1}{n}\sum_{i=1}^{n}(Y_t^i -\hat{Y}_t^i)}\right] & = \Ex\left[e^{\frac{1}{n}\sum_{i=1}^{n} \int_{0}^{t}\pi_s^{*,i,n}\sigma_idW_s^i + \frac{1}{n}\sum_{i=1}^{n} \int_{0}^{t}\log\left(1-\pi_{s}^{*,i,n}\right)dM_s^{f,i}}\right]\\
  & = 1 + O\left(\frac{1}{n}\right) .
\end{align*}
For the second term on the RHS of \eqref{holdineql}, we have that
\begin{align*}
 \Ex\left[e^{\frac{1}{n}\sum_{i=1}^{n}(\hat{Y}_t^i - \underline{Y}_t^i)}\right]
 & = \Ex\Bigg[\exp\Bigg\{\frac{1}{n}\sum_{i=1}^{n}\int_{0}^t\left[(b_i-b)\pi_s^{*,i,n}- \frac{1}{2}\left(\sigma_i^2+ (\sigma_i^0)^2 - \sigma^2- (\sigma^0)^2\right)(\pi_s^{*,i,n})^2 \right] ds \\
 & \qquad + \frac{1}{n}\sum_{i=1}^{n}\int_{0}^{t}(\sigma_i^0-\sigma^0)\pi_s^{*,i,n} dW_s^0 \Bigg\} \Bigg]\\
 & \leq \Ex\left[\exp\left\{\frac{1}{n}\sum_{i=1}^{n} C \int_{0}^T\left[|b_i-b| + |\sigma_i^2- \sigma^2|+ |(\sigma_i^0)^2 - (\sigma^0)^2|\right] ds\right\}\right],
\end{align*}
where $C$ is a positive constant independent of $n$. By the assumption $\bm{(A_O)}$, we have that
\begin{align*}
    \Ex\left[e^{ \frac{1}{n}\sum_{i=1}^{n}( \hat{Y}_t^i - \underline{Y}_t^i)} \right]
   = 1+ O\left(\frac{1}{n}\right).
\end{align*}
For the third term on the RHS of \eqref{holdineql}, it holds that
\begin{align*}
 & \Ex\left[ e^{\frac{1}{n}\sum_{i=1}^{n}(\underline{Y}_t^i -Z_t)}\right]
 = \Ex\Bigg[ \exp\Bigg\{\frac{1}{n}\sum_{i=1}^{n}\int_{0}^{t}\big[(\Lambda_s^{f,i,n}\pi_s^{*,i,n}- \lambda_s^{f,o}\pi_s^{*}) - \frac{1}{2}(\sigma^2+ (\sigma^0)^2) [(\pi_s^{*,i,n})^2-(\pi_s^{*})^2] \\
 &\qquad\quad +  \Lambda_s^{f,i,n}\log(1-\pi_s^{*,i,n}) - \lambda_s^{f,o}\log(1-\pi_s^{*})\big]ds + \frac{1}{n}\sum_{i=1}^{n} \int_{0}^{t}\sigma^0(\pi_s^{*,i,n} - \pi_s^{*})dW_s^0 \Bigg\} \Bigg]\\
 &\quad\leq \Ex\left[\exp \left\{ C \int_{0}^{t}\frac{1}{n}\sum_{i=1}^{n} \left[(\pi_s^{*,i,n}- \pi_s^{*}) + \log(1-\pi_s^{*,i,n}) - \log(1-\pi_s^{*})\right]ds \right\}\right].
\end{align*}
We then claim that
\begin{align}\label{lastclaim}
\Ex\left[ e^{ \frac{1}{n}\sum_{i=1}^{n}(\underline{Y}_t^i -Z_t)}\right] = 1+ O\left(\frac{1}{n}\right).
\end{align}
Recall that $\pi_t^{*,i,n}:= \phi_i(t,\Lambda_t^{f,i,n})$ in \eqref{eq:pi-epsilon-i} satisfies the equation that
\begin{equation*}
 (\gamma_i-1)[\sigma_i^2+(\sigma_i^0)^2] \pi_t^{*,i,n} - \theta_i\gamma_i\sigma_i^0\sigma^0\pi_t^{*}  - \Lambda_t^{f,i,n}(1-\pi_t^{*,i,n})^{\gamma_i-1} +\Lambda_t^{f,i,n} + b_i = 0,
\end{equation*}
and $\pi_t^{*} = \phi(\lambda_t^{f,o})$ given by \autoref{thm:optimal-MFG} is the solution to
\begin{equation*}
  (\gamma-1)[ \sigma^2 + (\sigma^0)^2] \pi_t^*   - \theta\gamma(\sigma^0)^2\pi_t^*  - \lambda_t^{f,o} (1-\pi_t^*)^{\gamma -1} +\lambda_t^{f,o} + b = 0.
\end{equation*}
We introduce an auxiliary control $ \hat{\pi}_t^i := \phi_i(t,\lambda_t^{f,o}) $, which satisfies
\begin{equation*}
  (\gamma_i-1)[\sigma_i^2 + (\sigma_i^0)^2]\hat{\pi}_t^i - \theta_i\gamma_i\sigma_i^0\sigma^0\pi_t^*  - \lambda_t^{f,o} (1-\hat{\pi}_t^i)^{\gamma_i -1} +\lambda_t^{f,o} + b_i = 0.
\end{equation*}
From the proof of Lemma \ref{lem:auxiliarylemmas}-(ii) and the assumption $\bm{(A_O)}$, we have that $\hat{\pi}_t^i \rightarrow  \pi_t^{*} $ as $ i \rightarrow \infty$, and there exists a constant $C>0$ independent of $n$ such that
\begin{align*}
 \frac{1}{n}\sum_{i=1}^{n} (\pi_t^{*,i,n} - \pi_t^{\ast})
 & = \frac{1}{n}\sum_{i=1}^{n}(\pi_t^{*,i,n} - \hat{\pi}_t^{i}) + \frac{1}{n}\sum_{i=1}^{n} (\hat{\pi}_t^i - \pi_t^{*})\\
 & = \frac{1}{n}\sum_{i=1}^{n}(\phi_i(t,\Lambda_t^{f,i,n}) - \phi_i(t,\lambda_t^{f,o}))
     + \frac{1}{n}\sum_{i=1}^{n}( \phi_i(t,\lambda_t^{f,o}) - \phi(\lambda_t^{f,o}) )\\
 & \leq \frac{1}{n}\sum_{i=1}^{n} C \left|\Lambda_t^{f,i,n} - \lambda_t^{f,o} \right| + O\left(\frac{1}{n}\right).
\end{align*}
Moreover, the order of the term $\frac{1}{n}\sum_{i=1}^{n}[\log(1-\pi_t^{*,i,n}) - \log(1-\pi_t^{\ast})]$ is the same to the order of $\frac{1}{n}\sum_{i=1}^{n}(\pi_t^{*,i,n} - \pi_t^{*})$ because the function $\log(1-x) $ is Lipschitz continuous in $x\in[D_0,1-\epsilon_0]$, which proves that the claim \eqref{lastclaim} holds. Putting all the pieces together completes the proof.
\end{proof}

\section{Arguments to Derive $\lambda^{f,o}$ in the Mean Field Model}\label{sec:heu}

Let us first recall that the default intensity process for agent $i$ satisfies SDE~\eqref{eq:lambda-i}, for $i=1,\ldots,n$,
\begin{align}\label{eq:Lambda-i-appx-2}
  \Lambda_t^{f,i} = f(\lambda_t^i),  \ \ \ d\lambda_t^i =\alpha_i(\lambda_{\infty}^i -\lambda_t^i)dt + \frac{\beta_i}{n}\sum_{j=1}^{n}\varsigma_jdN^j_t,
\end{align}
where $M^{f,i}_t := N^i_t - \int_{0}^{t}\Lambda_s^{f,i} ds$, $t\in[0,T]$, is a $(\Px,\Gx)$-martingale. 
For $i=1,\ldots,n$, recall that the type vector of the default intensity model is $o^i = (\lambda_0^i,\alpha_i,\lambda^i_\infty, \beta_i, \varsigma_i)\in \mathcal{O}:= \R_{+}^5$, and the space $E := \mathcal{O}\times\R_{+}$. We then define the following empirical measure-valued process on ${\cal B}(E)$ by
\begin{align}\label{eq:mu-n-appx}
  \mu_t^n := \frac{1}{n}\sum_{i=1}^{n}\delta_{(o^i,\Lambda_t^{f,i})},\quad \forall t\in[0,T].
\end{align}
We next claim that, under the assumption $\bm{(A_O)}$, the mean field limit of the default intensity process satisfies that
\begin{align}\label{eq:lambda-MFG-appx}
   \lambda_t^{f,o} = f(\lambda_t^l),\quad  \lambda_t^l = \lambda_0 + \int_{0}^{t}\alpha(\lambda_{\infty} -\lambda_s^l)ds + \int_{0}^{t} \beta\varsigma f(\lambda_s^l)ds,
\end{align}
where $o=(\lambda_0,\alpha,\lambda_\infty, \beta, \varsigma)\in{\cal O}$ is the constant vector in the assumption $\bm{(A_O)}$ for the mean field model.
In fact, by applying It\^{o}'s formula to an arbitrary $\varphi\in C_c^2(E)$ (a test function that is $C_c^2$ w.r.t. $\lambda\in\R_+$), we obtain
\begin{align}\label{eq:f-Lambda-H-X-i}
\varphi(\Lambda_t^{f,i})& =\varphi(\Lambda_0^{f,i})+\int_{0}^{t} \varphi'(\Lambda_s^{f,i})f'(\lambda_s^i)\alpha_i(\lambda_{\infty}^i -\lambda_s^i)ds
+\sum_{j=1}^{n}\int_{0}^{t}\{\varphi(f(\lambda_{s-}^{i} +\frac{\beta_{ij}}{n})) - \varphi(\Lambda_{s-}^{f,i})\}\Lambda_s^{f,j}ds  \nonumber \\
& \quad  + \sum_{j=1}^{n}\int_{0}^{t}\{\varphi(f(\lambda_{s-}^{i} +\frac{\beta_{ij}}{n})) - \varphi(\Lambda_{s-}^{f,i})\}dM_s^{f,j},
\end{align}
where $\beta_{ij}:=\beta_i\varsigma_j$ for $i,j=1,\ldots,n$. It follows from mean value theorem that $\varphi(f(\lambda_{s-}^{i}+\frac{\beta_{ij}}{n})) - \varphi(\Lambda_{s-}^{f,i})=\frac{\beta_{ij}}{n}\int_0^1 f'(\lambda_{s-}^{i} +\frac{\theta}{n}\beta_{ij})\varphi'(f(\lambda_{s-}^{i} +\frac{\theta}{n}\beta_{ij}))d\theta$. Thus, we arrive at
\begin{align}\label{eq:f-Lambda-H-X-iaverage}
\frac{1}{n}\sum_{i=1}^n\varphi(\Lambda_t^{f,i})& = \frac{1}{n}\sum_{i=1}^n\varphi(\Lambda_0^{f,i})+\int_{0}^{t} \frac{1}{n}\sum_{i=1}^n \varphi'(f(\lambda_s^{i}))f'(\lambda_s^i)\alpha_i(\lambda_{\infty}^i -\lambda_s^i)ds\nonumber\\
&\quad +\int_0^1 \int_{0}^{t}\frac{1}{n}\sum_{i=1}^n\left(\sum_{j=1}^{n}f'(\lambda_{s-}^{i} +\frac{\theta}{n}\beta_{ij})\varphi'(f(\lambda_{s-}^{i} +\frac{\theta}{n}\beta_{ij})) \frac{1}{n}\beta_{ij}\Lambda_s^{f,j}\right)dsd\theta  \nonumber \\
& \quad  + \frac{1}{n}\sum_{i=1}^n\int_{0}^{t}\{\varphi(f(\lambda_{s-}^{i} +\frac{\beta_{ij}}{n})) - \varphi(\Lambda_{s-}^{f,i})\}dM_s^{f,j}.
\end{align}
We define $\langle\varphi,\mu\rangle:=\int_E\varphi(e)\mu(de)$ for any $\mu\in{\cal P}(E)$. Then, Eq.~\eqref{eq:f-Lambda-H-X-iaverage} can be read as:
\begin{align}\label{eq:f-Lambda-H-X-iaverage2}
\langle\varphi,\mu_t^n\rangle& = \langle\varphi,\mu_0^n\rangle+\langle{\cal L}^1\varphi,\mu_t^n\rangle+\frac{1}{n}\sum_{i=1}^n\int_{0}^{t}\{\varphi(f(\lambda_{s-}^{i} +\frac{\beta_{ij}}{n})) - \varphi(\Lambda_{s-}^{f,i})\}dM_s^{f,j}\nonumber\\
&\quad +\int_0^1 \int_{0}^{t}\frac{1}{n}\sum_{i=1}^n\left(\sum_{j=1}^{n}f'(\lambda_{s-}^{i} +\frac{\theta}{n}\beta_{ij})\varphi'(f(\lambda_{s-}^{i} +\frac{\theta}{n}\beta_{ij})) \frac{1}{n}\beta_{ij}\Lambda_s^{f,j}\right)dsd\theta.
\end{align}
where we defined the operators ${\cal L}^1\varphi(e):= \varphi'(f(\lambda)) f'(\lambda)\alpha(\lambda_{\infty} -\lambda)$ for $e\in E$, ${\cal L}^2\varphi(e):=\beta f'(\lambda)\varphi'(f(\lambda))$ for $e\in E$, and $q(e):=\varsigma f(\lambda)$ for $e\in E$. Let $\mu=(\mu_t)_{t\in[0,T]}$ be the weak limit point of $\mu^n=(\mu^n_t)_{t\in[0,T]}$ as $n\to\infty$. Then, the 4th term of RHS of Eq.~\eqref{eq:f-Lambda-H-X-iaverage2} converges to $\int_0^t\langle{\cal L}^2\varphi,\mu_s\rangle \langle q,\mu_s\rangle ds$ as $n\to\infty$, a.s. On the other hand, as $M^{f,j}=(M_t^{f,j})_{t\in[0,T]}$ is a martingale, it follows from the martingale limit theorem that the martingale sequence $\frac{1}{n}\sum_{i=1}^n\int_{0}^{t}\{\varphi(f(\lambda_{s-}^i+\frac{1}{n}\beta_{ij})) - \varphi(\Lambda_{s-}^i)\}dM_s^{f,j}$ converges to zero, as $n\to\infty$, a.s.. Hence, as $n\to\infty$, we have from Eq.~\eqref{eq:f-Lambda-H-X-iaverage2} that the limit point $\mu$ satisfies
\begin{align}\label{eq:f-Lambda-H-X-iaveragelimit}
\langle\varphi,\mu_t\rangle& = \langle\varphi,\mu_0\rangle+\langle{\cal L}^1\varphi,\mu_t\rangle+\int_0^t\langle{\cal L}^2\varphi,\mu_s\rangle \langle q,\mu_s\rangle ds.
\end{align}
We next claim that $\mu_t=\delta_{(o,\lambda_t^{f,o})}$ for $t\in[0,T]$. In fact, it follow from It\^o's formula that
\begin{align}\label{eq:varphilambdato222}
\varphi(\lambda_t^{f,o})&= \varphi(\lambda_0^{f,o}) + \int_0^t \varphi'(f(\lambda_s^l))f'(\lambda_s^l)\{\alpha(\lambda_{\infty} -\lambda_s^{l})+\beta\varsigma f(\lambda_s^{l}) \} ds.
\end{align}
Note that, if $\mu_t=\delta_{(o,\lambda_t^{f,o})}$, then $\langle\varphi,\mu_t\rangle=\varphi(\lambda_t^{f,o})$ and $\langle\varphi,\mu_0\rangle=\varphi(f(\lambda_0))$. Moreover, it holds that
\begin{align*}
\varphi'(f(\lambda_s^l))f'(\lambda_s^l)\{\alpha(\lambda_{\infty} -\lambda_s^{l})+\beta\varsigma f(\lambda_s^{l}) \}
&=\varphi'(f(\lambda_s^l))f'(\lambda_s^l)\alpha(\lambda_{\infty} -\lambda_s^{l}) + \varphi'(f(\lambda_s^l))f'(\lambda_s^l) \beta\varsigma f(\lambda_s^{l}) \nonumber\\
&=\langle{\cal L}^1\varphi,\mu_s\rangle + \langle{\cal L}^2\varphi,\mu_s\rangle \langle q,\mu_s\rangle.
\end{align*}
Plugging the above equality into \eqref{eq:varphilambdato222}, we have that $\mu_t=\delta_{(o,\lambda_t^{f,o})}$ for $t\in[0,T]$ indeed satisfies Eq.~\eqref{eq:f-Lambda-H-X-iaveragelimit}. Moreover, it follows from the uniqueness of Eq.~\eqref{eq:f-Lambda-H-X-iaveragelimit} (c.f. \cite{Bosifin2015} by using martingale problem) that the limit point of $\mu^n$ is given by $\mu_t=\delta_{(o,\lambda_t^{f,o})}$ for $t\in[0,T]$. This yields that the mean field limit of $(\Lambda^{f,i})_{i=1}^n$ described as \eqref{eq:Lambda-i-appx-2} is given by $\lambda^{f,o}=(\lambda_t^{f,o})_{t\in[0,T]}$ in Eq.~\eqref{eq:lambda-MFG-appx}. \hfill$\Box$

\end{appendix}


\begin{thebibliography}{}
{\small


\bibitem[B\'{e}langer et al.(2004)]{belanger2004general} A. B\'{e}langer, S. E. Shreve, and D. Wong (2004). A general framework for pricing credit risk. {\it Math. Financ.}, 14(3), 317-350.

\bibitem[Benazzoli et al.(2019)]{benazzoli2019} C. Benazzoli, L. Campi and L. Di Persio (2019): $\epsilon$-Nash equilibrium in stochastic differential games with mean-field interaction and controlled jumps. {\it Stat. Probab. Lett.}, 154, 108522.

\bibitem[Benazzoli et al.(2020)]{benazzoli2020} C. Benazzoli, L. Campi and L. Di Persio (2020): Mean field games with controlled jump-diffusion dynamics: Existence results and an illiquid interbank market model. {\it Stoch. Proc. Appl.}, 130, 6927-6964.

\bibitem[Bensoussan et al.(2013)]{Bensoussan2013} A. Bensoussan, J. Frehse and P. Yam (2013): {\it Mean Filed Games and Mean Field Type Control Theory}. Springer-Verlag, New York, 2013.

\bibitem[Bielagk et al.(2017)]{Bie2017} J. Bielagk, A. Lionnet and G. Dos Reis (2017). Equilibrium pricing under relative performance concerns. {\it SIAM J. Financ. Math.}, 8, 435-482.

\bibitem[Bo et al.(2015)]{Bosifin2015} L. Bo and A. Capponi (2015): Systemic risk in interbanking networks. {\it SIAM J. Finan. Math.}, 6(1), 386-424.

\bibitem[Bo and Capponi(2018)]{bocapponisicon2018} L. Bo and A. Capponi (2018): Portfolio choice with market-credit risk dependencies. {\it SIAM J. Control Optim.}, 56, 3050-3091.

\bibitem[Bo et al.(2019a)]{bocapponichen2019} L. Bo, A. Capponi and P. C. Chen (2019): Credit portfolio selection with decaying contagion intensities.
{\it Math. Financ.}, 29, 137-173.

\bibitem[Bo et al.(2019b)]{boliaoyu2019} L. Bo, H. Liao and X. Yu (2019): Risk sensitive portfolio optimization with default contagion and regime-switching.
{\it SIAM J. Control Optim.}, 57, 366-401.

\bibitem[Bo et al.(2022)]{boliaoyu2021} L. Bo, H. Liao and X. Yu (2022): Risk-sensitive credit portfolio optimization under partial information and contagion risk.
{\it Ann. Appl. Probab.}, 32(4), 2355-2399.

\bibitem[Carmona(2016)]{carmona2016lectures} R. Carmona (2016): {\it Lectures on BSDEs, Stochastic Control, and Stochastic Differential Games with Financial Applications}. Society for Industrial and Applied Mathematics, Philadelphia, PA, 2016.

\bibitem[Carmona and Delarue(2018)]{carmona2018probabilistic} R. Carmona and F. Delarue (2018): {\it Probabilistic Theory of Mean Field Games with Applications I-II}, Springer Nature, 2018.

\bibitem[Chevallier(2017)]{Chevallier2017} J. Chevallier (2017): Mean-field limit of generalized Hawkes processes. {\it Stoch. Process. Appl.} 127(12), 3870-3912.

\bibitem[Delong and Kl\"{u}ppelberg(2008)]{delong2008optimal} L. Delong and C. Kl\"{u}ppelberg (2008): Optimal investment and consumption in a Black-Scholes market with L\'{e}vy-driven stochastic coefficients. {\it Ann. Appl. Probab.} 18, 879-908.

\bibitem[Dos Reis and Platonov(2021)]{ReisPla2020}  G. Dos Reis and V. Platonov (2021): Forward utilities and mean-field games under relative performance concerns. In: Bernardin, C., Golse, F., Gon\c{c}alves, P., Ricci, V., Soares, A.J. (eds) From Particle Systems to Partial Differential Equations. ICPS ICPS ICPS 2019 2018 2017. Springer Proceedings in Math. Stats., vol. 352. Springer, Cham.

\bibitem[Espinosa and Touzi(2015)]{espinosa2015optimal} G. E. Espinosa and N. Touzi (2015): Optimal investment under relative performance concerns. {\it Math. Financ.}, 25(2), 221-257.

\bibitem[Fu et al.(2020)]{FuSZ} G. Fu, X. Su and C. Zhou (2020): Mean field exponential utility game: A probabilistic approach. Preprint, available at arXiv:2006.07684.

\bibitem[Giesecke et al.(2015)]{Giesecke2012MF} K. Giesecke, K. Spiliopoulos, R.B. Sowers and J.A. Sirignano (2015): Large portfolio asymptotics for loss from default. {\it Math. Finance} 25(1), 77-114.

\bibitem[Gomes et al.(2013)]{GomesMS} D. A. Gomes, J. Mohr and R. R. Souza (2013): Continuous time finite state mean field games. {\it  Appl. Math. Optim.}, 68(1), 99-143.

\bibitem[Gu\'eant et al.(2011)]{gueantLarsryLions11} O. Gu\'eant, J. M. Lasry and P. L. Lions (2011): {\it Mean Field Games and Applications}. Paris-Princeton Lectures on Math. Finance 2010, 205-266.

\bibitem[Hafayed et al.(2014)]{Hafa2014} M. Hafayed, A. Abba and S. Abbas (2014): On mean-field stochastic maximum principle for near-optimal controls for Poisson jump diffusion with applications. {\it  Int. J. Dyn. Control}, 2(3), 262-284.

\bibitem[Hu and Zariphopoulou(2022)]{HuZ21} R. Hu and T. Zariphopoulou (2022): $N$-player and mean-field games in It\^{o}-diffusion markets with competitive or homophilous interaction. Stochastic Analysis, Filtering, and Stochastic Optimization, Edited by G. Yin and T. Zariphopoulou. pp. 209-237, Springer-Verlag, New York.

\bibitem[Huang et al.(2006)]{huang2006large} M. Huang, R. Malham{\'e} and P. E. Caines (2006): Large population stochastic dynamic games: closed-loop McKean-Vlasov systems and the Nash certainty equivalence principle. {\it Commun. Inf. Syst.}, 6, 221-252.


\bibitem[Jin et al.(2021)]{Yu21} Z. Jin, H. Liao, Y. Yang and X. Yu (2021): Optimal dividend strategy for an insurance group with contagious default risk. {\it Scand. Actuar. J.}, 2021(4), 335-361.

\bibitem[Karatzas and Shreve(1991)]{GTM-113} I. Karatzas and S. E. Shreve (1991): {\it Brownian Motion and Stochastic Calculus, 2nd Edition}. Springer-Verlag, New York.



\bibitem[Kraft et al.(2020)]{KraftMS2020} H. Kraft, A. Meyer-Wehmann and F. T. Seifried (2020): Dynamic asset allocation with relative wealth concerns in incomplete markets. {\it J. Econ. Dyn. Control}, 113, 103857.


\bibitem[Lacker and Soret(2020)]{lacker2020many} D. Lacker and A. Soret (2020): Many-player games of optimal consumption and investment under relative performance criteria. {\it Math. Financ. Econ.}, 14, 263-281.

\bibitem[Lacker and Zariphopoulou(2019)]{lacker2017mean} D. Lacker and T. Zariphopoulou (2019): Mean field and $n$-agent games for optimal investment under relative performance criteria. {\it Math. Financ.}, 29, 1003-1038.

\bibitem[Lasry and Lions(2007)]{lasry2007mean} J. M. Lasry and P. L. Lions (2007): Mean field games. {\it Jpn. J. Math.}, 2, 229-260.

\bibitem[L\"ocherbach(2018)]{Locherbach18} E. L\"ocherbach (2018): Spiking neurons: interacting hawkes processes, mean field limits and oscillations. {\it ESAIM: Proceedings \& Surveys} 60, 90-103.

\bibitem[Neumann(2020)]{Neumann2020} B. A. Neumann (2020): Stationary equilibria of mean field games with finite state and action space. {\it Dyn. Games Appl.}, 10, 845-871.

\bibitem[Nutz and Zhang(2019)]{ZN19} M. Nutz and Y. Zhang (2019): A mean field competition. {\it Math. Opers. Res.}, 44, 1245-1263.

\bibitem[{\O}ksendal and Sulem-Bialobroda(2005)]{oksendal2005applied} B. K. {\O}ksendal and A. Sulem-Bialobroda (2005): {\it Applied Stochastic Control of Jump Diffusions (Vol. 498)}. Springer-Verlag, Berlin.

\bibitem[Shen and Siu(2013)]{shen2013maximum} Y. Shen and T. K. Siu (2013): The maximum principle for a jump-diffusion mean-field model and its application to the mean-variance problem. {\it Nonlinear Anal. Theor.}, 86, 58-73.

\bibitem[Shen and Zou(2020)]{shen2020mean} Y. Shen and B. Zou (2020): Mean-variance portfolio selection in contagious markets. Preprint, available at www.researchgate.net/publication/340049920.

\bibitem[Yu(2007)]{yufan2007} F. Yu (2007): Correlated defaults in intensity-based models. {\it Math. Financ.}, 17, 155-173.

\bibitem[Yu et al.(2021)]{YZZ21} X. Yu, Y. Zhang and Z. Zhou (2021): Teamwise mean field competitions. {\it Appl. Math. Optim.}, 84, 903-942.

}
\end{thebibliography}
\end{document}